\newtheorem{theorem}{Theorem}[section]
\newtheorem{lemma}{Lemma}[section]
\newtheorem{definition}{Definition}[section]
\newtheorem{observation}{Observation}[section]
\newtheorem{result}{Result}
\providecommand{\customgenericname}{}
\newcommand{\newcustomtheorem}[2]{%
  \newenvironment{#1}[1]
  {%
   \renewcommand\customgenericname{#2}%
   \renewcommand\theinnercustomgeneric{##1}%
   \innercustomgeneric
  }
  {\endinnercustomgeneric}
}
\newcommand{\set}[1]{\left\{ #1 \right\}}
\newcommand{\eps}{\varepsilon}
\newcommand{\qset}{\mathcal{Q}}
\newcommand{\pset}{\mathcal{P}}
\newcommand{\gset}{\mathcal{G}}
\newcommand{\dset}{\mathcal{D}}
\newcommand{\fset}{\mathcal{F}}
\newcommand{\wel}{\mathsf{Wel}_{\mathsf{T}}}
\newcommand{\welnt}{\mathsf{Wel}_{\mathsf{NT}}}
\newcommand{\opt}{\mathsf{OPT}}
\newcommand{\optlp}{\mathsf{OPT}_{\mathsf{LP}}}
\newcommand{\pathh}{\mathsf{Path}}
\newcommand{\starr}{\mathsf{Star}}
\newcommand{\spider}{\mathsf{Spider}}
\newcommand{\tree}{\mathsf{Tree}}
\newcommand{\cycle}{\mathsf{Cycle}}
\newcommand{\notshow}[1]{}
\newcommand{\optwel}{\mathsf{OPT}}
\newcommand{\ratio}{\textnormal{\textsf{gap}}_{\textnormal{\textsf{T}}}}
\newcommand{\rationt}{\textnormal{\textsf{gap}}_{\textnormal{\textsf{NT}}}}
\newenvironment{prevproof}[2]{\vspace{.1in}\noindent {\em {Proof of {#1}~\ref{#2}:}}}{$\Box$\vskip \belowdisplayskip}
\newenvironment{prog}[1]{
	\begin{minipage}{5.8 in}
		\begin{center}
			{\sc #1}
		\end{center}
	}
	{
	\end{minipage}
}
\newcommand{\myparskip}{4pt}
\begin{document}


		
		\title{Worst-Case Welfare of Item Pricing in the Tollbooth Problem}
		\author{Zihan Tan\thanks{University of Chicago. Email: {\tt zihantan@uchicago.edu}. Supported in part by NSF grant CCF-1616584.}   \and Yifeng Teng\thanks{Google Research. Email: {\tt yifengt@google.com}} \and Mingfei Zhao\thanks{Google Research. Email: {\tt mingfei@google.com}.}}
		\date{July 7, 2022}
		\maketitle

		\thispagestyle{empty}
		\begin{abstract}
		We study the worst-case welfare of item pricing in the \emph{tollbooth problem}. The problem was first introduced by Guruswami et al.~\cite{guruswami2005profit}, 
		and is a special case of the combinatorial auction in which (i) each of the $m$ items in the auction is an edge of some underlying graph; and (ii) each of the $n$ buyers is single-minded and only interested in buying all edges of a single path. We consider the competitive ratio between the hindsight optimal welfare and the optimal worst-case welfare among all item-pricing mechanisms, when the order of the arriving buyers is adversarial. We assume that buyers own the \emph{tie-breaking} power, i.e. they can choose whether or not to buy the demand path at 0 utility. We prove a tight competitive ratio of $3/2$ when the underlying graph is a single path (also known as the \emph{highway} problem), whereas item-pricing can achieve the hindsight optimal if the seller is allowed to choose a proper tie-breaking rule to maximize the welfare \cite{cheung2008approximation, chawla2017stability}. Moreover, we prove an $O(1)$ upper bound of competitive ratio when the underlying graph is a tree.
		
		For general graphs, we prove an $\Omega(m^{1/8})$ lower bound of the competitive ratio. We show that an $m^{\Omega(1)}$ competitive ratio is unavoidable even if the graph is a grid, or if the capacity of every edge is augmented by a constant factor $c$. The results hold even if the seller has tie-breaking power.


		\end{abstract}


\section{Introduction}



Welfare maximization in combinatorial auctions is one of the central problems in market design. The auctioneer is selling $m$ heterogeneous items to $n$ self-interested buyers. She aims to find a welfare-maximizing allocation, and designs payments to incentivize all buyers to truthfully report their private preferences. We consider the special case that every buyer is \emph{single-minded}, which means that the buyer is only interested in buying a certain subset of items, and has value $0$ if she does not get the whole subset. While the celebrated VCG mechanism~\cite{vickrey1961counterspeculation,clarke1971multipart,groves1973incentives} is truthful and maximizes the welfare, markets in the real world often prefer to implement simpler mechanisms, such as \emph{item pricing}.

In an item-pricing mechanism, each item is given an individual price upfront. Buyers come to the auction sequentially and choose the favorite bundle (among the remaining items) that maximizes their own utilities. For a single-minded buyer, she will purchase her demand set if and only if all items in the set are available, and the total price of the items is at most her value.
The performance of an item-pricing mechanism is measured by its \emph{worst-case welfare}, that is, the minimum sum of buyers' value achieved by the mechanism when the buyers arrive in any adversarial order.
In this setting, the Walrasian equilibrium \cite{arrow1951extension,debreu1951coefficient} for gross-substitutes buyers provides a set of prices as well as a welfare-maximizing allocation, such that every buyer receives her favorite bundle~\cite{kelso1982job}, which allows us to maximize the welfare using an item-pricing mechanism. However, such an equilibrium (or a set of prices) may not exist when the buyers are single-minded. 

In this paper we focus on a special case of the above problem known as the \emph{tollbooth problem}~\cite{guruswami2005profit}, where the set of commodities and demands are represented by edges and paths in a graph.
This setting has found various real-world applications.
For instance, a network service provider wants to sell bandwidth along with the links of a network by pricing on every single link, and each customer is only interested in buying a specific path in the network. In this setting, every item in the auction is an edge in some graph $G$ representing the network, and every buyer is single-minded and is only interested in buying all edges of a specific path of $G$.
We additionally impose the restriction that each commodity (edge) may be given to at most one buyer, which means that if an edge was taken by some previous buyer, then buyers who come afterward whose demand set contains this edge may no longer take it.
Similar to the general case, Walrasian equilibrium is not guaranteed to exist for the tollbooth problem. Moreover, Chen and Rudra~\cite{chen2008walrasian} proved that the problem of determining the existence of a Walrasian equilibrium and the problem of computing such an equilibrium (if it exists) are both NP-hard for the tollbooth problems on general graphs. 
Therefore, an investigation of the power and limits of item pricing is a natural next step towards a deeper understanding of the tollbooth problem. For an item-pricing mechanism, its \emph{competitive ratio} is the ratio between the hindsight optimal welfare, i.e. the optimal welfare achieved by any feasible allocation of items to buyers, and the worst-case welfare of the item-pricing mechanism. In the paper, we study the best (smallest) competitive ratio among all item-pricing mechanisms in a given instance of the tollbooth problem.

In addition to the set of prices, a key factor that can significantly affect the welfare of an item-pricing mechanism is the \emph{tie-breaking} rule. For example, if Walrasian equilibrium exists, item pricing can achieve the optimal welfare, but requires carefully breaking ties among all the favorite bundles of every buyer. However, in real markets, buyers often come to the mechanism themselves and simply purchase an arbitrary favorite bundle that maximizes their own utility. Therefore it is possible that the absence of tie-breaking power may influence the welfare achieved by the mechanism. In the paper, we assume that buyers own the \emph{tie-breaking} power, i.e. they can choose whether or not to buy the demand path at 0 utility. 

\subsection{Our Results and Techniques}
\label{subsec:result}

\paragraph{Competitive Ratio for Path Graphs.} If the seller is allowed to allocate the edges via a proper tie-breaking rule, there indeed exists an item pricing that achieves the offline optimal welfare when the underlying graph $G$ is a single path \cite{cheung2008approximation,chawla2017stability}. Interestingly, we show that this result does not hold when the seller has no tie-breaking power. We present an instance where $G$ is a single path, that no item pricing can achieve more than $2/3$-fraction of the offline optimal welfare (Theorem~\ref{thm: path_no_tie_lower_bound}). On the other hand, we prove that such a $3/2$-approximation is achievable via item pricing (Theorem~\ref{thm:path_no_tie}).

\begin{result}
The competitive ratio for any tollbooth problem instance on a single path is at most $3/2$, if buyers own the tie-breaking power. Moreover, the ratio is tight.
\end{result}

The lower bound is achieved by an example with 3 edges and 4 buyers (\Cref{table:example_path}). The upper bound result is more involved. The proof is enabled by constructing three sets of edge-disjoint paths from all buyers' demand paths such that: (i) every edge in the graph is contained in exactly two paths of the three sets; (ii) each set of paths $\qset$ satisfies a special property called \emph{uniqueness}, which intuitively means that there does not exist another set of paths among the rest paths, whose union is the same as the one of $\qset$. We prove that given any unique set of paths $\qset$, we can design prices to serve all buyers whose demand path is in $\qset$ for any buyers' arrival order. With this lemma, we can design prices that achieve at least 2/3 of the offline optimal, by picking one of the three sets and aiming to serve all buyers whose demand path is in the set.

\paragraph{Competitive Ratio for Trees.} We also study the case when $G$ is a tree. When seller owns the tie-breaking power, we show a tight competitive ratio of $\frac{3}{2}$ (\Cref{thm:tiebreak-tree-ub-lb}). The upper bound is proved by combining \Cref{lem:rounding} with the integrality gap result of multicommodity flow problem on tree~\cite{chekuri2007multicommodity, raghavan1994efficient}. On the other hand, we provide an instance on a star to show the competitive ratio is at least $3/2$. 

When the seller has no tie-breaking power, we prove that the competitive ratio of any tree instance is also upper bounded by an absolute constant. 

\begin{result}
For any $\eps>0$, the competitive ratio for any tollbooth problem instance on a tree is at most $7+\eps$, if buyers own the tie-breaking power.
\end{result}

To prove the result, we start by analyzing the competitive ratio for a special class of graphs called \emph{spider}, which is obtained by replacing each edge in a star graph with a single path. Then given an offline optimal allocation (which corresponds to a set $\pset$ of demand paths) in a tree instance, we partition the paths of $\pset$ into two subsets $\pset=\pset_1\cup\pset_2$, such that for each $t\in \set{1,2}$, the graph obtained by taking the union of all paths in $\pset_t$ is a union of node-disjoint spider graphs. Thus the task of computing the prices on the edges of a tree is reduced to that of a spider, while losing a factor of $2$ in the competitive ratio. 

\paragraph{Competitive Ratio for General Graphs.}
Next we study the tollbooth problem for general graphs. For general single-minded combinatorial auctions, where the demand of every agent is an arbitrary set rather than a single path, the competitive ratio between item pricing and the offline optimal is proved to be $O(\sqrt{m})$ \cite{cheung2008approximation,chin2018approximation}, which is tight up to a constant \cite{feldman2015welfare}. For our problem, we first show that the competitive ratio on general graphs can also be polynomial in the number of its edges, in contrast to the constant competitive ratio for path graphs and trees. This polynomially large ratio is unavoidable even if the graph is a grid and if the seller owns the tie-breaking power. On the other hand, we prove an upper bound of $O(m^{0.4}\log^2 m\log n)$ on the competitive ratio in any tollbooth problem instance (Theorem~\ref{thm:general_graph}). When $n$, the number of buyers in the auction, is subexponential on $m$, our competitive ratio is better than the previous ratio $O(\sqrt{m})$ for general single-minded combinatorial auctions.   

\begin{result}\label{inforthm:general_graph}
There exists a tollbooth problem instance such that the competitive ratio is $\Omega(m^{1/8})$. Moreover, there exist a constant $\alpha\in (0,1)$ and an instance on a grid such that the competitive ratio is $\Omega(m^{\alpha})$. Both results hold even if the seller owns the tie-breaking power. On the other hand, the competitive ratio for any tollbooth problem instance is $O(m^{0.4}\log^2 m\log n)$. Here $m$ is the number of edges and $n$ is the number of buyers.
\end{result}

The hard instance for the $\Omega(m^{1/8})$ competitive ratio
is constructed on a simple series-parallel graph (see Figure~\ref{fig:mahuaintro}). In the hard instance, every buyer demands a path connecting the left-most vertex to the right-most vertex, and the value for each demand path is roughly the same.
The demand paths are constructed carefully, such that (i) each edge of the graph is contained in approximately the same number of demand paths; and (ii) the demand paths are intersecting in some delicate way.
We can then show that, for any price vector $p$, if we denote by $\pset$ the set of affordable (under $p$) demand paths, then either the maximum cardinality of an independent subset of $\pset$ is small, or there is a path in $\pset$ that intersects all other paths in $\pset$. Either way, we can conclude that the optimal worst-case welfare achieved by any item-pricing mechanism is small.
\begin{figure}[h]
	\centering
\includegraphics[scale=0.15]{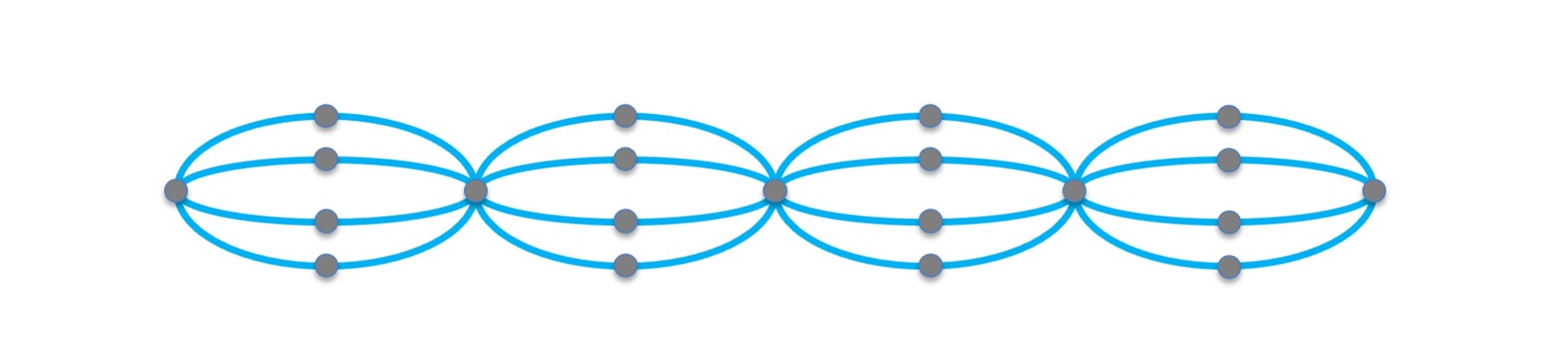}
	\caption{A series-parallel graph with large competitive ratio.\label{fig:mahuaintro}}
\end{figure}


\notshow{

\vspace{-.1in}
\paragraph{1. Item pricing for special graphs.} We first consider the case where the seller has tie-breaking power. It is folklore (see e.g. \cite{cheung2008approximation,chawla2017stability}) that when the seller has tie-breaking power and $G$ is a single path, the competitive ratio of the problem is 1, i.e., there exists an item-pricing mechanism that achieves the optimal welfare. However, for graphs beyond paths, even when $G$ is a star, the competitive ratio can be strictly larger than 1. In the paper, we show that for simple families of graphs (e.g. trees, cycles, or outerplanar graphs), the competitive ratio is $O(1)$ or a polylogarithm of $m$. 
A summary of the results is shown in Table~\ref{tab:graphresults}. In particular, we prove that the competitive ratio of any tree is at most 3, which improves the 8-approximation obtained in~\cite{cheung2008approximation}.

\begin{result}
When the seller has tie-breaking power, the competitive ratio is at most 3 for any tollbooth problem on trees. Moreover, a set of prices that achieves the desired ratio can be computed in polynomial time.
\end{result}


\begin{table}[h]
    \centering
    \begin{tabular}{|c|c|c|c|c|c|}
    \hline
    Graph Class  & Lower Bound (with Tie) & Upper Bound (with Tie) & Upper Bound (no Tie)\\
    \hline
    Path &  \multicolumn{2}{|c|}{$1$ (\cite{cheung2008approximation,chawla2017stability})} & 3/2 (tight, Thm~\ref{thm:path_no_tie_hardness},~\ref{thm:path_no_tie})\\
    \hline
    Star & \multicolumn{2}{|c|}{$3/2$ (Thm~\ref{thm:star2},~\ref{thm: star lower bound})} & 2 (Thm~\ref{thm:star})\\
    \hline
    Spider & \multicolumn{2}{|c|}{$3/2$ (Thm~\ref{thm:spiderimproved},~\ref{thm: star lower bound})} & 7/2 (Thm~\ref{thm:spider-no-tiebreak})\\
    \hline
    Tree & $3/2$ (Thm~\ref{thm: star lower bound}) & $3$ (Thm~\ref{thm: tree}) & 7 (Thm~\ref{thm:tree-no-tiebreak})\\
    \hline
    Cycle &  \multicolumn{2}{|c|}{$2$ (Thm~\ref{thm:cycle},~\ref{thm: cycle_lower_bound})} & 2 (Thm~\ref{thm:cycle_without_tie})\\
    \hline
    Outerplanar & $\Omega((\frac{\log m}{\log\log m})^{1/2})$ (Thm~\ref{thm:outerplanar_lower})& $O(\log^{2} m)$ (Thm~\ref{thm:outerplanar}) & $O(\log^{2} m)$ (\Cref{thm:outerplanar_without_tie})\\
    \hline
    \end{tabular}
    \caption{Summary of results for special families of graphs.}\label{tab:graphresults}
\end{table}

\vspace{-.1in}
\paragraph{2. Item pricing for general graphs.}
In contrast with our results for special families of graphs, we show that, for general graphs, the competitive ratio can be polynomial in the number of its edges, even when the graph is a grid.   
In \cite{cheung2008approximation} and \cite{chin2018approximation}, the competitive ratio was proved to be $O(\sqrt{m})$. This bound also holds for the general single-minded combinatorial auctions, where the demand of every agent is an arbitrary set rather than a single path. Here the $O(\sqrt{m})$-competitive ratio was shown to be tight up to a constant by Feldman et al.~\cite{feldman2015welfare}.
In this paper we prove an alternative bound $O(m^{0.4}\log^2 m\log n)$ on the competitive ratio
in any tollbooth problem with a general underlying graph (Theorem~\ref{thm:general_graph}). When $n$, the number of buyers in the auction, is subexponential on $m$, our competitive ratio is better than the previous ratio $O(\sqrt{m})$. 

\begin{result}\label{inforthm:general_graph}
The competitive ratio for any tollbooth problem is $O(m^{0.4}\log^2 m\log n)$. On the other hand, there exists a tollbooth problem instance such that the competitive ratio is $\Omega(m^{1/8})$. Moreover, there exist a constant $\alpha\in (0,1)$ and an instance on grid such that the competitive ratio is $\Omega(m^{\alpha})$. Here $m$ is the number of edges and $n$ is the number of buyers.
\end{result}
}

\paragraph{Resource augmentation.} 
At last, we study the problem where the seller has more resource to allocate.
More specifically, comparing to the offline optimal allocation with supply 1 for each item (denoted as $\optwel$), the seller has augmented resources and is allowed to sell $c$ copies of each item to the buyers. In the literature studying offline path allocation problems on graphs (e.g. \cite{chuzhoy2016polylogarithmic}) and previous work using the techniques of resource augmentation (e.g. \cite{kalyanasundaram2000speed,koutsoupias1999weak,sleator1985amortized,young1994k,caragiannis2016truthful}), even slightly increased resources usually improves the competitive ratio significantly. However in our problem, we prove that for any constant $c>1$, there exists an instance such that the competitive ratio with augmented resources is $m^{\Omega(1/c)}$, even we allow different item prices for different copies (\Cref{thm: lower_bound_congestion}). In other words, a competitive ratio that is polynomial in $m$ is unavoidable in the tollbooth problem, even if the capacity of each edge is augmented by a constant. We also prove an almost-matching upper bound of $O(m^{1/c})$ in this setting (\Cref{thm: congestion_c}). The upper bound holds for any single-minded welfare maximization problem, where each buyer may demand any set of items instead of edges in a path.

\begin{result}
For any constant integer $c>0$, consider the tollbooth problem where each edge has $c$ copies to sell. There is an instance, such that for any set of prices $\set{p^{k}(i)\mid i\in [m], 1\le k\le c}$ where $p^{k}(i)$ represents the price for the $k$-th copies of edge $i$, the item-pricing mechanism with above prices achieves worst-case welfare an $O(m^{-1/(2c+6)})$-fraction of $\optwel$. On the other hand, there exists an item pricing that achieves worst-case welfare at least an $\Omega(m^{-1/c})$-fraction of $\optwel$.
\end{result}

\notshow{
\paragraph{Techniques for upper bounds.} We use two major tools for computing prices that achieve good worst-case welfare. First, we consider the natural LP relaxation for computing the optimal allocation. It is a common technique that the variables of the dual LP can be interpreted as prices for corresponding items (see e.g.~\cite{bikhchandani1997competitive,nisan2006communication,cheung2008approximation,chawla2017stability,leme2020computing}).
\begin{equation}\label{equ:primal}
\begin{aligned}
\mbox{(LP-Primal)}~~~~\max &~~\sum_{j\in [n]}v_j\cdot y_j \\
\text{subject to} &~~\sum_{j: i\in Q_j}y_j\leq 1 &~~\forall i\in [m]\\
&~~y_j\geq 0 &~~\forall j\in [n]
\end{aligned}
\end{equation}
\begin{equation}\label{equ:dual}
\begin{aligned}
\mbox{(LP-Dual)}~~~~~~\min &~~\sum_{i\in [m]}p_i \\
\text{subject to} &~~\sum_{i\in Q_j}p_i\geq v_j &~~\forall j\in [n]\\
&~~p_i\ge 0 &~~\forall i\in [m]
\end{aligned}
\end{equation}
We show (in~\Cref{lem:rounding}) that for any optimal fractional solution $y^*$, there is a price vector $p^*$ such that only buyers $j$ with $y^*_j>0$ can afford their demand sets.
Moreover, such a price vector can be efficiently computed.
This reduces the task of computing prices of items to the task of finding a subset of buyers, such that (i) their demand paths are ``as vertex-disjoint as possible'' so that they do not interfere each other; and (ii) their total value of the demand sets is a good approximation of the optimal objective value of (LP-Primal). We use the above tools to prove the competitive ratio for stars and spiders.

The above LP-based approach is similar to~\cite{cheung2008approximation}, where they prove an $8$-approximation when $G$ is a tree. To obtain the improved competitive ratio $3$ for trees, we use a new approach that makes use of our result for spider graphs. We start with an optimal allocation (which corresponds to a set $\pset$ of demand paths), and then construct an algorithm (in \Cref{lem:treetospider}) that partitions the paths of $\pset$ into two subsets $\pset=\pset_1\cup\pset_2$, such that the graph obtained by taking the union of all paths in $\pset_t$ is a spider, for each $t\in \set{1,2}$.
Then the task of computing the price on the edges of a tree is reduced to that of a spider, while losing a factor of $2$ in the competitive ratio.

}

\subsection{Other Related Work}

\paragraph{Profit maximization for the tollbooth problem.}
The tollbooth problem has been extensively studied in the literature. One line of work~\cite{grigoriev2006sell,elbassioni2009profit,guruswami2005profit,gamzu2010sublogarithmic,kortsarzprofit} aims to efficiently compute prices of items as well as a special subset of buyers called \emph{winners} while maximizing the total profit, such that it is feasible to allocate the demand sets to all winners, and every winner can afford her bundle. There are two major differences between all works above and our setting. Firstly, the seller owns the tie-breaking power in the above works. Secondly and more importantly, in all works above, it is only required that the set of buyers who get their demand sets can afford their demand sets. But there might be other buyers who could afford their demand sets as well but eventually did not get them (or equivalently, not selected as winners).
Since the arriving order is adversarial in our setting, these buyers might come before the winners and take their demand sets. The winners may no longer get their demand sets since some items in their demand sets are already taken.
Therefore, the set of prices computed in the works above may not end up achieving the worst-case welfare equal to the total value of all winners. It is not hard to see that our item-pricing mechanisms are stronger than the settings in the works above: If a set of prices has a competitive ratio $\alpha$ in our setting, then such a set of prices is automatically an $\alpha$-approximation in the setting of the works above, but the converse is not true.

Moreover, the tollbooth problem on star graphs is similar to the graph pricing problem (where prices are given to vertices, and each buyer takes an edge) studied by Balcan and Blum~\cite{balcan2006approximation}, while they considered the unlimited supply setting. They obtained a 4-approximation, which was later shown to be tight by Lee~\cite{lee2015hardness} unless the Unique Games Conjecture is false.
For the multiple and limited supply case, Friggstad et al.~\cite{friggstad2019graph} obtained an $8$-approximation.

\vspace{-.1in}

\paragraph{Walrasian equilibrium for single-minded buyers.} A closely related problem of our setting is the problem of finding market-clearing item prices for single-minded buyers. 
Unlike the resource allocation setting,
a Walrasian equilibrium requires every buyer with a positive utility to be allocated. The existence of the Walrasian equilibrium is proven to be NP-hard, while satisfying $\frac{2}{3}$ of the buyers is possible \cite{chen2004complexity,huang2005approximation,deng2007walrasian,chen2008walrasian}. The hardness of the problem extends to selling paths on graphs, and is efficiently solvable when the underlying graph is a tree \cite{chen2008walrasian}.

\vspace{-.1in}
\paragraph{Profit maximization for single-minded buyers.}
For the general profit maximization for single-minded buyers with unlimited supply, Guruswami et al.~\cite{guruswami2005profit} proved an $O(\log n+\log m)$-approximation. The result was improved to an $O(\log B+\log \ell)$-approximation ratio by Briest and Krysta~\cite{briest2006single}, and then to an $O(\log B)$-approximation by Cheung and Swamy~\cite{cheung2008approximation}. Here $B$ is the maximum number of sets containing an item and $\ell$ is the maximum size of a set. Balcan and Blum~\cite{balcan2006approximation} gave an $O(\ell^2)$-approximation algorithm. Hartline and Koltun~\cite{hartline2005near} gave an FPTAS with a bounded number of items. On the other hand, the problem was proved to be NP-hard for both the limited-supply~\cite{grigoriev2006sell} and unlimited-supply~\cite{guruswami2005profit,briest2006single} case, and even hard to approximate~\cite{demaine2008combination}.

\vspace{-.1in}
\paragraph{Pricing for online welfare maximization with tie-breaking power.} The problem of online resource allocation for welfare maximization has been extensively studied in the prophet inequality literature. In the full-information setting where all buyers' values are known, bundle pricing achieves $2$-approximation to optimal offline welfare \cite{cohen2016invisible}, even when the buyers' values are arbitrary over sets of items. In a Bayesian setting where the seller knows all buyers' value distributions, item pricing achieves a 2-approximation in welfare for buyers with fractionally subadditive values \cite{krengel1978semiamarts,samuel1984comparison,kleinberg2012matroid,feldman2014combinatorial}, and an $O(\log\log m)$-approximation for subadditive buyers \cite{dutting2020log}. For general-valued buyers that demand at most $k$ items, item pricing can achieve a tight $O(k)$-approximation \cite{dutting2020prophet}. \cite{chawla2017stability} studied the problem of interval allocation on a path graph, and achieves $(1-\eps)$-approximation via item pricing when each item has supply $\Omega(k^6/\eps^3)$, and each buyer has a fixed value for getting allocated any path she demands. \cite{chawla2019pricing} further extends the results to general path allocation on trees and gets a near-optimal competitive ratio via anonymous bundle pricing.

\paragraph{Pricing for online welfare maximization without tie-breaking power.} When the seller does not have tie-breaking power, \cite{cohen2016invisible,leme2020computing} show that when there is a unique optimal allocation for online buyers with gross-substitutes valuation functions, static item pricing can achieve the optimal welfare. When the optimal allocation is not unique, \cite{cohen2016invisible,berger2020power} show that a dynamic pricing algorithm can obtain the optimal welfare for gross-substitutes buyers, but for not more general buyers. \cite{hsu2016prices} shows that if the buyers have matroid-based valuation functions, when the supply of each item is more than the total demand of all buyers, the minimum Walrasian equilibrium prices achieve near-optimal welfare. \cite{cohen2016invisible,eden2019max} shows that for an online matching market, when the seller has no tie-breaking power, static item pricing gives at least $0.51$-fraction of the optimal offline welfare, and no more than $\frac{2}{3}$. 


\subsection{Organization}
In Section~\ref{sec:prelim} we describe the settings of the problems studied in the paper in detail. In Section~\ref{sec: path-no-tiebreak}, we present our results on the competitive ratio when the graph is a single path (\Cref{sec:pathnt}) or tree (\Cref{sec:special-graphs-tree-cycle}). 
In Section~\ref{sec: general graph}, we prove upper and lower bounds on the competitive ratio for general graphs and lower bounds for grids.
In Section~\ref{sec:congestion}, we present our results in the setting the capacity of edges in the graph is augmented.
Finally we discuss possible future directions in Section~\ref{sec:future}.

\section{Our Model}\label{sec:prelim}

In this section, we introduce our model in more detail. A seller wants to sell $m$ heterogeneous items to $n$ buyers. Each buyer $j$ is single-minded: She demands a set $Q_j\subseteq[m]$ with a positive value $v_j$.\footnote{Throughout the paper, denote $[m]=\{1,\ldots,m\}$.} Her value for a subset $S\subseteq [m]$ of items is $v_j$ if $Q_j\subseteq S$, and $0$ otherwise. For every buyer $j$, the set $Q_j$ and the value $v_j$ are known to the seller. The seller aims to maximize the welfare, that is, the sum of all buyers' value who get their demand sets. As a special case of the above problem, in the \emph{tollbooth problem}, there is an underlying graph $G$. We denote $V(G)$ and $E(G)$ the vertex and edge set of $G$. Every item in the auction corresponds to an edge in $E(G)$. Let $E(G)=\{e_1,\ldots,e_m\}$. For simplicity, we use the index $i$ to represent the edge $e_i$ as well. For every agent $j$, her demand set $Q_j$ is a single path in graph $G$. For a set of paths $\qset$, denote $E(\qset)=\bigcup_{Q\in \qset}E(Q)$. We say that paths in $\qset$ are \emph{edge-disjoint} (\emph{node-disjoint}, resp.) if all paths in $\qset$ do not share edges (vertices, resp.).  

In the paper we focus on a special class of mechanisms called \emph{item pricing} mechanisms. In an item-pricing mechanism, the seller first computes a posted price $p(e_i)$ (or $p_i$) for every edge $e_i$ in the graph.\footnote{In the paper we allow the posted price $p_i$ to be $\infty$. It means that the price for edge $i$ is sufficiently large, such that no buyer $j$ with $i\in Q_j$ can afford her demand path.} The buyers then arrive one-by-one in some order $\sigma$. When each buyer $j$ arrives, if any edge in her demand set $Q_j$ is unavailable (taken by previous buyers), then she gets nothing and pays 0. Otherwise, she compares her value $v_j$ with the total price $p(Q_j)=\sum_{i\in Q_j}p_i$:

\begin{enumerate}
\item If $p(Q_j)<v_j$, she takes all edges in $Q_j$ by paying $p(Q_j)$; edges in $Q_j$ then become unavailable;
\item If $p(Q_j)>v_j$, she takes nothing and pays $0$;
\item If $p(Q_j)=v_j$, then whether she takes all edges in $Q_j$ at price $p(Q_j)$ depends on the specification about tie-breaking.
\end{enumerate}

We say that the seller has the \emph{tie-breaking power}, if the item pricing mechanism is also associated with a tie-breaking rule. Specifically, whenever $p(Q_j)=v_j$ happens for some buyer $j$, the mechanism decides whether the buyer takes the edges or not, according to the tie-breaking rule. Given any price vector $p=\{p_i\}_{i\in [m]}$ and arrival order $\sigma$, we denote by $\wel(\qset,v; p,\sigma)$ the maximum welfare achieved by the mechanism among all tie-breaking rules. On the other hand, the seller does not have the tie-breaking power (or buyer owns the tie-breaking power) if, whenever $p(S_j)=v_j$ happens for some buyer $j$, the buyer can decide whether she takes the edges or not. For every price vector $p$ and arrival order $\sigma$, we denote by $\welnt(\qset,v; p,\sigma)$ the worst-case (minimum) welfare achieved by the mechanism, over all tie-breaking decisions made by the buyers. In this paper, by default we assume the seller has no tie-breaking power, and will state explicitly otherwise.


For any graph $G$, an instance in this problem can be represented as a tuple $\fset=(\qset,v)=(\{Q_j\}_{j\in [n]},\{v_j\}_{j\in [n]})$ that we refer to as a \emph{buyer profile}. An \emph{allocation} of the items to the buyers is a vector $y\in \set{0,1}^n$, such that for each item $i\in [m]$, $\sum_{j\in [n], i\in Q_j}y_j\le 1$.
Namely, for every $j$, $y_j=1$ if and only if buyer $j$ takes her demand set $Q_j$.
The welfare of an allocation $y$ is therefore $\sum_{j\in [n]}v_jy_j$.
We denote by $\optwel(G,\fset)$ the optimal welfare over all allocations, and use $\optwel$ for short when the instance is clear from the context.

Given any item-pricing mechanism, we define the \emph{competitive ratio} as the ratio of the following two quantities: (i) the offline optimal welfare, which is the total value of the buyers in the optimal offline allocation; and (ii) the maximum among all choices of prices, of the worst-case welfare when the buyers' arrival order $\sigma$ is adversarial. Formally, for any instance $\fset=(\qset,v)$, if the seller does not have tie-breaking power, we define
$$\rationt(\fset)=\frac{\optwel(G,\fset)}{\max_{p}\min_{\sigma}\welnt(\qset,v; p,\sigma)}.$$

In the paper, we analyze the competitive ratio when $G$ has different special structures. For ease of notation, for any graph $G$, denote $\rationt(G)$ the largest competitive ratio $\rationt(\fset)$ for any instance $\fset$ with underlying graph $G$. And given a graph family $\gset$, we denote $\rationt(\gset)=\max_{G\in \gset}\rationt(G)$. For instance, $\rationt(\tree)$ represents the worst competitive ratio among all trees. For the case when the seller has tie-breaking power, we define $\ratio(\qset,v)$, $\ratio(G)$ and $\ratio(\gset)$ similarly.

\section{Competitive Ratio for Special Graphs}
\label{sec: path-no-tiebreak}

In this section, we study the competitive ratio when the underlying graph $G$ is a single path or tree. Although our main focus is on the scenario where buyers own the tie-breaking power, we will start with the setting where $G$ is a single path and the seller owns the tie-breaking power to illustrate the basic idea of how to use the linear program to generate the desired prices. 

\subsection{Warm up: Path Graphs with Tie-Breaking Power}
\label{subsec:path-tie}

Throughout this subsection, we assume that the seller has tie-breaking power. Given any instance $\fset$, the hindsight optimal welfare is captured by an integer program. The relaxed linear program (LP-Primal) and its dual (LP-Dual) are shown as follows.

\begin{minipage}[b]{0.48\textwidth}
\centering
\begin{align*}
\mbox{(LP-Primal)}~~\max &~~\sum_{j\in [n]}v_j\cdot y_j \\
\text{s.t.} &~~\sum_{j: i\in Q_j}y_j\leq 1 &\forall i\in [m]\\
&~~y_j\geq 0 &\forall j\in [n]
\end{align*}
\end{minipage}
\begin{minipage}[b]{0.48\textwidth}
\centering
\begin{align*}
\mbox{(LP-Dual)}~~\min &~~\sum_{i\in [m]}p_i \\
\text{s.t.} &~~\sum_{i\in Q_j}p_i\geq v_j &\forall j\in [n]\\
&~~p_i\ge 0 &\forall i\in [m]
\end{align*}
\end{minipage}

We denote $\optlp(\qset, v)$ (or $\optlp$ if the instance is clear from context) the optimum of (LP-Primal). Clearly, $\optlp(\qset,v)\geq \opt(\qset, v)$. 

The following lemma shows that for any feasible integral solution achieved by rounding from the optimal fractional solution, we are able to compute prices to guarantee selling to the exact same set of buyers via (LP-Dual). The proof is deferred to Appendix~\ref{apd: proof of lem:rounding}.

\begin{lemma}\label{lem:rounding}
Let $y^*$ be any optimal solution of (LP-Primal) and let $y'\in \set{0,1}^n$ be any feasible integral solution of (LP-Primal), such that for each $j\in [n]$, $y'_j=0$ if $y_j^*=0$.
Then there exists a price vector $p=\set{p_i}_{i\in [m]}$ that achieves allocation $y'$ (thus worst-case welfare $\sum_{j\in [n]}v_j\cdot y_j'$), if the seller owns tie-breaking power. 
\end{lemma}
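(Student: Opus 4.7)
The plan is to take the prices from an optimal dual solution and exploit the seller's tie-breaking power to realize the integral allocation $y'$. First, I would compute an optimal solution $p^* = (p^*_i)_{i \in [m]}$ of (LP-Dual) in polynomial time. By LP strong duality applied to the optimal pair $(y^*, p^*)$ and complementary slackness, every buyer $j$ with $y^*_j > 0$ must have her dual constraint tight, i.e., $\sum_{i \in Q_j} p^*_i = v_j$. In particular, for every buyer $j$ with $y'_j = 1$ (which implies $y^*_j > 0$ by the hypothesis of the lemma), the total price of her demand path exactly equals her value $v_j$, so she is at tie under the price vector $p^*$.

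I would then set the posted prices to $p := p^*$ and specify the following tie-breaking rule: whenever a buyer $j$ arrives and $\sum_{i \in Q_j} p_i = v_j$, the seller allocates $Q_j$ to her if and only if $y'_j = 1$. The key claim is that under this rule the mechanism sells $Q_j$ to exactly those buyers $j$ with $y'_j = 1$, regardless of the adversarial arrival order $\sigma$. I would verify this by a case analysis on each buyer $j$. If $y^*_j = 0$, then $y'_j = 0$ by hypothesis, and dual feasibility gives $\sum_{i \in Q_j} p_i \geq v_j$; whether the inequality is strict (she cannot afford $Q_j$) or tight (the tie-breaking rule refuses her), she takes nothing. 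If $y^*_j > 0$ but $y'_j = 0$, then by complementary slackness she is at tie and the rule again refuses her. If $y'_j = 1$, she is at tie and the rule allocates $Q_j$ to her, provided every edge of $Q_j$ remains available upon her arrival.

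The only non-trivial point is this last availability condition, and this is where the integer feasibility of $y'$ comes in: for every edge $i$ the constraint $\sum_{k : i \in Q_k} y'_k \leq 1$ guarantees that the demand paths $\{Q_k : y'_k = 1\}$ are pairwise edge-disjoint. Since the mechanism allocates only to buyers in this set, no earlier winner can have consumed any edge of $Q_j$, so the adversarial order $\sigma$ cannot obstruct the intended allocation. The achieved welfare is therefore $\sum_{j \in [n]} v_j y'_j$ for every $\sigma$, as required. Computing $p^*$ amounts to solving a single linear program and the tie-breaking rule only inspects the bit $y'_j$ of the arriving buyer, so the entire procedure runs in polynomial time. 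I expect no real obstacle here beyond carefully tracking the three cases above; the work is essentially a clean application of complementary slackness combined with the seller's discretionary power to break ties in favor of the chosen integral solution.
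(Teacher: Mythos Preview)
Your proposal is correct and follows essentially the same approach as the paper: take an optimal dual solution $p^*$, use complementary slackness to conclude that every buyer with $y^*_j>0$ is exactly at tie, and then use the seller's tie-breaking power to allocate precisely to the buyers with $y'_j=1$, which is consistent for every arrival order because feasibility of $y'$ makes those demand paths edge-disjoint. Your write-up is slightly more explicit than the paper's about the availability argument, but the substance is identical.
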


An immediate corollary of Lemma~\ref{lem:rounding} is for the highway problem, i.e. $G$ is a single path. In this case, there is always an integral optimal solution for (LP-Primal)~\cite{cheung2008approximation,schrijver1998theory}. Thus by Lemma~\ref{lem:rounding}, there exist prices that achieve worst-case welfare the same as optimal welfare.

\begin{theorem}
[\cite{cheung2008approximation,chawla2017stability}]
\label{thm:path}
$\ratio(\pathh)=1$.
\end{theorem}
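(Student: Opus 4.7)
The plan is to derive the theorem as a direct corollary of \Cref{lem:rounding} by exhibiting an integral optimum of (LP-Primal) when the underlying graph $G$ is a single path. Once such an integral optimum $y'$ exists, setting $y^* = y'$ in \Cref{lem:rounding} yields a price vector whose worst-case welfare equals $\sum_j v_j y'_j = \optlp = \opt$, and hence the competitive ratio is exactly $1$ (the lower bound $\ratio(\pathh) \geq 1$ is trivial since no item pricing can beat the hindsight optimum).

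First I would argue integrality of (LP-Primal). When $G$ is a path, every demand $Q_j$ is a subpath, so it corresponds to a contiguous interval of edges of $G$. The constraint matrix of (LP-Primal) is therefore the edge-path incidence matrix of an interval system: indexing edges by their position along the path and buyers by intervals, the column for buyer $j$ is the indicator of a contiguous block of rows. Such a matrix has the consecutive-ones property and is well known to be totally unimodular (see \cite{schrijver1998theory}). Since the right-hand side of (LP-Primal) is integral, every vertex of the feasible polytope is integral, so there is an integral optimal solution $y' \in \{0,1\}^n$ with $\sum_j v_j y'_j = \optlp(\qset,v)$.

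Next I would invoke \Cref{lem:rounding} with this $y'$ playing simultaneously the role of $y^*$ and $y'$. The hypothesis ``$y'_j = 0$ whenever $y^*_j = 0$'' is satisfied trivially, and since $y^*$ is optimal fractional while $y'$ is integral feasible with value $\optlp$, it is also optimal. The lemma then produces a price vector $p$ for which, under an appropriate tie-breaking rule, the worst-case welfare $\min_\sigma \wel(\qset,v;p,\sigma)$ equals $\sum_j v_j y'_j = \optlp \geq \opt(G,\fset)$. Combining with the obvious inequality $\opt(G,\fset) \geq \min_\sigma \wel(\qset,v;p,\sigma)$ gives $\ratio(\fset) \leq 1$ for every path instance $\fset$, and so $\ratio(\pathh) = 1$.

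The argument is almost entirely bookkeeping on top of \Cref{lem:rounding}; the only substantive ingredient is the integrality of the interval-LP, which is the ``main obstacle'' in the sense that without it the reduction to \Cref{lem:rounding} would fail. Since this is a classical totally-unimodular fact, I would just cite \cite{cheung2008approximation,schrijver1998theory} rather than reprove it, and the whole theorem collapses to a one-paragraph corollary of the lemma.
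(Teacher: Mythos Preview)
Your proposal is correct and matches the paper's own argument essentially verbatim: the paper also derives \Cref{thm:path} as an immediate corollary of \Cref{lem:rounding}, citing the integrality of (LP-Primal) for interval constraints from \cite{cheung2008approximation,schrijver1998theory}. The only difference is that you spell out the consecutive-ones/total-unimodularity reason, which the paper leaves to the citation.
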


\subsection{Path Graphs without Tie-Breaking Power}\label{sec:pathnt}

In this section, we analyze the competitive ratio for path graphs where the seller has no tie-breaking power. We notice that in the item-pricing mechanism with set of prices $p^*=\{p^*_i\}_{i\in [m]}$ as suggested in \Cref{lem:rounding}, every buyer $j$ with $y_j^*>0$ has 0 utility of buying the path. When buyers own the tie-breaking power, they can make arbitrary decisions and the worst-case welfare may become lower. In \Cref{thm: path_no_tie_lower_bound}, we prove that when the seller has no tie-breaking power, the competitive ratio for path graphs can be strictly larger than 1. The example contains 3 edges $e_1,e_2,e_3$ (from left to right) and 4 buyers. The demand path and value for all buyers are shown in \Cref{table:example_path}. The complete proof is deferred to Appendix~\ref{apd: proof of thm:path_no_tie_lower_bound}. 

\begin{table}
\centering
\begin{tabular}[b]{|c|c|c|c|c|}
\hline
buyer & $1$ & $2$ & $3$ & $4$\\
\hline
path & $e_1$ & $e_3$ & $e_1, e_2$ & $e_2, e_3$\\
\hline
value & $1$ & $1$ & $2$ & $2$\\
\hline
\end{tabular}
\caption{Counterexample for path graph}
\label{table:example_path}
\end{table}

\begin{theorem}
\label{thm: path_no_tie_lower_bound}
$\rationt(\pathh)\geq 3/2$.
\end{theorem}

The main result of this subsection is shown in Theorem~\ref{thm:path_no_tie}, where we prove that the competitive ratio $3/2$ is tight for path graphs.

\begin{theorem}
\label{thm:path_no_tie}
$\rationt(\pathh)\leq 3/2$. 
\end{theorem}

The remainder of this subsection is dedicated to the proof of \Cref{thm:path_no_tie}. According to \Cref{thm:path}, we start with an integral optimal solution of (LP-Primal). Denote $y^*$ the integral optimal solution of (LP-Primal) that maximizes $\sum_{j\in [n]}y^*_j$. 
Define $Y=\set{j\mid y^*_j=1}$ and
$\qset_Y=\set{Q_j\mid y^*_j=1}$. We prove the following lemma, which is useful to guarantee that the constructed price vector is positive in the proof of \Cref{thm:path_no_tie}. The proof is deferred to Appendix~\ref{apd: proof of lem:price-strictly-positive}.


\begin{lemma}\label{lem:price-strictly-positive}
There is an $\eps>0$ and an optimal solution $\set{p^*(e)}_{e\in E(G)}$ for (LP-Dual), such that
(i) for each edge $e\in E(\qset_Y)$, $p^*(e)\geq \eps$; and (ii)
for each $j\in [m]$, either $p^*(Q_j)=v_j$, or $p^*(Q_j)>v_j+m\eps$.
\end{lemma}

Now consider the parameter $\eps>0$ and prices $p^*$ as suggested by Lemma~\ref{lem:price-strictly-positive}. We define $A=\set{j\mid v_j=p^*(Q_j)}$ as the set of buyers who have 0 utility at prices $p^*$. Let $\qset_A=\set{Q_j\mid j\in A}$ be the set of their demand paths. We need the following definition.

\begin{definition}
A set $\qset\subseteq \qset_A$ of edge-disjoint paths is \emph{unique} (in $\qset_A$), if there does not exist a set $\qset'\subseteq\qset_A\setminus \qset$ of $|\qset'|\ge 2$ edge-disjoint paths, such that
$\bigcup_{j:Q_j\in \qset'}Q_j=\bigcup_{j:Q_j\in \qset}Q_j$. 
\end{definition}

Intuitively, a set of edge-disjoint paths $\qset$ is unique if the union of all paths in $\qset$ is a single path or there does not exists another set of paths among the rest paths, whose union is the same as the one of $\qset$. We prove the following lemma for unique edge-disjoint paths. The lemma shows that given any unique set of edge-disjoint paths, we can design proper prices so that the mechanism can serve all buyers whose demand paths are in the set in any arrival order. The proof is deferred to \Cref{apd: proof of claim:covering_implies_optimal}.

\begin{lemma}\label{claim:covering_implies_optimal}
Given any unique set $\qset$ of edge-disjoint paths, there exists a set of positive prices $p=\{p(e)\}_{e\in E(\qset)}$
that achieves worst-case welfare at least $\sum_{Q_j\in \qset}v_j$.
\end{lemma}

\begin{prevproof}{Theorem}{thm:path_no_tie}
We will prove Theorem~\ref{thm:path_no_tie} using \Cref{claim:covering_implies_optimal}. Denote $\qset_Y=\set{Q_1,Q_2,\ldots,Q_k}$, where the paths are indexed according to the order in which they appear on $G$.
First, for each edge $e\notin E(\qset_Y)$, we set its price $\tilde p(e)=+\infty$.
Therefore, any buyer $j$ whose demand path contains an edge not in $E(\qset_Y)$ cannot take her demand path.
In fact, we may assume without loss of generality that $\bigcup_{1\le j\le k}Q_j=G$, since otherwise $\bigcup_{1\le j\le k}Q_j$ is a union of node-disjoint paths and can be divided into separate sub-instances of path graphs. 

The crucial step is to compute three sets $\hat\qset_1,\hat\qset_2,\hat\qset_3\subseteq \qset_A$ of edge-disjoint paths, such that
\begin{enumerate}
\item \label{prop: cover_twice} every edge of $E(G)$ is contained in exactly two paths of $\hat\qset_1,\hat\qset_2,\hat\qset_3$; and
\item \label{prop: good_cover} for each $t\in \set{1,2,3}$, consider every connected component in the graph generated from paths in $\hat\qset_t$. Then set of paths in every connected component is a unique set of edge-disjoint paths.
\end{enumerate}
By property~\eqref{prop: cover_twice} and the fact that $\hat\qset_1,\hat\qset_2,\hat\qset_3\subseteq \qset_A$, $\sum_{j:Q_j\in \hat\qset_1}v_j+\sum_{j:Q_j\in \hat\qset_2}v_j+\sum_{j:Q_j\in \hat\qset_3}v_j=2\cdot \sum_{e\in E(G)}p^*(e)=2\cdot\opt(G,\fset)$. Here the last inequality is because: Since $\bigcup_{1\le j\le k}Q_j=G$, the offline optimal welfare equals to the optimum of (LP-Primal) and the optimum of (LP-Dual).
Assume without loss of generality that $\sum_{j:Q_j\in \hat\qset_1}v_j\ge (2/3)\cdot\opt(G,\fset)$.
Then we can set prices $\set{\tilde p(e)}_{e\in E(\hat\qset_1)}$ according to Lemma~\ref{claim:covering_implies_optimal}, and $+\infty$ price for all other edges. By Lemma~\ref{claim:covering_implies_optimal}, the item pricing $\tilde p$ achieves worst-case welfare at least $\sum_{j:Q_j\in \hat\qset_1}v_j\ge (2/3)\cdot\opt(G,\fset)$.

We now compute the desired sets $\hat\qset_1,\hat\qset_2,\hat\qset_3$ of edge-disjoint paths, which, from the above discussion, completes the proof of Theorem~\ref{thm:path_no_tie}.

We start by defining $\hat\qset$ to be the multi-set that contains, for each path $Q_j\in\qset_Y$, two copies $Q'_j,Q''_j$ of $Q_j$. 
We initially set
\begin{itemize}
\item $\hat\qset_1=\set{Q'_{6r+3},Q'_{6r+4}\mid 0\le r< k/6} \cup\set{Q''_{6r+6},Q''_{6r+1}\mid 0\le r< k/6}$;
\item $\hat\qset_2=\set{Q'_{6r+1},Q'_{6r+2}\mid 0\le r< k/6} \cup\set{Q''_{6r+4},Q''_{6r+5}\mid 0\le r< k/6}$; and
\item $\hat\qset_3=\set{Q'_{6r+2},Q'_{6r+3}\mid 0\le r< k/6} \cup\set{Q''_{6r+5},Q''_{6r+6}\mid 1\le r\le k/6}$.
\end{itemize}
See Figure~\ref{fig:path-1} for an illustration.
Clearly, sets $\hat\qset_1,\hat\qset_2,\hat\qset_3$ partition $\hat\qset$, each contains edge-disjoint paths, and every edge appears twice in paths of $\hat\qset_1,\hat\qset_2,\hat\qset_3$.
However, sets $\hat\qset_1,\hat\qset_2,\hat\qset_3$ may not satisfy Property~\ref{prop: good_cover}.
We will then iteratively modify sets $\hat\qset_1,\hat\qset_2,\hat\qset_3$, such that at the end Property~\ref{prop: good_cover} is satisfied.

Throughout, we also maintain graphs $G_t=\bigcup_{Q\in \hat{\qset}_t}Q$, for each $t\in \set{1,2,3}$. As sets $\hat\qset_1,\hat\qset_2,\hat\qset_3$ change, graphs $G_1,G_2,G_3$ evolve.
We start by scanning the path $G$ from left to right, and process, for each each connected component of graphs $G_1,G_2,G_3$, as follows.

We first process the connected component in $G_1$ formed by the single path $Q_1''$. Clearly, set $\set{Q_1''}$ is unique, since if there are other paths $\hat Q,\hat Q'\in \qset_A$ such that $\hat Q,\hat Q'$ are edge-disjoint and $\hat Q\cup \hat Q'=Q_1$, then the set $\set{\hat Q,\hat Q',Q_2,\ldots,Q_k}$ corresponds to another integral optimal solution $\hat y^*$ of (LP-Primal) with $\sum_{j\in [n]}\hat y^*_j=k+1>k=\sum_{j\in [n]}\hat y^*_j$, a contradiction to the definition of $y^*$. We do not modify path $Q''_1$ in $\hat{\qset}_1$ and continue to the next iteration.

We then process the connected component in $G_2$ formed by the paths $Q_1',Q_2'$. If the set $\set{Q_1',Q_2'}$ is unique, then we do not modify this component and continue to the next iteration. Assume now that the set $\set{Q_1',Q_2'}$ is not unique.
From similar arguments, there exist two other paths $Q^*_1,Q^*_2\in \qset_A$, such that $Q^*_1,Q^*_2$ are edge-disjoint and $Q^*_1\cup Q^*_2=Q_1\cup Q_2$. 
We then replace the paths $Q'_1,Q'_2$ in $\hat{\qset}_2$ by paths $Q^*_1,Q^*_2$.
Let $v^*_1$ be the vertex shared by paths $Q^*_1,Q^*_2$, so $v^*_1\ne v_1$. We distinguish between the following cases.

\textbf{Case 1. $v^*_1$ is to the left of $v_1$ on path $G$.}
As shown in Figure~\ref{fig:path-2}, we keep the path $Q^*_2$ in $\hat{\qset}_2$, and move path $Q^*_1$ to $\hat{\qset}_3$.
Clearly, we create two new connected components: one in $G_3$ formed by a single path $Q^*_1$, and the other in $G_2$ formed by a single path $Q^*_2$.
From similar arguments, the corresponding singleton sets $\set{Q^*_1}, \set{Q^*_2}$ are unique.

\textbf{Case 2. $v^*_1$ is to the right of $v_1$ on path $G$.}
As shown in Figure~\ref{fig:path-4}, we keep the path $Q^*_2$ in $\hat{\qset}_2$, move path $Q^*_1$ to $\hat{\qset}_1$ and additionally move the path $Q'_1$ processed in previous iteration to $\hat{\qset}_2$.
Clearly, we create two new connected components: one in $G_1$ formed by a single path $Q^*_1$, and the other in $G_2$ formed by a single path $Q^*_2$.
From similar arguments, the corresponding singleton sets $\set{Q^*_1}, \set{Q^*_2}$ are unique. 
Note that we have additionally moved $Q'_1$ to $\hat{\qset}_2$, but since we did not change the corresponding component, the singleton set $\set{Q'_1}$ is still unique.

\begin{figure}[h]
\centering
\subfigure[An illustration of paths in set $\hat{\qset}_1\cup \hat{\qset}_2\cup \hat{\qset}_3$ at the beginning.]{\scalebox{0.325}{\includegraphics{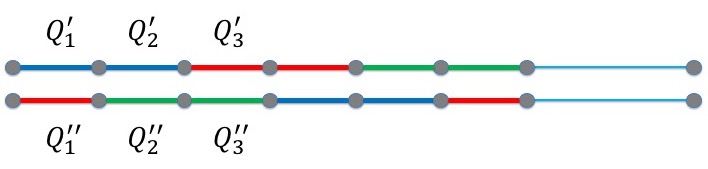}}\label{fig:path-1}}
\subfigure[An illustration of path modification in Case 1.]{\scalebox{0.325}{\includegraphics{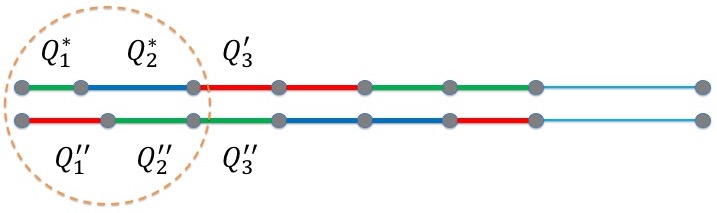}}\label{fig:path-2}}
\subfigure[An illustration of path modification in Case 2.]{\scalebox{0.325}{\includegraphics{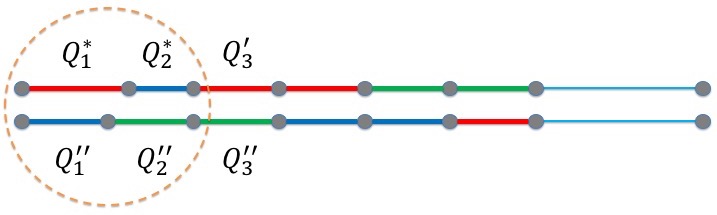}}\label{fig:path-4}}
\subfigure[How old and new paths/components may possibly interact.]{\scalebox{0.325}{\includegraphics{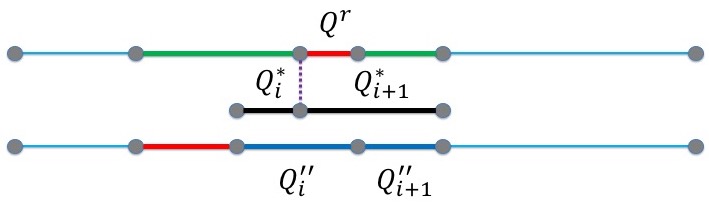}}\label{fig:path-3}}
\caption{Illustrations of the algorithm for computing path sets $\hat{\qset}_1,\hat{\qset}_2,\hat{\qset}_3$.}
\end{figure}

We continue processing the remaining connected components in the same way until all components are unique. We will show that, every time a connected component is not unique and the corresponding two paths are replaced with two new paths, the connected components in $G_1,G_2,G_3$ that we have processed in previous iterations will stay unique. Therefore, the algorithm will end up producing unique components in $G_1,G_2,G_3$ consisting of a unique set of one or two edge-disjoint paths.

To see why this is true, consider an iteration where we are processing a component consisting of paths $Q''_i, Q''_{i+1}$, and there exists edge-disjoint paths $Q^*_i,Q^*_{i+1}$ such that $Q^*_i\cup Q^*_{i+1}=Q''_i\cup Q''_{i+1}$, while the endpoint $v^*_i$ shared by $Q^*_i$ and $Q^*_{i+1}$ is an endpoint of a processed component, as shown in Figure~\ref{fig:path-3}. 
Note that this is the only possibility that the new components may influence the previous components,
However, we will show that this is impossible. 
Note that $Q''_{i}\in \qset_Y$. 
We denote by $Q^r$ the path with endpoints $v^*_i$ and $v_i$, then clearly paths $Q^*_i, Q^r$ are not in $\qset_Y$, edge-disjoint and satisfy that $Q_{i}=Q^*_i\cup Q^r$.
Consider now the set $(\qset_Y\setminus \set{Q_{i}})\cup \set{Q^*_i, Q^r}$. It is clear that this set corresponds to another integral optimal solution $\hat y^*$ of (LP-Primal) with $\sum_{j\in [n]}\hat y^*_j=k+1>k=\sum_{j\in [n]}\hat y^*_j$, a contradiction to the definition of $y^*$.
\end{prevproof}

\subsection{Competitive Ratio for Trees}
\label{sec:special-graphs-tree-cycle}

In this subsection, we study the competitive ratio when graph $G$ is a tree. When seller owns the tie-breaking power, we prove in \Cref{thm:tiebreak-tree-ub-lb} a tight competitive ratio of $\frac{3}{2}$. The upper bound is proved by combining \Cref{lem:rounding} with the integrality gap result of multicommodity problem on tree~\cite{chekuri2007multicommodity, raghavan1994efficient}. On the other hand, we provide an instance on a star to show the competitive ratio is at least $3/2$. We notice that the lower bound also implies that $\ratio(\starr)\geq \frac{3}{2}$ and $\rationt(\tree)\geq \frac{3}{2}$. The proof is deferred to Appendix~\ref{apd: proof of thm:tie-break-tree-ub-lb}.

\begin{theorem}
\label{thm:tiebreak-tree-ub-lb}
$\ratio(\tree)=\frac{3}{2}$.
\end{theorem}

When the seller has no tie-breaking power, we show that the competitive ratio for trees can also be upper bounded by an absolute constant. 

\begin{theorem}\label{thm:tree-no-tiebreak}
For any $\eps>0$, $\rationt(\tree)\leq 7+\eps$.
\end{theorem}

As discussed in the previous subsection for path graphs, the LP-based approach requires the seller to own the tie-breaking power. To prove \Cref{thm:tree-no-tiebreak}, we use a different approach. We first prove the following structural lemma, which partitions a set of edge-disjoint paths into two sets, such that each set of paths forms a union of vertex-disjoint spider graphs. Here a \emph{spider} graph $G$ is a tree with one vertex $u$ of degree at least 3 and all others with degree at most 2. In other words, $E(G)$ can be decomposed into $k$ paths, where any two paths only intersect at $u$. A star is a special spider graph, where all vertices other than $u$ have degree 1.
See \Cref{fig:star_spider} for an example. The proof of \Cref{lem:treetospider} is postponed to \Cref{apd: proof of lem:treetospider}.

\begin{figure}[h]
	\centering
	\subfigure[A star.]{\scalebox{0.08}{\includegraphics{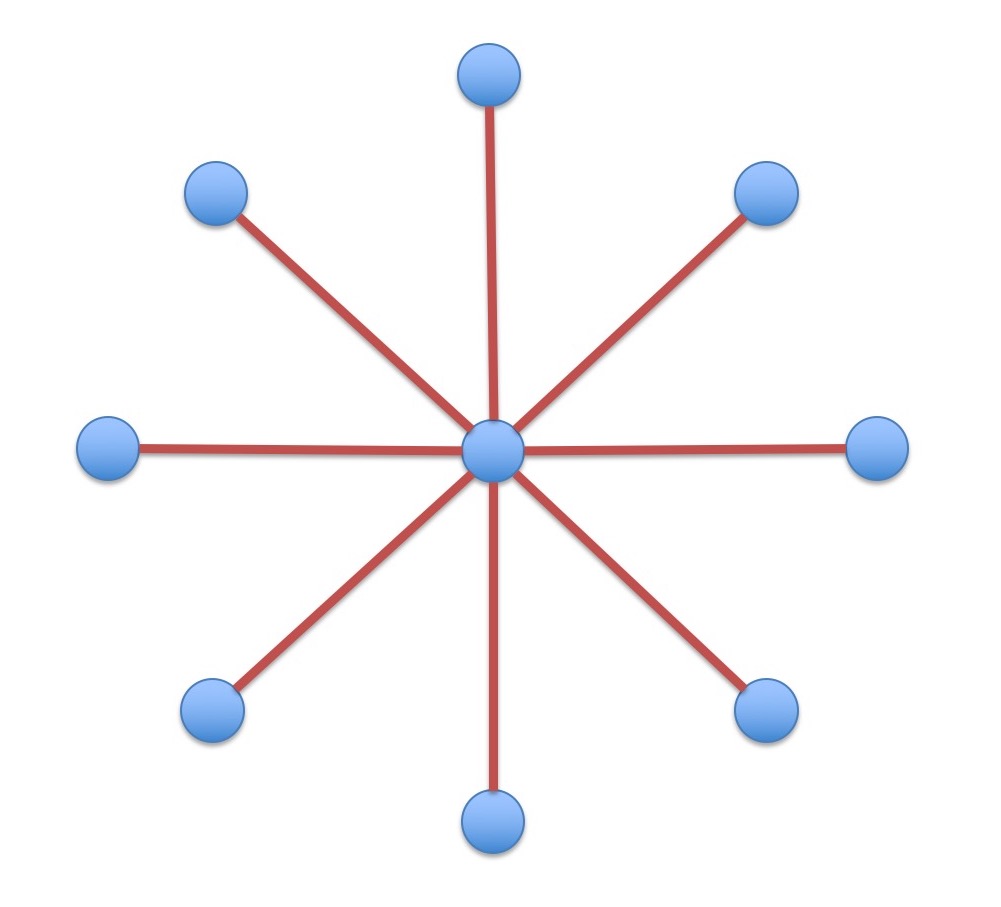}}}
	\hspace{5pt}
	\subfigure[A spider.]{\scalebox{0.08}{\includegraphics{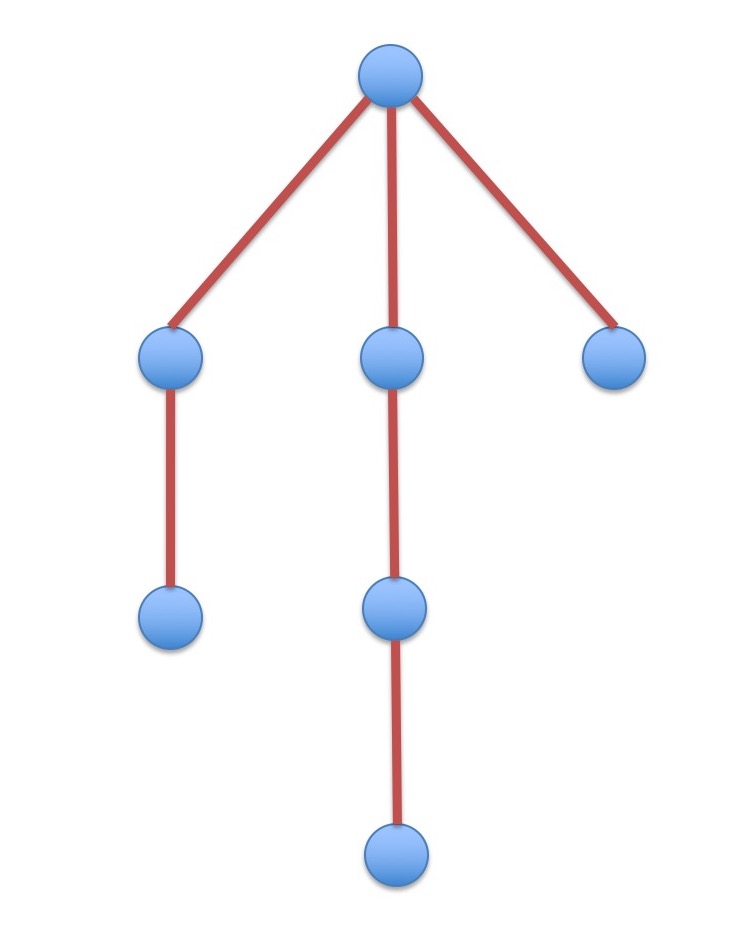}}}
	\caption{An illustration of stars and spiders.\label{fig:star_spider}}
\end{figure}

\begin{lemma}\label{lem:treetospider}
Let $P_1,\ldots,P_n$ be edge-disjoint paths, such that the graph $G=\bigcup_{j\in [n]}P_j$ is a tree, then the set $\pset=\set{P_1,\ldots,P_n}$ can be partitioned into two sets $\pset=\pset'\cup\pset''$, such that both the graph induced by paths in $\pset'$ and the graph induced by paths in $\pset''$ are the union of vertex-disjoint spiders.
\end{lemma}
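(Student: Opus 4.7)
My plan is to reformulate the desired partition via the intersection graph $I$ of the paths. Let $V(I)=\pset$ and put an edge between $P_i$ and $P_j$ iff $V(P_i)\cap V(P_j)\neq\emptyset$. Since the paths are edge-disjoint and $G$ is a tree, any two intersecting paths share exactly one vertex (a second shared vertex would force both paths to contain the unique tree-path between those two vertices, contradicting edge-disjointness). Invoking the Helly property for subtrees, any family of pairwise-intersecting paths in the tree shares a common vertex, and combined with edge-disjointness this yields the following: whenever $\qset\subseteq\pset$ induces a clique in $I$, every $Q\in\qset$ passes through a single common vertex $v$, and hence $\bigcup_{Q\in\qset}Q$ is a spider centered at $v$ (possibly degenerate to a path, if $v$ ends up with degree at most $2$ in the union).

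Consequently, the partition $\pset=\pset'\cup\pset''$ required by the lemma is equivalent to a \emph{2-cluster-coloring} of $I$: a 2-coloring of $V(I)$ such that each color class induces a disjoint union of cliques. Equivalently, we must avoid monochromatic induced three-vertex paths in $I$. An induced three-vertex path $P_a\text{--}P_b\text{--}P_c$ in $I$ means $P_a\cap P_b=\{x\}$ and $P_b\cap P_c=\{y\}$ with $P_a\cap P_c=\emptyset$, which forces $x\neq y$; the union of $P_a,P_b,P_c$ would then be a connected subgraph with the two distinct branching vertices $x$ and $y$, exactly the configuration we need to rule out.

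To construct such a 2-cluster-coloring, I would root $T=G$ at an arbitrary vertex, let $\ell_j$ denote the LCA of $P_j$, and then induct on $|\pset|$. In the inductive step, select a path $P$ whose LCA depth is maximum; its edges lie entirely in the subtree rooted at $\ell_P$, and no other path has LCA strictly deeper than $\ell_P$. Apply the inductive hypothesis to $\pset\setminus\{P\}$, whose union is the forest $T\setminus E(P)$ (applied componentwise), obtaining a 2-coloring of the remaining paths; then reinsert $P$ into one of the two color classes.

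The main obstacle is arguing that the reinsertion always succeeds, i.e.\ that one of the two color choices for $P$ avoids creating a new monochromatic induced three-vertex path involving $P$. Since $\ell_P$ has maximum depth, every path $P'$ with $V(P)\cap V(P')\neq\emptyset$ satisfies either $\ell_{P'}=\ell_P$ (in which case $V(P)\cap V(P')=\{\ell_P\}$ by edge-disjointness) or $\ell_{P'}$ is a strict ancestor of $\ell_P$ (in which case the unique shared vertex lies either at $\ell_P$ or at a descendant of $\ell_P$ reached by both paths along $P$'s body). This structural rigidity, together with the Helly property applied locally at $\ell_P$, supports a short case analysis — possibly involving recoloring a few previously-colored paths whose LCA equals $\ell_P$ (all of these pass through $\ell_P$ so they can be freely moved between the two classes without affecting the cluster property at any vertex other than $\ell_P$) — that establishes the existence of a valid color for $P$ and completes the induction.
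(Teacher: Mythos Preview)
Your reformulation via the intersection graph $I$ is sound: a clique in $I$ corresponds (by Helly and edge-disjointness) to a set of paths sharing a single vertex, hence a spider, so a 2-cluster-coloring of $I$ yields the desired partition. (Minor quibble: this is only one implication, not an equivalence — a spider can be covered by paths that do \emph{not} all pass through the center — but only this direction is needed.) Your approach is genuinely different from the paper's, which simply BFS-layers the paths from an arbitrary root and assigns odd layers to $\pset'$ and even layers to $\pset''$; your structural viewpoint is nice, but as written the inductive step is a gap, not a proof.

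Concretely, you assert that after removing a deepest-LCA path $P$ and 2-cluster-coloring the rest, $P$ can be reinserted ``via a short case analysis, possibly with recoloring.'' You never carry out this analysis, and the recoloring escape hatch is a warning sign (recoloring can cascade). In fact no recoloring is needed, and the missing observation is this: since $\ell_P$ has maximum depth, \emph{every} neighbor of $P$ in $I$ meets $P$ at $\ell_P$ (a shared vertex strictly below $\ell_P$ would force two shared vertices, contradicting edge-disjointness — your parenthetical allowing the shared vertex to be a strict descendant of $\ell_P$ is incorrect). Hence $N(P)$ consists of paths through $\ell_P$, and among these at most \emph{one} path $S$ has its LCA strictly above $\ell_P$, because any such path must use the unique edge from $\ell_P$ to its parent. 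Every other neighbor $X$ has $\ell_X=\ell_P$ and is therefore simplicial with $N(X)=N(P)\cup\{P\}\setminus\{X\}$. Now split $N(P)=N_1\cup N_2$ by color; whichever color class does not contain $S$ consists entirely of such simplicial paths, so its containing clique in the current coloring lies inside $N(P)$, and assigning $P$ that color creates no monochromatic induced three-vertex path. This one-line structural fact is what your ``short case analysis'' needs to become an actual proof.
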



We prove the following lemma using \Cref{lem:treetospider}. 

\begin{lemma}\label{lem:ratio-treetospider}
$\rationt(\tree)\leq 2\cdot\rationt(\spider)$.
\end{lemma}

\begin{proof}
Given any instance $(G,\mathcal{F})$ on a tree, let $\qset\subseteq \{Q_1,\cdots,Q_m\}$ be the offline optimal solution. It's a set of edge-disjoint paths. Let $\qset', \qset''$ be the partition of $\qset$ according to \Cref{lem:treetospider}. Without loss of generality assume that $\sum_{j:Q_j\in\qset'}v_j\geq \frac{1}{2}\cdot \optwel(G,\fset)$. Consider the item-pricing mechanism with $+\infty$ price on all edges in $\bigcup_{j:Q_j\in \qset''}Q_j$. Then buyers whose demand path contains such an edge can not afford the prices. By the definition of $\qset'$, the remaining buyers form an instance on a union of vertex-disjoint spiders. Thus there exists set of prices whose worst-case welfare is at least $\sum_{j:Q_j\in\qset'}v_j/\rationt(\spider)\geq \optwel(G,\fset)/(2\cdot \rationt(\spider))$.  
\end{proof}

With \Cref{lem:ratio-treetospider} it's sufficient to bound the competitive ratio for spider graphs. 
We take any offline optimal solution. For any path $Q_j$ in the optimal solution that goes through the center of the spider,
we choose the set of prices that achieves more welfare from two strategies: either designing prices to sell $Q_j$ only in the corresponding two legs, or designing prices according to \Cref{thm:path_no_tie} to sell the remaining part of the two legs (which is two separate paths). When $G$ is star, we provide an alternative analysis that induces a better competitive ratio of $2+\eps$. The proof of \Cref{lem:star_and_spider_no_tie} is postponed to \Cref{apd: proof of lem:star_and_spider_no_tie}.

\begin{lemma}\label{lem:star_and_spider_no_tie}
For any $\eps>0$, $\rationt(\spider)\leq 7/2+\eps$. Moreover, $\rationt(\starr)\leq 2+\eps$ for any $\eps>0$.
\end{lemma}

\begin{prevproof}{Theorem}{thm:tree-no-tiebreak}
It directly follows from Lemmas~\ref{lem:ratio-treetospider} and \ref{lem:star_and_spider_no_tie}.
\end{prevproof}




\section{Competitive Ratio for General Graphs}
\label{sec: general graph}  

In this section we study the competitive ratio when $G$ is a general graph. We will start by showing a $poly(m)$ lower bound by constructing an instance in a serial-parallel graph (Section~\ref{subsec: general graph lower bound}). Then we use a modification of this instance to prove lower bounds in grids (Sections~\ref{subsec: grid lower bound}). At last, we prove an upper bound for the competitive ratio in general graphs, that depends on the number of buyers (Section~\ref{sec:generalgraphub}). For results in this section, we assume the strongest dependence on tie-breaking power: the lower bound results hold even when the seller has tie-breaking power, and the upper bound results hold when the seller has no tie-breaking power.    

\subsection{Lower Bound for General Graphs}
\label{subsec: general graph lower bound}

\begin{theorem}\label{thm:lb-mahua}
$\ratio(\emph{\textsf{General\text{ }Graphs}})=\Omega(m^{1/8})$, i.e., there exists a graph $G$ with $|E(G)|=m$ and a buyer profile $\fset$ on $G$, such that no set of prices on edges of $G$ can achieve worst-case welfare $\Omega(\optwel(G,\fset)/m^{1/8})$ even when the seller has tie-breaking power.
\end{theorem}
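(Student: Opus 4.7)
My plan is to exhibit an explicit family of series-parallel (``Mahua'') graphs $G_N$, together with a buyer profile $\fset_N$, satisfying $|E(G_N)| = \Theta(N^8)$ and $\ratio(G_N, \fset_N) = \Omega(N) = \Omega(m^{1/8})$. Every buyer in $\fset_N$ will have a demand path from the leftmost to the rightmost vertex of $G_N$, with the same value $V=1$; the ratio will then follow from an optimum bound $\opt(G_N,\fset_N) \ge \Omega(N^a)$ paired with a worst-case welfare bound $\max_p \min_\sigma \wel(\qset_N,v;p,\sigma) = O(N^{a-1})$ for the right exponent $a$.

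\textbf{Construction.} I would build $G_N$ in a layered, series-parallel fashion, tuning the scale parameters so that (i) each edge of $G_N$ appears in roughly the same number $R$ of demand paths (a near-regular incidence property); (ii) there is a collection of $\Omega(N^a)$ edge-disjoint demand paths, which already forces $\opt \ge \Omega(N^a)$; and (iii) the intersection pattern of demand paths is rigid, meaning that whenever a large subfamily of demand paths is simultaneously affordable under a price vector, those paths either have very small independence number in the edge-intersection graph, or are all forced through a single bottleneck edge. Properties (i)--(iii) together with the choice $m=\Theta(N^8)$ are what will ultimately produce the target ratio.

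\textbf{Analysis and dichotomy.} Fix any price vector $p$ and let $\qset(p)=\{Q_j\in \qset_N : p(Q_j)\le V\}$ be the set of affordable demand paths. Summing $p(Q_j)\le V$ over $Q_j\in \qset(p)$ and combining with the near-regularity property (i), I would derive an upper bound on $|\qset(p)|$ in terms of $\sum_e p_e$ and $R$. From this bound I aim to establish the following dichotomy: either \emph{(a)} every edge-disjoint subfamily of $\qset(p)$ has size $O(N^{a-1})$, or \emph{(b)} there is a single path $Q^\star\in \qset(p)$ that shares an edge with every other path in $\qset(p)$. In case (a), since any realized allocation must be edge-disjoint, the welfare of the item-pricing mechanism is at most $O(N^{a-1}V)$. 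In case (b), the adversary orders the buyer of $Q^\star$ first: once she takes $Q^\star$, every other affordable path is blocked, giving welfare $O(V)$. Either way the worst-case welfare is $O(N^{a-1}) = O(\opt/N) = O(\opt/m^{1/8})$.

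\textbf{Main obstacle.} The delicate step is to design $G_N$ and $\qset_N$ so that property (iii) holds uniformly over all price vectors, not merely balanced or uniform ones, and so that the dichotomy in Step~2 can be established tightly enough for the $1/8$ exponent. I expect this will require a multi-scale or triply-nested series-parallel construction, in which edge-incidence counts, maximum edge-disjoint path bundles, and path lengths are jointly balanced so that $m=\Theta(N^8)$ is exactly what the trade-offs force. The averaging argument underlying the dichotomy is the heart of the proof: roughly, a pigeonhole over the $R$ paths through each edge, combined with the rigid layout, forces any ``cheap-enough'' subfamily either to be combinatorially small or to pierce a common chokepoint. Verifying rigorously that the combinatorial design forces this dichotomy for every $p$ is the step I anticipate being the most technical.
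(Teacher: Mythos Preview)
Your high-level outline mirrors the paper's sketch, but the proposal omits the central combinatorial idea that makes the construction work. In the paper the graph is $L_{a,b}$ (a path of length $b$ with each edge replaced by $a$ parallel edges), with $b=\Theta(a^3)$ so that $m=\Theta(a^4)$ and the target ratio is $\sqrt{a}=\Theta(m^{1/8})$. The buyer profile has two layers: a ``base'' set $\fset^*$ of $a$ edge-disjoint paths $Q^{(1)},\dots,Q^{(a)}$ of value $1$, and, crucially, for \emph{every} subset $S\subseteq[a]$ with $|S|\ge\sqrt{a}$, a family $\fset_S$ of $2|S|$ ``attack'' paths of value $1+\eps$. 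The attack paths in $\fset_S$ are built by a randomized construction so that (i) they cover $\bigcup_{r\in S}Q^{(r)}$ exactly twice, (ii) each of them hits every $Q^{(r)}$ with $r\in S$, and (iii) any two attack paths from $\fset_S,\fset_{S'}$ with $S\cap S'\neq\emptyset$ intersect. This exponential family $\{\fset_S\}_S$ is the heart of the argument: for any price vector, if $S$ is the set of affordable base paths and $|S|\ge\sqrt{a}$, then by (i) and averaging some path in $\fset_S$ is also affordable; the adversary sends that buyer first, which by (ii) blocks all base paths, and then (iii) forces the remaining allocated attack paths to come from pairwise-disjoint index sets, hence at most $\sqrt{a}$ of them.

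Your dichotomy option (b) --- a single affordable path intersecting \emph{all} other affordable paths --- is stronger than what the paper actually achieves and does not obviously hold: the blocking path in $\fset_S$ kills only the base paths, not other attack paths, and the welfare bound in that case is $(1+\eps)\sqrt{a}$, not $O(1)$. An averaging argument over a single near-regular incidence structure, as you propose, does not by itself yield a dichotomy that handles every price vector; the paper needs one tailored attack family per large subset $S$ precisely because the set of affordable base paths depends on $p$ in an uncontrolled way. This exponential indexing (and the probabilistic analysis showing the required intersection properties hold simultaneously) is the missing ingredient in your plan.
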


The remainder of this subsection is dedicated to the proof of Theorem~\ref{thm:lb-mahua}. We will construct the graph $G$ as follows.
For convenience, we will construct a family of graphs $\set{H_{a,b}}_{a,b\in \mathbb{Z}}$, in which each graph is featured by two parameters $a,b$ that are positive integers. We will set the exact parameters used in the proof of \Cref{thm:lb-mahua} later.

For a pair $a,b$ of integers, graph $H_{a,b}$ is defined as follows. The vertex set is $V(H_{a,b})=V_1\cup V_2$, where $V_1=\set{v_0,\ldots,v_{b}}$ and $V_2=\set{u_{i,j}\mid 1\le i\le b,1\le j\le a}$. The edge set is $E(H_{a,b})=\bigcup_{1\le i\le b}E_i$, where $E_i=\set{(v_{i-1},u_{i,j}), (v_{i},u_{i,j})\mid 1\le j\le a}$. 
Equivalently, if we define the multi-graph $L_{a,b}$ to be the graph obtained from a length-$b$ path by duplicating each edge for $a$ times, then we can view $H_{a,b}$ as obtained from $L_{a,b}$ by subdividing each edge by a new vertex.

Let $a,b$ be such that $b=a+3a^3$ and choose $G=H_{a,b}$. Clearly $m=|E(G)|=2ab$, so $a=\Theta(m^{1/4})$ and $b=\Theta(m^{3/4})$. For convenience, we will simply work with graph $L_{a,b}$, since every path in $L_{a,b}$ is also a path in $H_{a,b}$.
Note that $V(L_{a,b})=\set{v_0,\ldots,v_{b}}$, and we denote by $e_{i,1},\ldots,e_{i,a}$ the edges in $L_{a,b}$ connecting $v_{i-1}$ and $v_{i}$.
For brevity, we use the index sequence $(j_1,j_2,\ldots,j_b)$ to denote the path consisting of edges $e_{1,j_1},e_{2,j_2},\ldots,e_{b,j_b}$, where each index $j_t\in [a]$, for each $t\in [b]$. It is clear that a pair of paths $(j_1,j_2,\ldots,j_b)$ and $(j'_1,j'_2,\ldots,j'_b)$ are edge-disjoint iff for each $t\in [b]$, $j_t\ne j'_t$. In the proof we will construct a buyer profile $\fset$ on the multi-graph $L_{a,b}$. Clearly it can be converted to a buyer profile on graph $H_{a,b}$, with the same lower bound of competitive ratio. We prove the following lower bound of competitive ratio for $L_{a,b}$. Theorem~\ref{thm:lb-mahua} follows directly from Lemma~\ref{lem:ratio-Lab} where $a=\Theta(m^{1/4})$ and $b=\Theta(m^{3/4})$.

\begin{lemma}\label{lem:ratio-Lab}
$\ratio(L_{a,b})\geq \sqrt{a}$.
\end{lemma}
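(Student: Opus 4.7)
The plan is to exhibit a buyer profile $\fset$ on $L_{a,b}$ with $\opt(L_{a,b},\fset)\geq aV$ such that every price vector $p$ admits an adversarial arrival order producing worst-case welfare at most $\sqrt{a}\cdot V$. All buyers will share a common value $V$, and every demand path will be a $(v_0,v_b)$-path in the multigraph, identified with an index sequence $(j_1,\ldots,j_b)\in[a]^b$. The profile will include the $a$ diagonal paths $D_j=(j,j,\ldots,j)$ for $j\in[a]$, which are pairwise edge-disjoint and already witness $\opt\geq aV$; the remaining buyers form a carefully designed \emph{blending} family $\bset$ whose sole purpose is to sabotage every pricing.

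For any price vector $p$, let $\pset(p)=\{Q:\sum_{e\in Q}p_e\leq V\}$ denote the affordable demand paths. The welfare achieved in any arrival order equals $V$ times the size of the inclusion-maximal edge-disjoint subfamily of $\pset(p)$ that the greedy online rule carves out in that order, so the adversary can force welfare to equal $V$ times the \emph{minimum} over inclusion-maximal edge-disjoint subfamilies. I therefore aim at the following dichotomy: for every $p$, either \emph{(a)} every edge-disjoint subfamily of $\pset(p)$ has size at most $\sqrt{a}$, or \emph{(b)} some path in $\pset(p)$ shares an edge with every other path in $\pset(p)$. Case (a) bounds welfare by $\sqrt{a}\cdot V$ even in the best arrival order, while case (b) lets the adversary serve the universal blocker first and reduces welfare to $V=\opt/a\leq\opt/\sqrt{a}$. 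Either way $\ratio(L_{a,b})\geq\sqrt{a}$.

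The construction will be edge-regular, meaning every edge $e_{i,j}$ belongs to the same number of paths across $\fset$. Consequently, writing $S:=\sum_e p_e$, the average of $\sum_{e\in Q}p_e$ over $Q\in\bset$ is a fixed multiple of $S$, which splits the analysis into a low-price regime (small $S$) and a high-price regime (large $S$). When $S$ is small, many blending paths are affordable, and the intersection pattern hard-coded into $\bset$ forces one of them to meet every other affordable path in at least one layer, producing the universal blocker required by case (b). When $S$ is large, affordability pins each surviving path onto the layers and indices where $p_{e_{i,j}}$ is small; a pigeonhole over the $a$ edges per layer then caps the maximum edge-disjoint sub-collection at $\sqrt{a}$, yielding case (a).

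The main obstacle is specifying $\bset$ explicitly on the extra $3a^3$ layers so that both halves of the dichotomy hold with the sharp constant $\sqrt{a}$. Intuitively, the first $a$ layers encode a combinatorial fingerprint used by the counting argument in the high-price regime, while the $3a^3$-layer buffer installs periodic collision patterns that guarantee a universal blocker in the low-price regime; balancing the two effects, with the precise threshold needed to extract $\sqrt{a}$ rather than $a^{o(1)}$ on one side or $a^{1-o(1)}$ on the other, is what forces the cubic blow-up in $b=a+3a^3$.
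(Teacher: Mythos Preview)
Your plan has a fatal flaw right at the setup: with all buyers sharing a common value $V$, the seller can price every edge at $V/b$. Every demand path, being a $v_0$--$v_b$ path of length $b$, then has total price exactly $V$; since the seller has tie-breaking power, she refuses every blending buyer and sells only to the $a$ pairwise edge-disjoint diagonals, achieving welfare $aV=\opt$ for every arrival order. So no instance with a common value can witness $\ratio(L_{a,b})>1$. The paper gives the blending buyers value $1+\eps$ precisely to defeat this uniform pricing: a blending path then has price $1<1+\eps$ and the seller cannot refuse it.

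Beyond this, your case (b) --- a single affordable path meeting \emph{every} other affordable path --- is too strong to be engineered. The blending family is fixed before the prices are chosen, so any candidate ``universal blocker'' can be priced out while many other pairwise-disjoint paths remain cheap. The paper's construction instead builds, for \emph{every} $S\subseteq[a]$ with $|S|\geq\sqrt{a}$, a separate family $\fset_S$ of $2|S|$ pairwise-intersecting paths (value $1+\eps$ each) whose union double-covers $\bigcup_{r\in S}D_r$, together with the cross-property that any path from $\fset_S$ and any path from $\fset_{S'}$ intersect whenever $S\cap S'\neq\emptyset$. The case split is on the set $S$ of affordable diagonals. If $|S|\geq\sqrt{a}$, averaging over $\fset_S$ (this is where the $+\eps$ is needed again) produces an affordable blending path that the adversary serves first, blocking every diagonal in $S$; any further paths served come from families $\fset_{S_1},\fset_{S_2},\ldots$ whose index sets are pairwise disjoint by the cross-property, each of size at least $\sqrt{a}$, so at most $\sqrt{a}$ of them fit in $[a]$. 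The welfare bound is thus $O(\sqrt{a})$ via this packing argument, not $V$ via a universal blocker. The $3a^3$ extra layers are used (probabilistically, via independent random column permutations) to install the required pairwise and cross intersections over the exponentially many families $\fset_S$ simultaneously --- not as a periodic gadget inside a single blending family $\bset$.
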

\begin{proof}
We define $\fset$ on graph $L_{a,b}$ as follows. We will first define a set $\fset^*$ of buyer profile, and then define, for each subset $S\subseteq [a]$ with $|S|\ge \sqrt{a}$, a buyer profile $\fset_S$, where the buyers in different sets are distinct. Then we define $\fset=\fset^*\cup(\bigcup_{S\subseteq [a], |S|\ge \sqrt{a}}\fset_S)$.

Let the set $\fset^*$ contain, for each $r\in [a]$, a buyer $r$ with $Q^{(r)}=(r,r,\ldots,r)$ and $v_r=1$. 
Clearly, demand paths of buyers in $\fset^*$ are edge-disjoint. In the construction we will guarantee that $\opt(L_{a,b},\fset)=a$, achieved by giving each buyer in $\fset^*$ her demand path.

Before we construct the sets $\set{\fset_S}_{S\subseteq [a], |S|\ge \sqrt{a}}$, we will first state some desired properties of sets $\set{\fset_S}$, and use it to finish the proof of \Cref{lem:ratio-Lab}. Let $\eps>0$ be an arbitrarily small constant.
\begin{enumerate}
	\item \label{prop1}For each $S$, set $\fset_S$ contains $2|S|$ buyers, and every pair $Q,Q'$ of demand paths in $\fset_S$ share an edge. The value for every buyer in $\fset_S$ is $1+\epsilon$.
	\item \label{prop3}For each demand path $Q$ in $\fset_S$, the index sequence $(j^Q_1,\ldots,j^Q_b)$ that  $Q$ corresponds to satisfies that (i) $j^Q_i\in S$ for each $i\in [b]$; and (ii) the set $\set{j^Q_1,\ldots,j^Q_b}$ contains all elements of $S$.
	\item \label{prop4}The union of all demand paths in $\fset_S$ covers the graph $\bigcup_{r\in S}Q^{(r)}$ exactly twice. In other words, for each $i\in [b]$ and every $r\in S$, there are exactly two demand paths $Q$ in $\fset_S$ that satisfy $j_i^Q=r$. 
	\item \label{prop5}For any pair $S,S'\subseteq [a]$ such that $|S|,|S'|\ge \sqrt{a}$ and $S\cap S'\ne \emptyset$, and for any demand path $Q$ in $\fset_S$ and $Q'$ in $\fset_{S'}$, $Q$ and $Q'$ share some edge.
\end{enumerate}
Suppose we have successfully constructed the sets 
$\set{\fset_S}_{S\subseteq [a], |S|\ge \sqrt{a}}$ that satisfy all the above properties. We then define $\fset=\fset^*\cup(\bigcup_{S\subseteq [a], |S|\ge \sqrt{a}}\fset_S)$.
From the above properties, it is easy to see 
that $\opt(L_{a,b},\fset)=a$, which is achieved by giving each buyer in $\fset^*$ her demand path.
We will prove that any prices on edges of $L_{a,b}$ achieve worst-case welfare $O(\sqrt{a})$.

Consider now any set of prices on edges of $L_{a,b}$. We distinguish between the following two cases.

\vspace{-.15in}
\paragraph{Case 1: At least $\sqrt{a}$ buyers in $\fset^*$ can afford their demand paths.} 
We let $S$ be the set that contains all indices $r\in [a]$ such that the buyer $r$ can afford her demand path $Q^{(r)}$, so $|S|\ge \sqrt{a}$.
Consider the set $\fset_S$ of buyers that we have constructed.
We claim that at least some buyers of $\fset_S$ can also afford her demand path. To see why this is true, note that by Property~\ref{prop1}, $\fset_S$ contains $2|S|$ buyers with total value $(2+2\epsilon)|S|$, while the total price of edges in $\bigcup_{r\in S}Q^{(r)}$ is at most $|S|$. Therefore, by Property~\ref{prop4}, there must exist a buyer in $\fset_S$ that can afford her demand path.
We then let this buyer come first and get her demand path $Q$. Then from Property \ref{prop3}, all buyers $r\in S$ can not get their demand paths since their demand paths share an edge with $Q$. All buyers $r\in [a]/S$ can not afford their demand paths. Moreover, for any buyers' arriving order, let $\qset=\{Q_1,...,Q_K\}$ ($Q_1=Q$) be the set of demand paths that are allocated eventually. We argue that $K\leq \sqrt{a}$. For every $k=1,\ldots,K$, $Q_k$ must come from the profile $\bigcup_{S\subseteq [a], |S|\ge \sqrt{a}}\fset_S$. Let $S_k\subseteq [a]$ be the set that $Q_k$ appears in $\fset_{S_k}$. Then by Property \ref{prop1} and \ref{prop5}, we have $S_{k_1}\cap S_{k_2}=\emptyset$ for any $k_1,k_2\in [K],k_1\not=k_2$. Thus we have $K\leq \sqrt{a}$ since $|S_k|\geq \sqrt{a}$ for every $k$.
Hence, the achieved welfare is at most $(1+\eps)\sqrt{a}$, for any buyers' arriving order.
\vspace{-.15in}
\paragraph{Case 2. Less than $\sqrt{a}$ buyers in $\fset^*$ can afford their demand paths.} 
Similar to Case 1, at most $\sqrt{a}$ buyers from sets $\bigcup_{S\subseteq [a], |S|\ge \sqrt{a}}\fset_S$ can get their demand sets simultaneously. Therefore, the total welfare is at most $(1+\eps)\sqrt{a}+\sqrt{a}=O(\sqrt{a})$.

\vspace{.1in}

It remains to construct the sets $\set{\fset_S}_{S\subseteq [a], |S|\ge \sqrt{a}}$ that satisfy all the above properties.
We now fix a set $S\subseteq [a]$ with $|S|\ge \sqrt{a}$ and construct the set $\fset_S$. 
Denote $s=|S|$.
Since each path can be represented by a length-$b$ sequence, we simply need to construct a $(2s\times b)$ matrix $M_S$, in which each row corresponds to a path in $L_{a,b}$.
We first construct the first $s$ columns of the matrix.
Let $N_S$ be an $s\times s$ matrix, such that every row and every column contains each element of $S$ exactly once (it is easy to see that such a matrix exists).
We place two copies of $N_S$ vertically, and view the obtained $(2s\times s)$ matrix as the first $s$ columns of $M_S$.
We then construct the remaining $b-s\geq3a^3$ columns of matrix $M_S$.
Let $S'$ be the multi-set that contains two copies of each element of $S$. 
We then let each column be an independent random permutation on elements of $S'$.
This completes the construction of the matrix $M_S$.
We then add a buyer associated with every path above with value $1+\eps$. This completes the construction of the set $\fset_S$.

We prove that the randomized construction satisfies all the desired properties, with high probability. For Property \ref{prop1}, clearly $\fset_{S}$ contains $2|S|$ buyers, and the value associated with each demand path is $1+\eps$. For every pair of distinct rows in $M_S$, the probability that their entries in $j$-th column for any $j>s$ are identical is $1/2s$, so the probability that the corresponding two paths are edge-disjoint is at most $(1-1/2s)^{3a^3}$.
From the union bound over all pairs of rows, the probability that there exists a pair of edge-disjoint demand paths in $\fset_S$ is at most
$\binom{s}{2}\cdot(1-1/2s)^{3a^3}\le a^2\cdot e^{-a^2}$.
Property \ref{prop3} is clearly satisfied by the first $s$ column of matrix $M_S$.
Property \ref{prop4} is clearly satisfied by the construction of matrix $M_S$.
Therefore, from the union bound on all subsets $S\subseteq[a]$ with $|S|\ge \sqrt{a}$, the probability that Properties \ref{prop1},\ref{prop3}, and \ref{prop4} are not satisfied by all sets $\set{\fset_S}_{S\subseteq [a], |S|\ge \sqrt{a}}$ is $ a^2\cdot e^{-a^2}\cdot 2^{a}<e^{-a}$.

For Property~\ref{prop5}, consider any pair $S,S'$ of such sets and any row in the matrix $M_S$ and any row in the matrix $M_{S'}$.
Since $S\cap S'\neq \emptyset$, the probability that they have the same element in any fixed column is at least $(1/|S|)\cdot(1/|S'|)\geq 1/a^2$, so the probability that the corresponding two paths are edge-disjoint is at most $(1-a^{-2})^{3a^3}<e^{-3a}$. 
From the union bound, the probability that Property~\ref{prop5} is not satisfied is at most $e^{-3a}\cdot (2^a\cdot 2a)^2<e^{-a}$.
Altogether, our randomized construction satisfies all the desired properties with high probability. Thus there must exist a deterministic construction of $\fset_S$ that satisfies all the properties. This completes the proof of \Cref{lem:ratio-Lab}.
\end{proof}

\subsection{Lower Bound for Grids}\label{subsec: grid lower bound}
We notice that in the graph $L_{a,b}$ and $H_{a,b}$ that we constructed in Theorem~\ref{thm:lb-mahua}, the maximum degree among all vertices is $2a$, which is a polynomial of $m$. Readers may wonder if the large polynomial competitive ratio is due to the existence of high-degree vertices that are shared by many demand paths.
In this section, we show a negative answer to this question. We prove that a $poly(m)$ lower bound of the competitive ratio is unavoidable even when $G$ is restricted to be a grid.

\begin{theorem}
\label{thm:bounded-degree-mahua}
Let $G$ be the $(\sqrt{m}\times \sqrt{m})$-grid (so that $G$ has $\Theta(m)$ edges). Then $\ratio(G)=\Omega(m^{1/20})$.
\end{theorem}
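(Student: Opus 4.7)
The plan is to embed the hard instance $H_{a,b}$ from \Cref{thm:lb-mahua} into the $\sqrt{m}\times\sqrt{m}$ grid $G$, for suitably chosen parameters. I will designate $a$ distinct horizontal rows of the grid as the $a$ parallel ``wires'' of $H_{a,b}$, and select $b+1$ columns $x_0<x_1<\cdots<x_b$ to serve as ``junction columns.'' Each skeleton vertex $v_i$ is represented by the vertical path at column $x_i$ spanning all $a$ wire-rows; the $H_{a,b}$-subpath $v_{i-1}-u_{i,j}-v_i$ corresponding to the $j$-th wire between $v_{i-1}$ and $v_i$ is realized by the row-$j$ horizontal segment between $x_{i-1}$ and $x_i$, together with the short vertical stubs inside the two junction columns that connect this segment to the rest of the junctions. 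A demand path $(j_1,\ldots,j_b)$ in $H_{a,b}$ then maps to a grid path that uses row $j_i$ in segment $i$ and switches rows only via vertical edges inside junction columns.

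I would reuse the buyer profile $\fset=\fset^*\cup\bigcup_{|S|\ge\sqrt{a}}\fset_S$ from the proof of \Cref{thm:lb-mahua}, replacing every demand path by its grid image. The embedding preserves horizontal edge conflicts exactly: two embedded demand paths share a horizontal edge iff the underlying $H_{a,b}$-paths agree on some coordinate $j_i$. Hence the two structural properties driving \Cref{lem:ratio-Lab} are intact, namely that every pair of paths in each $\fset_S$ share a horizontal edge (Property~\ref{prop1}) and that paths in $\fset_S, \fset_{S'}$ with $S\cap S'\neq\emptyset$ share a horizontal edge (Property~\ref{prop5}). Moreover, the baseline paths $Q^{(r)}$ map to pure rows and use no vertical edges, so they are pairwise edge-disjoint in $G$, witnessing $\optwel(G,\fset)\ge a$, and a short counting argument using Properties~\ref{prop1} and~\ref{prop5} shows $\optwel(G,\fset)=\Theta(a)$. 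The two-case adversarial analysis from \Cref{lem:ratio-Lab} then goes through essentially verbatim and yields worst-case welfare $O(\sqrt{a})$ under any set of prices, giving $\ratio(G)=\Omega(\sqrt{a})$.

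The main obstacle is the interplay between the embedding and the grid's geometric constraint. The \emph{extra} edge conflicts introduced by shared vertical edges inside junction columns are harmless for the adversarial direction (they can only strengthen blocking), but they force me to verify that the optimum is not inflated; this is taken care of by the horizontal-only baseline paths. The real cost is geometric: the embedding must fit inside a $\sqrt{m}\times\sqrt{m}$ grid, so both $a$ and $b$ are bounded by $\sqrt{m}$, and since \Cref{lem:ratio-Lab} requires $b=\Theta(a^3)$ a naive embedding would already give $a=O(m^{1/6})$. To rule out spurious coincidences between the row-transitions of different demand paths in different $\fset_S$'s (which could otherwise give the adversary too much simultaneous blocking power and cause the optimum-witness argument to collapse) I would spread the wire-rows and the junction columns apart by a polynomial factor and route row-switches through disjoint vertical ``staircases,'' which consumes additional grid real estate. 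Optimizing across this trade-off caps $a$ at $\Omega(m^{1/10})$, which yields the claimed $\ratio(G)=\Omega(\sqrt{a})=\Omega(m^{1/20})$.
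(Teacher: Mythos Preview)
Your embedding has a genuine gap that breaks the lower bound entirely. In your construction the baseline paths $Q^{(r)}$ are pure horizontal rows and use \emph{no} vertical edges, whereas every path in every $\fset_S$ must use vertical edges inside junction columns to switch rows. This asymmetry hands the seller a trivial escape: set the price of every vertical edge to $+\infty$ and price the horizontal edges so that each row $r$ has total price just below $1$. Then no buyer in any $\fset_S$ can afford her path, while every baseline buyer can; under any arrival order the baseline buyers buy their rows and the welfare is $a=\optwel$, not $O(\sqrt a)$. The Case~1 step you inherit from \Cref{lem:ratio-Lab} relies on Property~\ref{prop4}, namely that the $2|S|$ paths of $\fset_S$ together use exactly the edge set $\bigcup_{r\in S}Q^{(r)}$, each edge twice, so that $\sum_{Q\in\fset_S} p(Q)=2\sum_{r\in S}p(Q^{(r)})\le 2|S|$. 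In your grid image the $\fset_S$ paths use \emph{extra} vertical edges that lie in no baseline path, so this price identity fails and the pigeonhole conclusion ``some $\fset_S$ buyer can afford'' is lost. Your last paragraph worries about row-transitions of different $\fset_S$'s colliding with one another, but that is not the obstruction; the obstruction is that the seller can price-discriminate against all $\fset_S$ buyers at once via edges the baseline paths never touch.

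The paper's route is different precisely to avoid this. Each high-degree junction $v_i$ is replaced by a gadget $K_i$ built from $a$ pairwise-crossing segments, and the key combinatorial ingredient is \Cref{lem:disjointsort}: for \emph{any} permutation $\sigma$ on $[a]$, the gadget contains edge-disjoint paths $P_{i,1},\dots,P_{i,a}$ with $P_{i,j}$ connecting $u_{i,j}$ to $u_{i+1,\sigma(j)}$. The baseline paths are then routed through the gadgets via the identity permutation, not along bypassing rows, so the junction edges are part of the baseline paths as well; the seller cannot kill the $\fset_S$ paths by pricing junction edges without simultaneously hurting the baseline paths. This is what lets the covering/price-counting argument of \Cref{lem:ratio-Lab} survive the embedding. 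Any fix of your approach would need the same feature: the baseline paths must traverse the very routing fabric that the $\fset_S$ paths use for their row-switches.
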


The proof is enabled by replacing each high-degree vertex in the graph $H_{a,b}$ with a gadget, so that every vertex in the modified graph $G$ has degree at most 4. Formally, the graph $G=R_{a,b}$ is constructed as follows.
Consider a high-degree vertex $v_i\in V(H_{a,b})$. Recall that it has $2a$ incident edges $\set{(v_i,u_{i,j}),(v_i,u_{i+1,j})\mid j\in [a]}$ in graph $H_{a,b}$ (see \Cref{fig:gadget_before}). The gadget for vertex $v_i$ is constructed as follows. We first place the vertices  $u_{i,1},\ldots,u_{i,a},u_{i+1,a}\ldots,u_{i+1,1}$ on a circle in this order, and then for each $j\in [a]$, we draw a line segment connecting $u_{i,j}$ with $u_{i+1,j}$, such that every pair of these segments intersects, and no three segments intersect at the same point. We then replace each intersection with a new vertex. See Figure~\ref{fig:gadget_after} for an illustration.
\begin{figure}[htb]
\centering
\subfigure[The vertex $v_i$ and its incident edges in $H_{a,b}$. ]{\scalebox{0.3}{\includegraphics{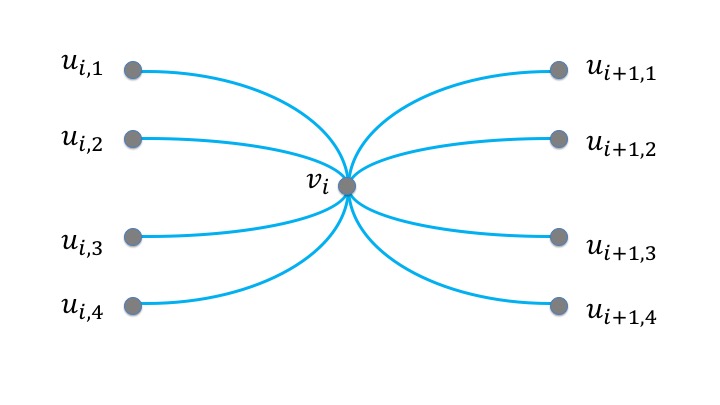}}\label{fig:gadget_before}}
\hspace{1pt}
\subfigure[The gadget graph $K_i$.]{\scalebox{0.3}{\includegraphics{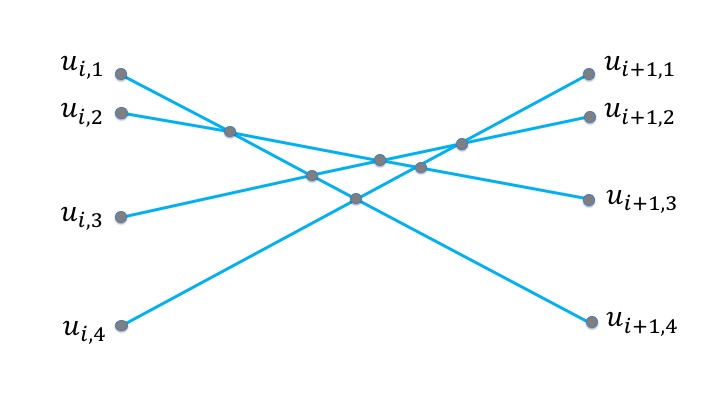}}\label{fig:gadget_after}}
\caption{An illustration of the gadget construction.}\label{fig:gadget}
\end{figure}

Now the modified graph can be embedded in the grid since each vertex has degree at most 4. The complete proof is deferred to Appendix~\ref{apd: proof of thm:bounded-degree-mahua}.





\subsection{Upper Bound for General Graphs}\label{sec:generalgraphub}

At last, we prove the following upper bound for the competitive ratio in any tollbooth problem (on general graphs). The competitive ratio depends on the number of buyers in the auction. Note that when $n$ is sub-exponential on $m$, the competitive ratio proved in \Cref{thm:general_graph} is better than the competitive ratio $O(\sqrt{m})$ proved in~\cite{cheung2008approximation,chin2018approximation}. 

\begin{theorem}
\label{thm:general_graph}
For any given instance $(G,\fset)$ with $|E(G)|=m$ and $|\fset|=n$,  $\rationt(G)=O(m^{0.4}\cdot\log^2m\cdot\log n)$.
\end{theorem}

The complete proof of \Cref{thm:general_graph} can be found in Appendix~\ref{apd: proof of thm:general_graph}. Here we only provide a sketch of how we construct the prices. Take any offline optimal solution $\qset$. As a pre-processing, we first select a subset $\qset'$ of $\qset$ such that each path in the subset has both length and value within 2 times of other paths, by losing a ratio of $\log^2 m$. Then we increase the prices for each edge in two steps. In the first step, let $\qset''$ be a random subset of $\qset'$ by including each path independently with probability $1/2$. We set the price for each edge not contained in any path of $\qset''$ to be $\infty$. In the second step, we pick a special set of short paths and increase the prices of each edge uniformly for every path in the set. With the above prices, we are able to prove a lower bound of the size of any set of paths sold in the selling process, which contributes enough welfare compared to the $\optwel$.

\section{Resource Augmentation}\label{sec:congestion}
In Sections~\ref{sec: path-no-tiebreak} and \ref{sec: general graph}, we studied the competitive ratio of the tollbooth problem, in which each edge can be allocated to at most one buyer. In this section, we consider the case where each item has augmented resources. We prove results in general combinatorial auctions with single-minded buyers, which is a generalization of the tollbooth problem. In a combinatorial auction, let $U=\set{1,\ldots,m}$ be the item set. Given a buyer profile $\fset=\set{(Q_j,v_j)}_{j\in [n]}$, we denote by $\opt(U,\fset)$ the maximum welfare by allocating items in $U$ to the buyers, such that each item is assigned to at most one buyer.
The seller, however, has more resources to allocate during the selling process. For each item $i\in U$, the seller has $c$ copies of the item, and each copy is sold to at most one buyer. In an item-pricing mechanism, the seller is allowed to set different prices for different copies. Formally, for each item $i\in U$, the seller sets $c$ prices $p^1(i)\leq\ldots\leq p^c(i)$, such that for each $1\le k\le c$, the $k$-th copy of item $i$ is sold at price $p^k(i)$.
When a buyer comes, if $k-1$ copies of item $i$ has already been sold, the buyer can purchase item $i$ with price $p^k(i)$. Again we define the worst-case welfare of an item pricing as the minimum welfare among all the buyers' arriving order.

With the augmented resources, the seller can certainly achieve more welfare than in the case with a single unit per item. 
We show that item pricing can achieve worst-case welfare $\Omega(m^{-1/c})\cdot\opt(U,\fset)$. The result implies that when $c=\Omega(\log m)$, item pricing on $c$ copies achieves at least a constant factor of the offline optimal welfare when each edge has supply 1.
Due to space limit, the proofs of all theorems in the section are deferred to Appendix~\ref{apd: missing sec:congestion}. 
\begin{theorem}
\label{thm: congestion_c}
For any buyer profile $\fset=\set{(Q_j,v_j)}_{j\in [n]}$ and any integer $c>0$, there exists a set\\ $\set{p^{k}(i)\mid i\in U, 1\le k\le c}$ of prices on items of $U$, that achieves worst-case welfare $\Omega(m^{-1/c})\cdot\opt(U,\fset)$, even when the seller has no tie-breaking power.
\end{theorem}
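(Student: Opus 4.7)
My plan is to extend the bucketing-based item pricing used in the proof of Theorem~\ref{thm:general_graph} to exploit the $c$ copies. Let $\qset^* \subseteq \{Q_1,\ldots,Q_n\}$ be a hindsight-optimal feasible subset, so $\opt(U,\fset) = \sum_{Q_j \in \qset^*} v_j$ and the demand sets in $\qset^*$ are pairwise disjoint, giving $\sum_{Q_j \in \qset^*} |Q_j| \le m$. I first partition $\qset^*$ into $O(\log^2 m)$ dyadic buckets by $v_j \in [V,2V)$ and $\ell_j := |Q_j| \in [L, 2L)$, after discarding buyers with values below $v^*/m^2$ as in Section~\ref{sec:generalgraphub} (whose total contribution to $\opt$ is at most a constant factor loss). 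By averaging, some bucket $B^*$ with $n := |B^*|$ buyers carries an $\Omega(1/\log^2 m)$-fraction of $\opt$, and disjointness gives $nL \le m$.

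The core strategy is uniform pricing across copies: for every copy $t \in \{1,\ldots,c\}$ and every item $i \in \bigcup_{j \in B^*} Q_j$, set $p^t(i) = V(1-\eps)/(2L)$, and set the price of every other item to $+\infty$. Strict affordability holds at the worst-case copy level since $\ell_j \cdot V(1-\eps)/(2L) \le V(1-\eps) < V \le v_j$ for each $j \in B^*$. This lets me run a ``serve-or-block'' argument in the $c$-copy setting: for each $j \in B^*$ under any adversarial arrival order and any tie-breaking by the buyer, either $j$ ends up purchasing $Q_j$ (contributing $v_j \ge V$ to the welfare) or some item $i^\star \in Q_j$ has all $c$ of its copies sold before $j$ arrives, contributing revenue $\sum_{t=1}^c p^t(i^\star) = cV(1-\eps)/(2L)$ which lower bounds welfare. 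Disjointness of the demands in $\qset^*$ forces distinct blocked buyers to have distinct blocking items, so summing yields $\wel \ge nV \cdot \min\!\bigl(1,\, c(1-\eps)/(2L)\bigr)$. Because every affordability check uses the strict $(1-\eps)$ slack, the bound transfers to the no-tie-breaking model verbatim.

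The final step is a case analysis on $L$. If $L \le cm^{1/c}/2$, then $\min(1, c/(2L)) \ge 1/m^{1/c}$, giving $\wel \ge \Omega(nV/m^{1/c}) = \Omega(\opt \cdot m^{-1/c})$ once the $\log^2 m$ is absorbed into the constant hidden in $\Omega(\cdot)$ for constant $c$. Otherwise $L > cm^{1/c}/2$, and then $n \le m/L < 2m^{1-1/c}/c$; here I would combine the uniform bound with the complementary single-buyer strategy that prices only the top-valued buyer's demand set $Q_{j^\star}$ at $p^t(i) = V(1-\eps)/\ell_{j^\star}$ uniformly and sends everything else to $+\infty$, which guarantees $\wel \ge V \ge \opt_{B^*}/(2n)$. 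The better of the two strategies gives $\Omega(m^{-1/c})$ whenever either $n \le m^{1/c}$ or $L \le cm^{1/c}/2$; for $c \in \{1,2\}$ this exhausts all parameter regimes.

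The main obstacle is the genuinely intermediate sub-regime $m^{1/c} < n < 2m^{1-1/c}/c$ with $L > cm^{1/c}/2$ that appears only for $c \ge 3$: there neither the bucket uniform pricing (ratio $c/(4L)$) nor the single-top-buyer strategy (ratio $1/(2n)$) individually reaches $\Omega(1/m^{1/c})$. Handling this regime is where the proof must be subtle: I expect that replacing the flat uniform pricing by an instance-adaptive item-specific pricing, which concentrates the allowed per-bundle budget on a carefully chosen subset of $\Theta(|S|)$ items of $Q_{j^\star}$ with $|S|$ tuned to the blocker structure of the instance, amortizes the extracted blocker revenue against $j^\star$'s own value and closes the gap; this is the part that will require the most care to implement cleanly.
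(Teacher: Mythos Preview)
Your approach has two genuine gaps. First, the $\log^2 m$ loss from the dyadic bucketing cannot be ``absorbed into the constant'': you end up with $\wel \ge \Omega\bigl(\opt/(m^{1/c}\log^2 m)\bigr)$, which is strictly weaker than the claimed $\Omega(m^{-1/c})\cdot\opt$. Second, and more seriously, you already identified that for $c\ge 3$ the uniform-across-copies pricing together with the single-top-buyer fallback leaves the regime $m^{1/c}\lesssim n$ and $L\gtrsim c\,m^{1/c}$ uncovered; the vague ``instance-adaptive item-specific pricing'' you gesture at is not a proof, and in fact no amount of per-item reweighting within a fixed price level will close this gap, because the obstruction is that uniform pricing extracts only a factor $c$ (linear) from the $c$ copies, whereas the target ratio $m^{-1/c}$ requires a multiplicative gain that is \emph{exponential} in $c$.

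The idea you are missing is to price the $c$ copies of each item \emph{geometrically} rather than uniformly: set $p^k(i)=\frac{v}{2m}\cdot m^{(k-1)/c}$ for every item $i$ used by the optimal solution (where $v=\opt$) and $+\infty$ elsewhere. Then the $c$-th copy alone is worth $\frac{v}{2}m^{-1/c}$, so if any item ever sells out you are done immediately. If no item sells out, let $\tilde p(i)$ be the price of the next available copy; by the geometric spacing, the revenue already collected on item $i$ is at least $m^{-1/c}\cdot\tilde p(i)$ up to the first-copy correction $\frac{v}{2m}$. Now any optimal buyer who failed to purchase must face $\sum_{i\in Q_j}\tilde p(i)\ge v_j$, and since the optimal buyers' demand sets are disjoint, summing gives $\sum_{i}\tilde p(i)\ge \Omega(v)$, hence revenue $\ge \Omega(v\cdot m^{-1/c})$. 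This argument needs no bucketing, no case split on $L$ or $n$, and works uniformly for all $c$.
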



On the other hand, in \Cref{thm: congestion_c lower bound} we show that a polynomial dependency on $m$ in the competitive ratio is in fact unavoidable. 

\begin{theorem}
\label{thm: congestion_c lower bound}
For any integer $c>0$, there exists a ground set $U$ with $|U|=m$ and a buyer profile $\fset=\set{(Q_j,v_j)}_{j\in [n]}$, such that any item-pricing mechanism with set of prices $\set{p^{k}(i)\mid i\in U, 1\le k\le c}$ achieves worst-case welfare $O(c\cdot m^{-1/(c+1)})\cdot \opt(U,\fset)$, even when the seller has tie-breaking power.
\end{theorem}

In \Cref{thm: lower_bound_congestion}, we prove that a polynomial welfare gap also exists in the tollbooth problem. We adapt the series-parallel graph $H_{a,b}$ used in \Cref{thm:lb-mahua} and show that a polynomial competitive ratio is unavoidable in the tollbooth problem, even each edge has a constant number of copies. 
\begin{theorem}
\label{thm: lower_bound_congestion}
In the tollbooth problem, for any constant integer $c>0$, there exists a graph $G$ with $m$ edges and a buyer profile $\fset=\set{(Q_j,v_j)}_{j\in [n]}$, such that any set $\set{p^{k}(e)\mid e\in E(G), k\in [c]}$ of prices achieves worst-case welfare $O(m^{-1/(2c+6)})\cdot \optwel(G,\fset)$.
\end{theorem}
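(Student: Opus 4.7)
The plan is to adapt the series-parallel construction from \Cref{thm:lb-mahua} by thickening its combinatorial structure by a factor of $c+1$. Set $G=H_{a,b}$ with $b = a+(c+1)a^{c+2}$, so that $m=\Theta(a^{c+3})$ and $\sqrt{a}=\Theta(m^{1/(2c+6)})$. On $L_{a,b}$, I would define $\fset=\fset^{*}\cup\bigcup_{S\subseteq[a],\,|S|\ge\sqrt{a}}\fset_S$, where $\fset^{*}$ contains the $a$ diagonal paths $Q^{(r)}=(r,r,\dots,r)$ each with value $1$, and for each qualifying $S$, $\fset_S$ contains $(c+1)|S|$ mixing paths each with value $1+\varepsilon$. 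The mixing matrix $M_S$ is now of size $(c+1)|S|\times b$: its first $|S|$ columns consist of $c+1$ vertically stacked Latin squares on $S$ (so every element of $S$ appears exactly $c+1$ times in each of these columns), and its remaining $b-|S|$ columns are independent uniformly random permutations of the multiset containing $c+1$ copies of each $r\in S$.

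The next step is to verify, whp, the generalized versions of Properties \ref{prop1}--\ref{prop5} of \Cref{lem:ratio-Lab}: (i) every $(c+1)$-subset of paths in $\fset_S$ shares a common edge; (ii) each edge of $\bigcup_{r\in S}Q^{(r)}$ lies in exactly $c+1$ paths of $\fset_S$; (iii) the row-index sequences of $M_S$ use only elements of $S$ and cover all of $S$; and (iv) for $S\cap S'\neq\emptyset$, every $(c+1)$-subset of paths in $\fset_S\cup\fset_{S'}$ shares a common edge. For a single random column, the probability that a fixed $(c+1)$-subset of rows agrees is $\Theta((c+1)!/|S|^{c})$, so a union bound over $\binom{(c+1)|S|}{c+1}\le O(a^{c+1})$ subsets per set and $2^{a}$ sets gives failure probability at most $2^{a}\cdot a^{c+1}\cdot (1-\Theta(1/a^{c}))^{(c+1)a^{c+2}}\ll 1$ by the choice of $b$, exactly mirroring the $c=1$ calculation in \Cref{lem:ratio-Lab}.

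For the welfare analysis, $\opt(G,\fset)=a$ is attained by the $a$ edge-disjoint diagonals. Properties (i) and (iv) together give the structural bound that any family of simultaneously allocated mixing paths has size $O(c\sqrt{a})$: at most $c$ members come from each $\fset_S$ (otherwise $c+1$ of them share an edge with only $c$ copies, violating feasibility), and the supporting sets $\set{S_k}$ must be pairwise disjoint (otherwise (iv) forces $c+1$ of the allocated paths into a common edge), so there are at most $a/\sqrt{a}=\sqrt{a}$ distinct $S_k$. For the diagonals: if fewer than $\sqrt{a}$ indices $r$ admit $p^{1}(Q^{(r)})\le 1$, then at most $\sqrt{a}$ diagonal buyers can ever be allocated and the welfare bound is immediate. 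Otherwise, letting $S$ be this set, Property (ii) and the telescoping $\sum_{Q\in\fset_S}p^{1}(Q)=(c+1)\sum_{r\in S}p^{1}(Q^{(r)})\le (c+1)|S|<(c+1)|S|(1+\varepsilon)$ produce at least one afforable mixing path; the adversary schedules these mixing paths before their corresponding diagonals, and by (i) each time $c$ co-allocated mixing paths exhaust some common edge's $c$ copies and block every diagonal passing through it. Iterating this (and noting that the mixing paths used to block are themselves capped at $O(c\sqrt{a})$ by the structural bound) shows that at most $O(\sqrt{a})$ diagonals survive. Summing yields worst-case welfare $O(c\sqrt{a})$, and thus a competitive ratio $\Omega(\sqrt{a}/c)=\Omega(m^{1/(2c+6)})$ for constant $c$, which is the claimed bound.

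The principal obstacle is the last step: unlike the $c=1$ case, the per-copy prices $p^{1}\le\cdots\le p^{c}$ may be extremely unbalanced, so a single afforable mixing path under $p^{1}$ need not suffice to force a block. I would address this by strengthening the counting argument to all copy levels via $\sum_{k=1}^{c}\sum_{Q\in\fset_S}p^{k}(Q)=(c+1)\sum_{r\in S}\sum_{k=1}^{c}p^{k}(Q^{(r)})\le c(c+1)|S|$, which together with Property (ii) guarantees that enough mixing paths remain afforable at successive copy levels to let the adversary cascade $c$ blockers onto a shared edge. A secondary technical point is the slightly-larger-than-necessary choice of $b=(c+1)a^{c+2}$, which is what yields the stated exponent $1/(2c+6)$ rather than $1/(2c+4)$ and leaves the needed slack for the union bound over all subsets $S$ with $|S|\ge\sqrt{a}$.
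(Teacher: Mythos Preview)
Your high-level plan (thicken the $c=1$ construction of \Cref{lem:ratio-Lab} by a factor $c+1$, take $b=\Theta(a^{c+2})$, build $(c+1)|S|$ mixing paths per qualifying $S$) is exactly the paper's approach, and the probabilistic verification of the covering and sharing properties goes through essentially as you describe. Two things, however, are genuinely wrong.

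First, your Property~(iv) is too weak, and the structural argument you derive from it is incorrect. Pairwise intersection $S\cap S'\neq\emptyset$ together with ``any $c+1$ paths from $\fset_S\cup\fset_{S'}$ share an edge'' does \emph{not} force the supporting sets of an allocated family to be pairwise disjoint: with one allocated path from each of two intersecting sets you only have two paths, not $c+1$. The paper instead proves the stronger property that for any $c+1$ sets $S_1,\dots,S_{c+1}$ with $\bigcap_t S_t\neq\emptyset$, any choice of one path $Q_t\in\hat\fset_{S_t}$ satisfies $\bigcap_t E(Q_t)\neq\emptyset$. From this one argues that every $r\in[a]$ lies in at most $c$ of the supporting sets, whence $K\cdot\sqrt{ca}\le\sum_k|S_k|\le ca$ and $K\le\sqrt{ca}$. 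Your random construction (with the longer $b$) does give this stronger property, but you need to state and use it.

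Second, the ``principal obstacle'' you flag is real and your proposed fix does not work: the displayed inequality $\sum_{r\in S}\sum_{k=1}^{c}p^{k}(Q^{(r)})\le c|S|$ requires $p^{k}(Q^{(r)})\le 1$ for \emph{all} levels $k$, whereas your case hypothesis only controls $p^{1}$. The paper's resolution is different and clean: include $c$ identical copies of every mixing buyer, and let the adversary send all mixing copies (greedily) before any diagonal buyer. Write $p^{*}(e)$ for the price of the next available copy after the mixing phase. If at that point at least $\sqrt{ca}$ diagonals are affordable, the usual averaging produces some mixing path $Q\in\hat\fset_S$ with $p^{*}(Q)<1+\varepsilon$. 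But prices are monotone nondecreasing, so $Q$ was affordable throughout the mixing phase; hence all $c$ copies of that buyer bought, exhausting every edge of $Q$ and forcing $p^{*}(e)=+\infty$ on those edges, contradicting affordability. Thus fewer than $\sqrt{ca}$ diagonals can buy, and together with the structural bound on mixing buyers this gives welfare $O(\sqrt{ca})=O(m^{1/(2c+6)})^{-1}\cdot\opt$. The duplication trick, not a multi-level price sum, is the missing idea.
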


\section{Future Work}\label{sec:future}
We study the worst-case welfare of item pricing in the tollbooth problem. 
There are several future directions following our results. Firstly, in the paper we assume that all buyers' value are all public. A possible future direction is to study the Bayesian setting where the seller does not have direct access to each buyer's value, but only know the buyers' value distributions. Secondly, we focus on the tollbooth problem where each buyer demands a fixed path on a graph. An alternative setting is that each buyer has a starting vertex and a terminal vertex on the graph, and she has a fixed value for getting routed through any path on the graph. Such a setting is equivalent to our problem when the underlying graph is a tree, where a constant competitive ratio is proved in this paper even if the seller does not have tie-breaking power. 
However, there may exist more than one path between two vertices in a graph with cycles, and thus the buyer is not single-minded in this setting. In the paper we have shown that item pricing may not approximate the optimal welfare well in the tollbooth problem. It remains open whether the item pricing performs well in the alternative setting.  
Thirdly, the power of tie-breaking hasn't been studied much in the literature on mechanism design. In this paper we show that the tie-breaking power may cause a difference in the tollbooth problem even when the graph is a single path. It would be interesting to see other scenarios where the tie-breaking power also makes much difference.

\bibliographystyle{alpha}
\bibliography{reference}

\appendix

\section{Missing Details from Section~\ref{sec: path-no-tiebreak}}
\label{apd: missing sec: path-no-tiebreak}

\subsection{Proof of Lemma~\ref{lem:rounding}}
\label{apd: proof of lem:rounding}

\begin{proof}
Let $p^*=\{p^*_i\}_{i\in [m]}$ be any optimal solution for (LP-Dual). 
We will show that, for any buyers' arrival order $\sigma$, with a proper tie-breaking rule, we can ensure with the prices $p^*$ that the outcome of the selling process is $y'$, and therefore it achieves welfare $\sum_{j\in [n]}v_j\cdot y_j'$.

Consider now a buyer $j$.
Assume first that $y_j'=0$, from the constraint of (LP-Dual), $\sum_{i\in Q_j}p^*_i\geq v_j$.
If $\sum_{i\in Q_j}p^*_i> v_j$, then the buyer $j$ cannot afford her demand set.
If $\sum_{i\in Q_j}p^*_i= v_j$, since the seller has the power of tie-breaking, the seller can choose not to give the demand set to her.
Assume now that $y_j'=1$, so $y_j^*>0$. From the complementary slackness, $\sum_{i \in Q_j}p^*_i=v_j$. The seller can sell the demand set $Q_j$ to buyer $j$.
Since $y'$ is a feasible solution of (LP-Primal), no items will be given to more than one buyer.
Moreover, it is easy to verify that the above tie-breaking rule ensures that this selling process's outcome is exactly $y'$, for any buyers' arrival order $\sigma$. Therefore, the worst-case welfare is $\sum_{j\in [n]}v_j\cdot y_j'$. The set of prices $p^*$ can be computed in polynomial time by solving the (LP-Dual).
\end{proof}

\subsection{Proof of Theorem~\ref{thm: path_no_tie_lower_bound}}
\label{apd: proof of thm:path_no_tie_lower_bound}
\begin{proof}
Let $G$ be a path consisting of three edges $e_1,e_2,e_3$, that appears on the path in this order. The instance is described below.
\begin{itemize}
\item Buyer 1 demands the path consisting of a single edge $e_1$, with value $1$;
\item Buyer 2 demands the path consisting of a single edge $e_3$, with value $1$;
\item Buyer 3 demands the path consisting of edges $e_1,e_2$, with value $2$; and
\item Buyer 4 demands the path consisting of edges $e_2,e_3$, with value $2$.
\end{itemize}
	
It is clear that the optimal allocation is assigning $e_1$ to buyer 1 and $e_2,e_3$ to buyer 4 (or assigning $e_3$ to buyer 2 and $e_1,e_2$ to buyer 3). The optimal welfare $\opt(G,\fset)=3$.
We now show that, for any prices $p_1,p_2,p_3$, there is an order $\sigma$ of the four buyers such that the obtained welfare is at most $2$, when the seller has no tie-breaking power. We distinguish between two cases.

\textbf{Case 1. $p_1\leq 1$ and $p_3\leq 1$.} We let buyers 1 and 3 come first and take edges $e_1$ and $e_3$ respectively. Note that the seller does not have the tie-breaking power. The buyer can decide whether or not to take the path when the price equals the value. Now buyers 3 and 4 cannot get their paths. So the obtained welfare is $2$.

\textbf{Case 2. $p_1>1$ or $p_3>1$.} Assume without loss of generality that $p_1>1$. If $p_2+p_3\leq 2$, then we let buyer 4 come first and take edges $e_2$ and $e_3$. It is clear that no other buyer can get her path. So the total welfare is 2. If $p_2+p_3>2$, then $p_1+p_2+p_3>3$. Note that it is impossible in this case all edges are sold since the total price of all edges is larger than the hindsight optimal welfare 3. Therefore, the obtained welfare is at most $2$ as at most two edges are sold.
\end{proof}

\subsection{Proof of \Cref{lem:price-strictly-positive}}
\label{apd: proof of lem:price-strictly-positive}
\begin{proof}
Let $\eps'=\frac{1}{m^2}\cdot\min_jv_j$. We construct another instance $\fset'$ from $\fset$ as follows. We add, for each $e\in E(G)$ a new buyer demanding the path consisting of a single edge $e$ with value $\eps'$.
On one hand, it is easy to see that $y^*$ is still an optimal solution of this new instance. Let $p^*$ be any optimal solution for the corresponding dual LP for the new instance $\fset$. Define $\eps''=\frac{1}{2m}\min_{j:p^*(Q_j)>v_j}(p^*(Q_j)-v_j)$ and $\eps=\min\{\eps',\eps''\}$. Clearly $p^*$ is also an optimal dual solution for the instance $\fset$ and satisfies both properties.
\end{proof}

\subsection{Proof of \Cref{claim:covering_implies_optimal}}
\label{apd: proof of claim:covering_implies_optimal}

\begin{proof}
We denote $I=\bigcup_{j:Q_j\in \qset}Q_j$ and $\qset_I=\{Q_j\in \qset_A\mid Q_j\subseteq I\}$.
First, for each edge $e\notin E(I)$, we set its price $p(e)=+\infty$.
We will show that we can efficiently compute prices $\set{p(e)}_{e\in E(I)}$ for edges of $I$, such that (i) for each path $Q_j\in \qset$, $v_j>p(Q_j)$; (ii) for each path $Q_j\in \qset_I\setminus \qset$, $v_j<p(Q_j)$; and (iii) $|p(e)-p^*(e)|\leq \eps$, for every $e\in E(I)$. 

Consider the item pricing $p$ which satisfies all three properties. By property (iii) and Lemma~\ref{lem:price-strictly-positive}, any buyer $j\not\in A$ can not afford her path, since $p(Q_j)\geq p^*(Q_j)-n\eps>v_j$.
It is clear that the set $\set{p(e)}_{e\in E(G)}$ of prices achieves worst-case welfare $\sum_{Q_j\in \qset}v_j$. For buyers in $A$, according to the first two properties, only buyers $j$ where $Q_j\in \qset$ can purchase their demands. Thus the worst-case welfare is $\sum_{j:Q_j\in \qset}v_j$.

The existence of prices $\set{p(e)}_{e\in E(I)}$ that satisfy (i), (ii) is equivalent to the feasibility of the following system.
\begin{equation}\label{lp:covering-primal}
\left\{
\begin{array}{cccc}
\displaystyle \sum_{e\in Q_j}p(e) &< & v_j,&\forall Q_j\in\qset;\\
\displaystyle \sum_{e\in Q_j}p(e) & >& v_j,&\forall Q_j\in \qset_I\setminus \qset.
\end{array}
\right.
\end{equation}
From the definition of $A$, and since $\qset_I\subseteq\qset_A$, for each path $Q_j\in \qset_I$, $\sum_{e\in Q_j}p^*(e)= v_j$.
We denote $\alpha(e)=p(e)-p^*(e)$ for all $e\in E(I)$, then system~\eqref{lp:covering-primal} is feasible if the following system is feasible, for some small enough $\eps'>0$.
\begin{equation}\label{lp:covering-primal2}
\left\{
\begin{array}{cccc}
\displaystyle \sum_{e\in Q_j}\alpha(e) &\le & -\eps',&\forall Q_j\in\qset;\\
\displaystyle \sum_{e\in Q_j}\alpha(e) & \ge & \eps',&\forall Q_j\in \qset_I\setminus \qset.
\end{array}
\right.
\end{equation}
From Farkas' Lemma,
system~\eqref{lp:covering-primal2} is feasible if and only if the following system is infeasible.
\begin{equation}\label{lp:covering-dual}
\left\{
\begin{array}{rccc}
\displaystyle \sum_{j:Q_j\in \qset}\beta_j\cdot\mathbbm{1}[e\in Q_j]-\sum_{j:Q_j\in \qset_I\setminus \qset}\beta_j\cdot\mathbbm{1}[e\in Q_j] &=& 0,&\forall e\in E(I);\\
\displaystyle\sum_{j:Q_j\in \qset_I}\beta_j&>& 0;\\
\beta_j & \in & [0,1],&\forall Q_j\in \qset_I.
\end{array}
\right.
\end{equation}
where the additional constraints $\beta_j\le 1, \forall Q_j\in \qset_I$ will not influence the feasibility of the system due to scaling.


Next we will prove that system \eqref{lp:covering-dual} is feasible iff it admits an integral solution.

\begin{definition}
We say that a matrix $A$ is \emph{totally unimodular} iff the determinant of every square submatrix of $A$ belongs to $\{0,-1,+1\}$.
\end{definition}
\begin{lemma}[Forklore]
\label{lem:forklore}
If a matrix $A\in \mathbb{R}^{m\times n}$ is totally unimodular and a vector $b\in\mathbb{R}^m$ is integral, then every extreme point of the polytope $\{x\in \mathbb{R}^n\mid Ax\leq b, x\geq 0\}$ is integral.
\end{lemma}

\begin{lemma}[\cite{schrijver1998theory}] \label{lem:proposition-tu}
A matrix $A\in \mathbb{R}^{m\times n}$ is totally unimodular iff for any subset $R\subseteq [n]$, there exists a partition $R=R_1\cup R_2$, $R_1\cap R_2=\emptyset$, such that for any $j\in [m]$,
$$\sum_{i\in R_1}a_{ij}-\sum_{i\in R_2}a_{ij}\in \{0,-1,+1\}.$$
\end{lemma}

\begin{lemma}\label{lem:covering-integral-solu}
System~\eqref{lp:covering-dual} is feasible if and only if it admits an integral solution.
\end{lemma}
\begin{proof}
We prove that the coefficient matrix $C$ in system \eqref{lp:covering-dual} is totally unimodular.\footnote{The second inequality in system \eqref{lp:covering-dual} is strict inequality. We can modify it to an equivalent inequality $\sum_{j:Q_j\in \qset}\beta_j\geq \eps_0$ for some sufficiently small $\eps_0>0$ and then apply Lemma~\ref{lem:forklore}.} For any set of rows $R=\{i_1,i_2,\cdots,i_k\}$ such that $i_1<i_2<\cdots<i_k$, We define the partition $R_1=\{i_1,i_3,i_5,...\}$ be the set with odd index and $R_2=\{i_2,i_4,i_6,...\}$ be the set with even index. For every column $j$, $Q_j$ is an subpath and thus the ones (or -1s for $Q_j\in \qset_I\setminus \qset$) in the vector $(C_{ij})_{i=1}^n$ are consecutive. Hence the difference between $\sum_{i\in R_1}C_{ij}$ and $\sum_{i\in R_2}C_{ij}$ is at most 1. By Lemma~\ref{lem:proposition-tu}, $C$ is totally unimodular.
\end{proof}

Assume for contradiction that  \eqref{lp:covering-dual} is feasible, let $\beta^*\in \set{0,1}^{\qset_I}$ be an integral solution of \eqref{lp:covering-dual}.
Note that the constraints
$\sum_{j:Q_j\in \qset}\beta^*_j\cdot\mathbbm{1}[e\in Q_j]-\sum_{j:Q_j\in \qset_I\setminus \qset}\beta^*_j\cdot\mathbbm{1}[e\in Q_j] = 0,\forall e\in E(I)$ and the fact that paths of $\qset$ are edge-disjoint
imply that the paths in set $\qset'=\{Q_j\mid \beta_j=1\}$ are edge-disjoint. Note that $\qset'\cap\qset=\emptyset$ and $\bigcup_{Q_j\in \qset'}Q_j=\bigcup_{Q_j\in \qset}Q_j$, and this is a contradiction to the assumption that the set $\qset$ is good.
Therefore, system \eqref{lp:covering-dual} is infeasible, and thus system \eqref{lp:covering-primal2} is feasible.
Let $(\alpha(e))_{e\in E(I)}$ be a solution of system~\eqref{lp:covering-primal2}, such that $\sum_{e\in E(I)} |\alpha(e)|\le \eps$. It is clear that such a solution exists due to scaling.
We set $p(e)=p^*(e)+\alpha(e)$ for all $e\in E(I)$, and it is clear that the prices $\set{p(e)}_{e\in E(G)}$ satisfy all the properties of \Cref{claim:covering_implies_optimal}.
Moreover, prices $\set{p(e)}_{e\in E(G)}$ are positive.
\end{proof}

\section{Missing Details from Section~\ref{sec:special-graphs-tree-cycle}}
\label{apd: missing sec:special-graphs-tree-cycle}

\subsection{Proof of Theorem~\ref{thm:tiebreak-tree-ub-lb}}
\label{apd: proof of thm:tie-break-tree-ub-lb}

\begin{proof}

On one hand, we can upper bound the competitive ratio as a corollary of Lemma~\ref{lem:rounding} is for the setting where the underlying graph $G$ is a tree. Notice that the allocation problem can be viewed as a multicommodity problem on tree, with the demand of each source-destination pair and the supply of each edge being 1. \cite{chekuri2007multicommodity,raghavan1994efficient} shows that for such an instance, the integrality gap for rounding the primal linear program is essentially $\frac{2}{3}$. Thus by Lemma~\ref{lem:rounding}, $\ratio(\tree)\leq \frac{3}{2}$.

On the other hand, we provide an example showing that the $3/2$ competitive ratio is tight for star graphs, i.e trees of height 1 with all but one node being leaves. Consider a star graph with four edges $e_1,e_2,e_3,e_4$.
	The instance $\fset$ is described as follows:
	\begin{itemize}
		\item Buyer 1 demands the path $Q_1$ consisting of edges $e_1, e_2$, with value $v_1=1$;
		\item Buyer 2 demands the path $Q_2$ consisting of edges $e_1, e_3$, with value $v_2=2+\eps$;
		\item Buyer 3 demands the path $Q_3$ consisting of edges $e_1,e_4$, with value $v_3=2+\eps$;
		\item Buyer 4 demands the path $Q_4$ consisting of edges $e_3,e_4$, with value $v_4=2$.
	\end{itemize}
	It is clear that the optimal allocation is assigning the edges $e_1,e_2$ to buyer 1, and assigning the edges $e_3,e_4$ to buyer 4. The optimal welfare is $\optwel(G,\fset)=3$. However, we will show that any set of prices on the edges achieve the worst-case welfare at most $2+\eps$. Then $\ratio(\text{Star})\geq \frac{3}{2+\eps}>\frac{3}{2}-\eps$.
	
	Let $\set{p_i}_{1\le i\le 4}$ be any set of prices.
	If $p(Q_1)>1$, then buyer 1 cannot afford her demand path $Q_1$. Since each pair of paths in $Q_2, Q_3, Q_4$ shares an edge, at most one of buyers $2,3,4$ may get her demand path, so the welfare is at most $2+\eps$. 
	If $p(Q_4)>2$, then via similar arguments we can show that the welfare is at most $2+\eps$. 
	Consider now the case where $p(Q_1)=p_1+p_2\le 1$ and $p(Q_4)=p_3+p_4\le 2$. We have $p_1\le 1$, and one of $p_3,p_4$ is at most $1$. Assume w.l.o.g. that $p_3\le 1$. We let buyer 3 comes first. 
	Note that $p(Q_3)=p_1+p_3\le 2<2+\eps$. Therefore, buyer 3 will get her demand path, and later on no other buyer may get her path. Altogether, the worst-case welfare for any set of prices is at most $2+\eps$. 
Combine the two bounds above we prove the correctness of Theorem~\ref{thm:tiebreak-tree-ub-lb}.
\end{proof}

\subsection{Proof of Lemma~\ref{lem:treetospider}}
\label{apd: proof of lem:treetospider}
\begin{proof}
We choose arbitrarily a vertex $r\in V(G)$ as designate it as the root of the tree.
We iteratively construct a sequence of subsets of  $\pset$ as follows.
We first let $\pset_1$ contains all paths of $\pset$ that contains the vertex $r$.
Then for each $i>1$, as long as $\pset\ne \bigcup_{1\le t\le i-1}\pset_t$, we let $\pset_i$ contains all paths in $\pset\setminus \left(\bigcup_{1\le t\le i-1}\pset_t\right)$ that shares a vertex with a path in $\pset_{i-1}$.
We continue this process until all paths of $\pset$ are included in sets $\pset_1,\ldots,\pset_k$ for some $k$.
See Figure~\ref{fig:spider_decomp} for an illustration.
We then let $\pset'=\bigcup_{0\le t\le \lfloor{k/2}\rfloor}\pset_{2t+1}$ and $\pset''=\bigcup_{1\le t\le \lfloor{k/2}\rfloor}\pset_{2t}$. This completes the construction of the sets $\pset',\pset''$. Clearly, sets $\pset'$ and $\pset''$ partition $\pset$.

\begin{figure}[ht]
\centering
\subfigure[The graph $G$. Paths are shown in distinct colors.]{\scalebox{0.25}{\includegraphics{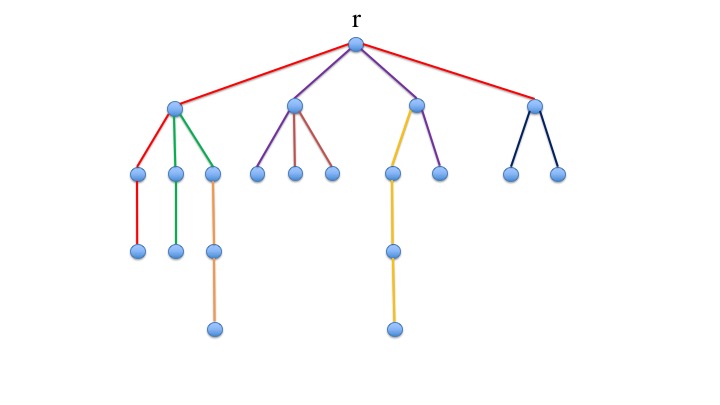}}\label{fig:non_crossing_representation}}
\hspace{1pt}
\subfigure[The paths in $\pset_1$ are shown in dash lines.]{\scalebox{0.25}{\includegraphics{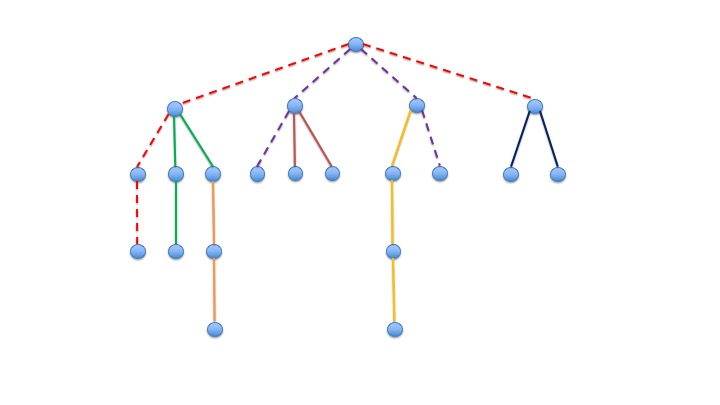}}}
\subfigure[The paths in $\pset_2$ are shown in dash lines.]{\scalebox{0.25}{\includegraphics{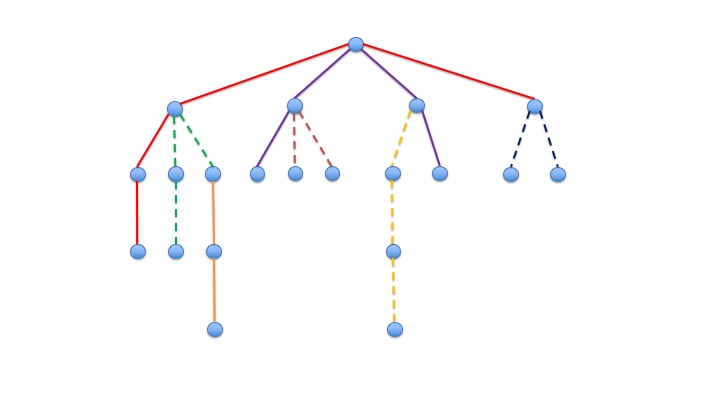}}}
\hspace{1pt}
\subfigure[The paths in $\pset_3$ are shown in dash lines.]{\scalebox{0.25}{\includegraphics{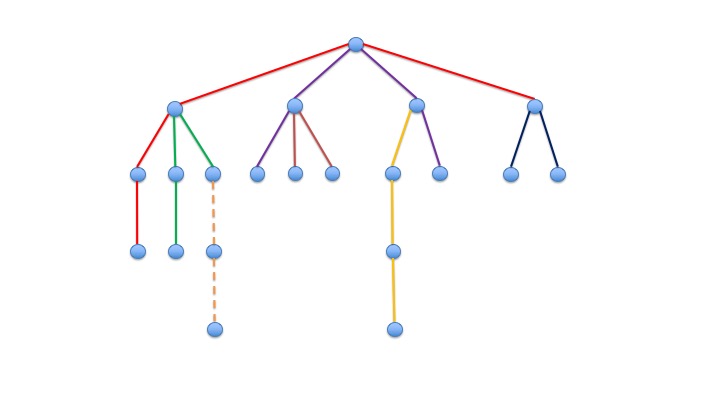}}}
\caption{An illustration of decomposing paths into subsets $\pset_1,\pset_2,\pset_3$.\label{fig:spider_decomp}}
\end{figure}

It remains to show that the graph induced by the paths in $\pset'$ is the union of vertex-disjoint spiders. The proof for $\pset''$ is similar.
First, for $0\le t<t'\le \lfloor{k/2}\rfloor$, we show that the paths in $\pset_{2t+1}$ are vertex-disjoint from the paths in $\pset_{2t'+1}$. Assume for contradiction that this is false, and let $P\in \pset_{2t+1}, P'\in \pset_{2t'+1}$ be a pair of paths that share a common vertex. However, according to the process of constructing the sets $\pset_1,\ldots,\pset_k$, if $P$ is included in $\pset_{2t+1}$ while $P'$ is not included in any set of $\pset_1,\ldots,\pset_{2t+1}$, since $P'$ shares a vertex with $P'$, $P'$ should be included in the set $\pset_{2t+2}$ rather than $\pset_{2t'+1}$, a contradiction.
We then show that, for each $0\le t\le \lfloor{k/2}\rfloor$, the paths in $\pset_{2t+1}$ form disjoint spiders. Clearly the paths in $\pset_1$ form a spider, since they are edge-disjoint paths that share only the root $r$ of tree $G$. Consider now some $t\ge 1$.
Let $v,v'$ be any two distinct vertices of $V(\pset_{2t})\cap V(\pset_{2t+1})$. We denote by $\pset_{2t+1}(v)$ ($\pset_{2t+1}(v')$, resp.) the subset of paths in $\pset_{2t+1}$ that contains the vertex $v$ ($v'$, resp.).
We claim that the paths in $\pset_{2t+1}(v)$ are vertex-disjoint from the paths in $\pset_{2t+1}(v')$.
Note that, if the claim is true, since the paths in $\pset_{2t+1}(v)$ only shares a single node $v$ (as otherwise there is a cycle caused by some pair of paths in $\pset_{2t+1}(v)$, a contradiction to the fact that graph $G=\bigcup_{j\in [n]}P_i$ is a tree), it follows that the paths in $\pset_{2t+1}(v)$ form a spider, and altogether, the paths in $\pset'$ form vertex-disjoint spiders.

It remains to prove the claim.
Assume that this is false. Let $P_{2t+1}\in \pset_{2t+1}(v), P'_{2t+1}\in \pset_{2t+1}(v')$ be a pair of paths that shares a common vertex $u$, with $u\ne v,v'$. Note that, from our process of constructing the sets $\pset_1,\ldots,\pset_k$, there are two sequences of paths $P_1,P_2,\ldots,P_{2t}$ and $P'_1,P'_2,\ldots,P'_{2t}$, such that
(i) for each $1\le s\le 2t$, $P_{s}, P'_{s}\in \pset_s$;
(ii) for each $1\le s\le 2t$, $P_{s}$ shares a vertex with $P_{s+1}$, and $P'_{s}$ shares a vertex with $P'_{s+1}$.
Since $P_1$ and $P'_1$ shares the root $r$ of tree $G$, it follows that the graph consisting of paths in $\set{P_s,P'_s \mid 1\le s\le 2t+1}$ contains a cycle, a contradiction to the fact that $G$ is a tree.
\end{proof}

\subsection{Proof of Lemma~\ref{lem:star_and_spider_no_tie}}
\label{apd: proof of lem:star_and_spider_no_tie}

For star and spider graphs, we prove that the competitive ratios are at most $2$ and $7/2$ respectively. 

\begin{theorem}
\label{thm:star}
For any $\eps>0$, $\rationt(\starr)\leq 2+\eps$. 
\end{theorem}

\begin{proof}[Proof of Theorem~\ref{thm:star}]
Let $\qset$ be the set of paths in the optimal allocation, so $\optwel=\sum_{j:Q_j\in \qset}v_j$.
We set the prices on the edges of $G$ as follows. Let $\eps'=\eps/5$.
For each path $Q_j\in \qset$, if it contains a single edge $e_i$, then we set the price of $e_i$ to be $(1-\eps')\cdot v_j$; if it contains two edges $e_i,e_{i'}$, then we set the price of both $e_i$ and $e_{i'}$ to be $(\frac{1}{2}-\eps')\cdot v_j$.
For each edges that does not belong to any path of $\qset$, we set its price to be $+\infty$.
	
We now show that the above prices will achieve worst-case welfare at least $(\frac{1}{2}-\eps')\cdot\optwel$. For a path $Q_j$ that contains a single edges $e_i$, clearly $e_i$ will be taken by some buyer (not necessarily $j$) at price $(1-\eps')\cdot v_j$, for any arriving order $\sigma$. For a path $Q_{j'}$ that contains two edges $e_{i},e_{i'}$, we notice that buyer $j'$ can afford her demand path. Thus for any order $\sigma$, at least one of $e_{i},e_{i'}$ will be sold at price $(\frac{1}{2}-\eps')\cdot v_{j'}$, otherwise buyer $j'$ must have purchased $Q_{j'}$. Hence for any order $\sigma$, the total price of the sold edges is at least $(\frac{1}{2}-\eps')\cdot\optwel$. Since all buyers have non-negative utility, the worst-case welfare is also at least $(\frac{1}{2}-\eps')\cdot\optwel\geq \frac{\optwel}{2+\eps}$.
\end{proof}

\begin{theorem}\label{thm:spider-no-tiebreak}
For any $\eps>0$, $\rationt(\spider)\leq 7/2+\eps$.
\end{theorem}

\begin{proof}[Proof of Theorem~\ref{thm:spider-no-tiebreak}]
Let $\qset$ be the set of paths in the optimal allocation. 
We define $\qset_1$ to be the set of all paths in $\qset$ that contains the center of the spider, and we define $\qset_2=\qset\setminus \qset_1$. For each path $Q\in \qset_1$, let $j_Q$ be the buyer who is allocated her demand path $Q$ in the optimal allocation. Define $G_Q$ to be the graph obtained by taking the union of all (one or two) legs whose edge sets intersect with $E(Q)$, and we denote $E_Q=E(G_Q)$. Since paths in $\qset_1$ are edge-disjoint, clearly for any $Q,Q'\in \qset_1$, $E_Q\cap E_{Q'}=\emptyset$. For any edge set $E$, let $\fset|_E\subseteq \fset$ be the sub-instance that contains all buyer $j$ where $Q_j$ has edges only in $E$. Then $\opt(G,\fset)=\sum_{Q\in\qset_1}\opt(G_Q,\fset|_{E_Q})$.

Let $e_{Q,1}$ and $e_{Q,2}$ be the two edges that are in Q and has the spider center as one endpoint.\footnote{$e_{Q,1}=e_{Q,2}$ if the spider center is one endpoint of path $Q$.}
Define $G'_Q$ to be the graph with edge set $E'_Q=\{e|e\in E_Q\setminus Q\}$. In other words, $G'_Q$ contains all edges not in $Q$ but in $E_Q$. Clearly we have $\opt(G_Q,\fset|_{E_Q})=\opt(G'_Q,\fset|_{E'_Q})+v_{j_Q}$. Define $G''_Q$ to be the graph with edge set $E''_Q=Q\setminus\{e_{Q,1},e_{Q,2}\}$; in other words, $G''_Q$ is the graph formed edges in $Q$, but excluding the center of the spider. Note that $E_Q=E'_Q\cup E''_Q\cup \{e_{Q,1},e_{Q,2}\}$. See \Cref{fig:spider_proof} for an illustration.

\begin{figure}[h]
	\centering
	\includegraphics[scale=0.12]{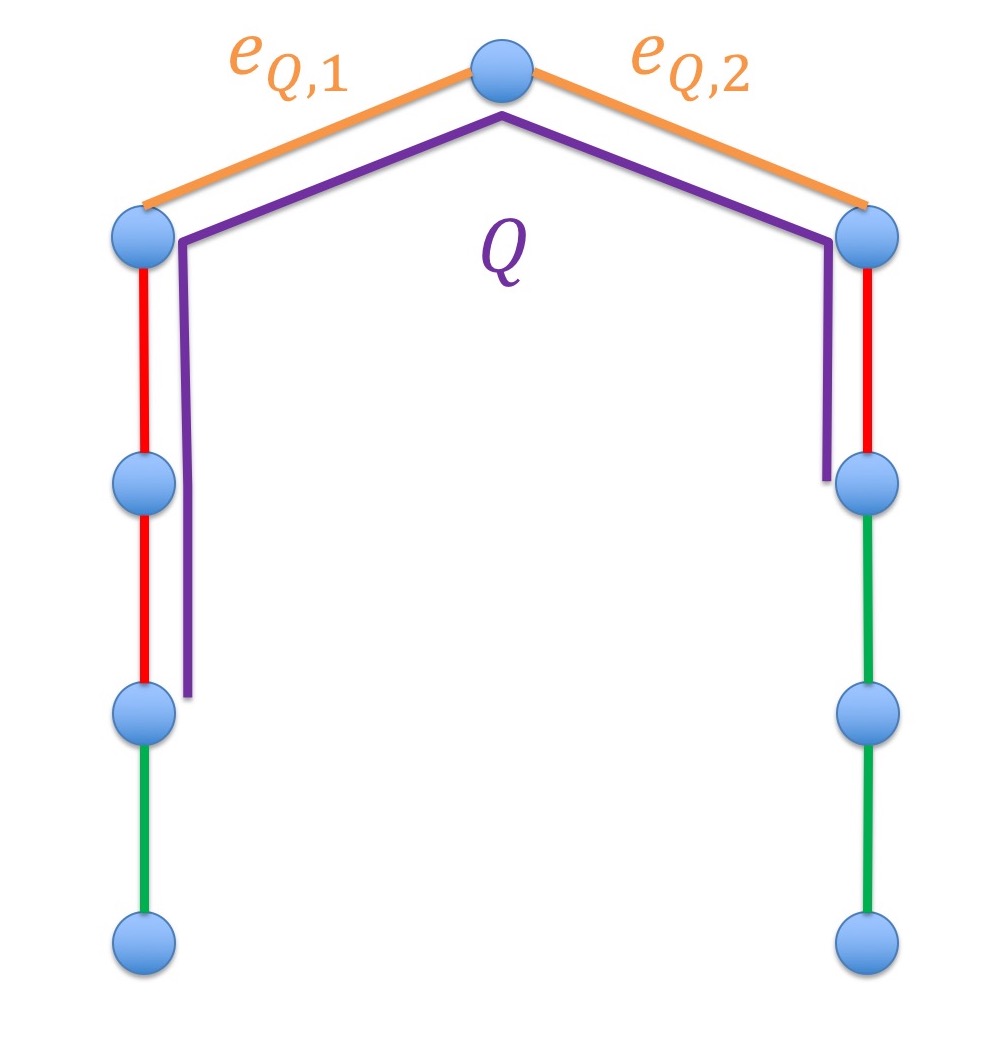}
	\caption{An illustration of edge sets. Edges of $E'_Q$ are shown in green, and edges of $E''_Q$ are shown in red.\label{fig:spider_proof}}
\end{figure}
Let $\alpha_Q=\frac{\opt(G''_Q,\fset|_{E''_Q})}{v_{j_Q}}$. Clearly $\alpha_Q\leq 1$, since $E''_Q\subseteq Q$. We will construct the price vector $p$ as follows. For every $Q\in \qset_1$, we will first construct two price vectors $p_1$ and $p_2$ on $E_Q$. Then depending on the instance, we will either choose $p(e)=p_1(e),\forall e\in E_Q$, or $p(e)=p_2(e),\forall e\in E_Q$. For every $e\not\in \cup_{Q\in \qset_1}E_Q$, we set $p(e)=+\infty$. Now fix any $Q\in \qset_1$ and any $\eps'>0$.  

\paragraph{Construction of $p_1$.} For each $e\in E'_Q$, let $p_1(e)=+\infty$. For each $e\in E''_Q$, let $p_1(e)=p^*(e)+\frac{1}{m}\eps'$, where $p^*=\{p^*(e)\}_{e\in E''_Q}$ is the set of prices from Theorem~\ref{thm:path} on $G''_Q$, whose worst-case welfare is $\opt(G''_Q,\fset|_{E''_Q})$. For $e=e_{Q,1}$ or $e_{Q,2}$, $p_1(e)=\frac{(1-\alpha_Q)v_{j_Q}}{2}-\eps'$. Then no buyer $j$ where $Q_j$ contains an edge in $E'_Q$ can afford her path. Also, we notice that since $p^*$ is the optimal solution in (LP-Dual) for instance $(G''_Q,\fset|_{E''_Q})$, every buyer $j$ where $Q_j$ has edges only in $E''_Q$ satisfies $v_j\leq p^*(Q_j)$. Thus none of them can afford her path in $p_1$, as $p_1(e)<p^*(e)$ for all $e\in E''_Q$.
	Moreover, the total price of edges in $Q$ is $\sum_{e\in E''_Q}p_1(e)+p_1(e_{Q,1})+p_1(e_{Q,2})<\opt(G''_Q,\fset|_{E''_Q})+\eps'+(1-\alpha_Q)v_{j_Q}-2\eps'<v_{j_Q}$.
	Consider any item pricing $p$ such that $p(e)=p_1(e),\forall e\in E_Q$. From the arguments above, only buyer $j$ whose $Q_j$ contains $e_{Q,1}$ or $e_{Q,2}$ may afford her path. Thus at least one of $e_{Q,1}$ and $e_{Q,2}$ must be sold under $p$ at price $\frac{(1-\alpha_Q)v_{j_Q}}{2}-\eps'$, for any buyers' arriving order $\sigma$. Otherwise, all edges in $E''_Q$ must also be unsold and buyer $j_Q$ should have purchased her path $Q$, contradiction. Thus the contributed welfare from edges in $E_Q$ is at least $\frac{(1-\alpha_Q)v_{j_Q}}{2}-\eps'$.

\paragraph{Construction of $p_2$.} For each $e\in E'_Q\cup E''_Q$, let $p_2(e)$ be the price of edge $e$ by applying Theorem~\ref{thm:path_no_tie} to $G'_Q\cup G''_Q$, which is a union of at most two path graphs. For $e=e_{Q,1}$ or $e_{Q,2}$, set $p_2(e)=+\infty$.
Then by Theorem~\ref{thm:path_no_tie},
if $p(e)=p_2(e),\forall e\in E_Q$,
the contributed welfare from edges in $E_Q$ is at least $\frac{2}{3}\opt(G'_Q\cup G''_Q,\fset|_{G'_Q\cup G''_Q})= \frac{2}{3}\opt(G'_Q,\fset|_{G'_Q})+\frac{2}{3}\opt(G''_Q,\fset|_{G''_Q})=\frac{2}{3}\opt(G'_Q,\fset|_{E'_Q})+\frac{2}{3}\alpha_Qv_{j_Q}$, for any buyers' arriving order $\sigma$. 

Now for any $Q\in \qset_1$, we choose the price with a higher contributed welfare: If $\frac{(1-\alpha_Q)v_{j_Q}}{2}-\eps'>\frac{2}{3}\opt(G'_Q,\fset)+\frac{2}{3}\alpha_Qv_{j_Q}$, we choose $p(e)=p_1(e),\forall e\in E_Q$ and choose $p(e)=p_2(e)$ otherwise. Thus the worst-case welfare of item pricing $p$ is at least
\begin{eqnarray*}
& &\sum_{Q\in \qset_1}\max\left(\frac{(1-\alpha)v_{j_Q}}{2}-\eps',\frac{2}{3}\opt(G'_Q,\fset|_{E'_Q})+\frac{2}{3}\alpha_Qv_{j_Q}\right)\\
&\geq&\sum_{Q\in \qset_1}\frac{4}{7}\left(\frac{(1-\alpha)v_{j_Q}}{2}-\eps'\right)+\frac{3}{7}\left(\frac{2}{3}\opt(G'_Q,\fset|_{E'_Q})+\frac{2}{3}\alpha_Qv_{j_Q}\right)\\
&=&\sum_{Q\in \qset_1}\left(\frac{2}{7}v_{j_Q}+\frac{2}{7}\opt(G'_Q,\fset|_{E'_Q})-\frac{4}{7}\eps'\right)\\
&=&\sum_{Q\in \qset_1}\left(\frac{2}{7}\opt(G_Q,\fset|_{E_Q})-\frac{4}{7}\eps'\right)>\frac{2}{7}\opt(G,\fset)-\frac{4m}{7}\eps'.
\end{eqnarray*}
Choosing $\eps'=\frac{\eps\cdot \opt(G,\fset)}{10m}$ finishes the proof. 
\end{proof}





\notshow{
\subsection{Proof of Theorem~\ref{thm:cycle_without_tie}}
\label{apd: proof of thm:cycle_without_tie}

\begin{proof}
First we show that $\ratio(\cycle)\geq 2-\epsilon$ for any $\epsilon>0$. We construct an instance $(G,\fset)$ such that no set of prices can achieve worst-case welfare at least $\frac{1}{2-\epsilon}\cdot\opt (G,\fset)$, when the seller has the tie-breaking power. We set integers $m=10\cdot\lceil1/\epsilon\rceil$ and $k=m/2+1$. The instance $(G,\fset)$ is defined as follows.
$G$ is a cycle with $m$ vertices. $\fset=\set{(Q_j,v_j)\mid 1\le j\le m+2}$.
Specifically, for each $1\le j\le m$, we define $Q_j=P_{j,k+j}$, and we define $Q_{m+1}=P_{1,k}$ and $Q_{m+2}=P_{k,1}$.
See \Cref{fig:cycle}.
For each $1\le j\le m$, we define the value  $v_j=k+1$.
For each $j=m+1,m+2$, we define the value $v_j=k-1$.
	\begin{figure}[h]
		\centering
		\includegraphics[width=0.3\columnwidth]{cycle_example.jpg}
		\caption{An illustration of paths $Q_1,Q_{m+1},Q_{m+2}$.\label{fig:cycle}}
	\end{figure}
In the instance there are $m$ paths with length $m/2+1$, and 2 paths with length $m/2$. It is clear that the optimal solution is $\set{Q_{m+1},Q_{m+2}}$, so $\opt (G,\fset)=m$.

We will prove that there do not exist prices that achieve worst-case welfare strictly larger than $k+1$. Let $\{p_i\}_{i\in [m]}$ be any set of prices. We consider two cases. Assume first that $\sum_{i\in [m]}p_i\ge m+1$. Since $v_{m+1}=v_{m+2}=m/2$, at least one of buyers $m+1$ and $m+2$ cannot afford its path. Note that any path in $\{Q_1,...,Q_m\}$ intersects with all other paths. Thus for any arrival order, the achieved welfare is at most $k+1$.

Assume now that $\sum_{i\in [m]}p_i< m+1$. We claim that at least one of the buyers $1,\ldots,m$ can afford her path with a strictly positive utility. It implies that the worst-case welfare is at most $k+1$, since if such a buyer comes first and takes her path, all other demand paths are blocked.

To prove the claim, we notice that each edge appears in exactly $k$ paths of $\set{Q_1,\ldots,Q_m}$, so $\sum_{1\le j\le m}p(Q_j)=k\cdot \sum_{i\in [m]}p(e)\le k(m+1)$. On the other hand, $\sum_{1\le j\le m}v_j=(k+1)m\ge k(m+1)$. Therefore, there exists some $1\le j\le m$ such that $p(Q_j)<v_j$. Thus the worst-case welfare of any item pricing is at most $k+1<\frac{m}{2-\epsilon}$, for $m=10\cdot\lceil1/\epsilon\rceil$.

Next we show that $\rationt(\cycle)\leq 2$.
Let $\qset^*$ be the subset of $\set{Q_1,\ldots,Q_n}$ that, over all subsets of edge-disjoint paths in $\set{Q_1,\ldots,Q_n}$ that maximizes its total value, maximizes its cardinality. For every $Q\in \qset^*$, let $v(Q)$ be the value of the buyer who receives demand path $Q$ in the optimal allocation $\qset^*$. So $\opt(G,\fset)=\sum_{Q\in \qset^*}v(Q)$.
We distinguish between the following cases.

\textbf{Case 1. $\bigcup_{Q\in \qset^*}Q \ne G$.} Let $e'$ be an edge of $G$ that does not belong to any path of $\qset^*$. We first set the price of $e'$ to be $p(e')=+\infty$. Since $G\setminus e$ is a path, from Theorem~\ref{thm:path_no_tie}, there exist prices $\set{p(e)}_{e\in G\setminus e'}$  that achieves the worst-case welfare $\frac{2}{3}\cdot\sum_{Q\in \qset^*}v(Q)=\frac{2}{3}\cdot \opt(G,\fset)$.

\textbf{Case 2. $\bigcup_{Q\in \qset^*}Q = G$ and $|\qset^*|\ge 4$.} Let $Q'\in \qset^*$ be the path of $\qset^*$ with minimum value.
Since $|\qset^*|\ge 4$, $\sum_{Q\in \qset^*}v(Q)\ge \frac 3 4 \cdot\opt(G,\fset)$. We first set the prices for all edges of $Q'$ to be $+\infty$. We then consider the graph $G\setminus Q'$. From the similar analysis as in Case 1, we get that there exist prices $\set{p(e)}_{e\in G\setminus Q'}$  that achieves the worst-case welfare $\frac{2}{3}\cdot\sum_{Q\in \qset\setminus\set{Q'}}v(Q)=\frac{2}{3}\cdot\frac 3 4\cdot \opt(G,\fset)=\opt(G,\fset)/2$.

\textbf{Case 3. $\bigcup_{Q\in \qset^*}Q = G$ and $|\qset^*|=2$.} Let $Q$ be the path in $\qset^*$ with largest value, so $v(Q)\ge \opt(G,\fset)/2$. We first set the prices for all edges of $G\setminus Q$ to be $+\infty$. From the choice of $\qset^*$, there does not exist a set of edge-disjoint demand paths, whose union is exactly $Q$ and total value equals $v(Q)$. From \Cref{claim:covering_implies_optimal}, there exist prices $\set{p(e)}_{e\in Q}$ that achieves the worst-case welfare $v(Q)\ge \opt(G,\fset)/2$.

\textbf{Case 4. $\bigcup_{Q\in \qset^*}Q = G$ and $|\qset^*|=3$.} Denote $\qset^*=\set{Q_1,Q_2,Q_3}$, where the endpoints for paths $Q_1$ ($Q_2$ and $Q_3$, resp.) are $a,b$ ($b,c$ and $c,a$, resp.).
If some path of $\qset^*$ has value less than $\opt(G,\fset)/4$, then we set the prices for all edges of this path to be $+\infty$, and then from the similar analysis in Case 2, there exist prices $\set{p(e)}_{e\in G\setminus Q'}$  that achieves the worst-case welfare at least $\opt(G,\fset)/2$.
Now assume that $v(Q_1),v(Q_2),v(Q_3)\ge \opt(G,\fset)/4$.
Let $Q'_1=Q_2\cup Q_3$. If there does not exist a set of two edge-disjoint demand paths, whose union is exactly $Q'_1$ and total value equals $v(Q'_1)$, then we set the prices for all edges of $Q_1$ to be $+\infty$, and then from the similar analysis in Case 3, there exist prices $\set{p(e)}_{e\in G\setminus Q'}$  that achieves the worst-case welfare $v(Q_2)+v(Q_3)\ge \opt(G,\fset)/2$.
Therefore, there exist demand paths $P_1,P'_1$ with $P_1\cup P'_1=Q_2\cup Q_3$ and $v(P_1)+v(P'_1)=v(Q_2)+v(Q_3)$.
Similarly, there exist demand paths $P_2,P'_2$ with $P_2\cup P'_2=Q_3\cup Q_1$ and $v(P_2)+v(P'_2)=v(Q_3)+v(Q_1)$; and there exist demand paths $P_3,P'_3$ with $P_3\cup P'_3=Q_1\cup Q_2$ and $v(P_3)+v(P'_3)=v(Q_1)+v(Q_2)$.
Now consider the path $P_1,P'_1,P_2,P'_2,P_3,P'_3$.
For similar reasons, the value of each of these paths is at least $\opt(G,\fset)/4$.
From the above discussion, the union of them contains each edge of $G$ exactly twice. Therefore, either $P_1\cup P'_1\cup P_2$ or $P'_2\cup P_3\cup P'_3$ is a path of $G$ with total value at least $\frac{3}{4}\cdot\opt(G,\fset)$. From similar analysis in Case 2, there exist prices $\set{p(e)}_{e\in G\setminus Q'}$  that achieves the worst-case welfare $\frac{2}{3}\cdot\frac 3 4\cdot \opt(G,\fset)=\opt(G,\fset)/2$.
\end{proof}

}

\section{Missing Details from Section~\ref{sec: general graph}}
\label{apd: missing sec: general graph}

\subsection{Proof of Theorem~\ref{thm:bounded-degree-mahua}}
\label{apd: proof of thm:bounded-degree-mahua}

Denote by $K_i$ the resulting graph after constructing the gadget, as shown in Figure~\ref{fig:gadget_after}. We use the following lemma. 
\begin{lemma}\label{lem:disjointsort}
Let $\sigma$ be any permutation on $[a]$, then graph $K_i$ contains a set $\pset_i=\set{P_{i,j}\mid j\in [a]}$ of edge-disjoint paths, such that path $P_{i,j}$ connects vertex $u_{i,j}$ to vertex $u_{i+1,\sigma(j)}$.
\end{lemma}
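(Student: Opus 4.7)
The plan is to interpret $K_i$ as an abstract adjacent-swap sorting network on $a$ slots and to invoke the subword property of the Bruhat order on $S_a$ (equivalently, the fact that a complete pseudo-line arrangement is a universal sorting network) to produce the required routing.

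First I would fix a generic planar drawing so the $\binom{a}{2}$ crossing vertices occur at distinct heights, giving a total order on them. At each crossing vertex where segments $k$ and $\ell$ meet, there are four incident edges; since every path travels monotonically from the top arc (the $u_{i,\cdot}$'s) to the bottom arc (the $u_{i+1,\cdot}$'s), the only two edge-disjoint non-backtracking configurations for the pair of paths through this vertex are \emph{cross} (each continues on its own segment, so their vertical slots swap) and \emph{swap} (each transfers to the other segment, so their slots stay fixed). The assumption that no three segments meet at a point forces the two segments involved in any crossing to be at vertically adjacent slots at that moment---otherwise a third segment between them would be forced arbitrarily close to the same crossing point. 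Hence each crossing acts as an adjacent-slot comparator. Tracing the arrangement shows that segment $s$ runs from top slot $s$ to bottom slot $a-s+1$, so the unconditional composition of all $\binom{a}{2}$ comparators is the longest permutation $w_0$; the ordered sequence of comparators is therefore a reduced expression for $w_0$ in $S_a$.

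Next I would apply the subword property: since every element of $S_a$ lies below $w_0$ in Bruhat order, any reduced expression for $w_0$ admits, for each $\sigma'\in S_a$, an ordered subword whose product equals $\sigma'$. Taking $\sigma'(j)=a-\sigma(j)+1$ and choosing the ``cross'' configuration at exactly the comparators in the chosen subword (and ``swap'' at the remaining comparators), the path starting at $u_{i,j}$ ends at slot $a-\sigma(j)+1$, which on the bottom arc is the exit point of segment $\sigma(j)$, i.e., $u_{i+1,\sigma(j)}$. Edge-disjointness is automatic because at every crossing we use either ``cross'' or ``swap'', each of which partitions its four incident edges into two disjoint paths.

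The main obstacle is the bookkeeping aligning three descriptions---physical routing, the abstract sorting network on slots, and subwords in $S_a$---and in particular absorbing the reversal between top and bottom slots via the transformation $\sigma\mapsto\sigma'$. Invoking the subword property (or, equivalently, proving universality of complete sorting networks via the $0$-$1$ principle) is the central combinatorial content of the lemma.
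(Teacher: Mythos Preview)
Your approach is correct and genuinely different from the paper's. The paper argues by induction on $a$: it deletes one segment $W_a$, applies the induction hypothesis on the smaller gadget to a modified target permutation on $[a-1]$, and then repairs the single misrouted path by swapping it with $W_a$ at one of their crossing vertices. This is elementary and entirely self-contained.

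Your argument instead recognizes $K_i$ as the wiring diagram of a reduced word for the longest element $w_0\in S_a$ and invokes the subword property of Bruhat order to realize an arbitrary permutation by activating a subset of the crossings. This is more conceptual and makes transparent \emph{why} the gadget is a universal permutation network, at the cost of importing a standard but nontrivial fact from Coxeter combinatorics (or, in your alternative phrasing, the universality of complete sorting networks). The squeeze argument you sketch for why each crossing acts on adjacent slots is exactly the standard one for pseudo-line arrangements, and your handling of the top/bottom slot reversal via $\sigma'(j)=a-\sigma(j)+1$ is correct. One minor point worth making explicit: the routed walks are simple paths because every local choice (``cross'' or ``swap'') moves strictly downward in height, so no vertex is revisited; this is implicit in your ``non-backtracking'' remark but deserves a sentence.
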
 

Before proving \Cref{lem:disjointsort}, we first give the proof of \Cref{thm:bounded-degree-mahua} using \Cref{lem:disjointsort}. 

\begin{prevproof}{Theorem}{thm:bounded-degree-mahua}
First, the graph $R_{a,b}$ is obtained by taking the union of all graphs $\set{K_i}_{0\le i\le b}$, while identifying, for each $i\in [b]$ and $j\in [a]$, the vertex $u_{i,j}$ in $K_{i-1}$ with the vertex $u_{i,j}$ in $K_{i}$.
Clearly, the maximum vertex degree in graph $R_{a,b}$ is $4$. 

We now show that we can easily convert the buyer-profile $\fset$ on graph $H_{a,b}$ into a buyer-profile $\hat\fset$ on graph $R_{a,b}$, while preserving all desired properties.
Consider first the buyers $1,\ldots,a$ in $\fset^*$. 
Let $\sigma_i$ be the identity permutation on $[a]$, for each $i\in [b]$. From \Cref{lem:disjointsort}, there exist sets $\set{\pset^*_i}_{i\in [b]}$ of edge-disjoint paths, where $\pset^*_i=\set{P^*_{i,j}\mid j\in [a]}$ for each $i\in [b]$.
We then let $\hat{\fset^*}$ contains, for each $j\in [a]$, a buyer $\hat B^*_j$ demanding the path $\hat Q_j$ with value $1$, where $\hat Q_j$ is the sequential concatenation of paths $P^*_{1,j},\ldots,P^*_{b,j}$. Clearly, paths $\set{\hat Q_j}_{j\in [a]}$ are edge-disjoint.
Consider now a set $S\subseteq [a]$ with $|S|\ge \sqrt{a}$.
Recall that in $\fset_S$ we have $2|S|$ buyers, whose demand paths cover the paths $\set{Q^{(j)}\mid j\in S}$ exactly twice.
Therefore, for each $i\in [b]$, the way that these paths connect vertices of $U_{i-1}=\set{u_{i-1,j}\mid j\in [a]}$ to vertices of $U_{i}=\set{u_{i,j}\mid j\in [a]}$ form two perfect matchings between vertices of $U_{i-1}$ and vertices of $U_{i}$.
From \Cref{lem:disjointsort}, there are two sets $\pset_i,\pset'_i$ of edge-disjoint paths connecting vertices of $U_{i-1}$ and vertices of $U_{i}$. We then define, for each demand path $Q=(j_0, j_1,\ldots,j_b)$\footnote{We can arbitrarily assign additionally the path $Q$ with some index $j_0\in S$, such that each index of $S$ is assigned to exactly two demand paths in $\dset_S$.} in $\fset_S$, its corresponding path $\hat Q$ to be the sequential concatenation of, the corresponding path in $\pset_1\cup\pset'_1$ that connects $u_{0,j_0}$ to $u_{1,j_1}$, the corresponding path in $\pset_2\cup\pset'_2$ that connects $u_{1,j_1}$ to $u_{2,j_2}$, all the way to the corresponding path in $\pset_b\cup\pset'_b$ that connects $u_{b-1,j_b-1}$ to $u_{b,j_b}$. It is easy to verify that all desired properties are still satisfied.
Lastly, to ensure that the graph $R_{a,b}$ can be embedded into the $(\sqrt{m}\times\sqrt{m})$-grid, we need $\sqrt{m}=a^2b$, where $b=a+3a^3$. Thus $a=\Theta(m^{1/10})$.
\Cref{thm:bounded-degree-mahua} now follows from Lemma~\ref{lem:ratio-Lab}. 
\end{prevproof}

It remains to prove \Cref{lem:disjointsort}.

\begin{proof}[Proof of Lemma~\ref{lem:disjointsort}]
We prove by induction on $a$. The base case where $a=1$ is trivial. Assume that the lemma is true for all integers $a\le r-1$. Consider the case where $a=r$.
For brevity of notations, we rename $K_i$ by $K$, $\pset_i$ by $\pset$, $u_{i,j}$ by $u_{j}$ and $u_{i+1,j}$ by $u'_{j}$.
Recall that the graph $K$ is the union of $r$ paths $W_1,\ldots,W_{r}$, where path $W_j$ connects vertex $u_{j}$ to vertex $u'_{r+1-j}$, such that every pair of these paths intersect at a distinct vertex.
Recall that we are also given a permutation $\sigma$ on $[r]$, and we are required to find a set $\pset=\set{P_j\mid j\in [r]}$ of edge-disjoint paths in $K$, such that the path $P_j$ connects $u_j$ to $u'_{\sigma(j)}$.

We first define the graph $K'=K\setminus W_{r}$, and we define an one-to-one mapping $f:[r\!-\!1]\to \set{2,\ldots,r}$ as follows. For each $j\in [r\!-\!1]$ such that $\sigma(j)\in \set{2,\ldots,r}$, we set $f(j)=\sigma(j)$; for $j\in [r\!-\!1]$ such that $\sigma(j)=1$, we set $f(j)=\sigma(r)$.
Note that $K'$ is a graph consisting of $(r\!-\!1)$ pairwise intersecting paths, and $f$ is a one-to-one mapping from the left set of vertices of $K'$ to the right set of vertices of $K'$.
From the induction hypothesis, there is a set $\pset'=\set{P'_j\mid j\in [r\!-\!1]}$ of edge-disjoint paths, such that path $P'_j$ connects $u_j$ to $u'_{f(j)}$.
If $\sigma(r)=1$, then we simply let $\pset=\pset'\cup \set{W_r}$, and it is easy to check that the set $\pset$ of paths satisfy the desired properties.
Assume now that $\sigma(r)\ne 1$, then the path $P'_{\sigma^{-1}(1)}$ is currently connecting $u_{\sigma^{-1}(1)}$ to $u'_{\sigma(r)}$, which is a wrong destination.
Observe that $W_r$ connects $u_r$ to $u'_1$ and is edge-disjoint with $P'_{\sigma^{-1}(1)}$. Moreover, it is easy to see that $P'_{\sigma^{-1}(a)}$ must intersect with $W_r$ at at least one vertex. Let $x$ be the intersection that is closest to $u_r$ on $W_r$. We then define the path $P_r$ as the concatenation of (i) the subpath of $W_k$ between $u_r$ and $x$, and (ii) the subpath of $P'_{\sigma^{-1}(1)}$ between $x$ and $u'_{\sigma(r)}$. Similarly, we define the path $P_{\sigma^{-1}(1)}$ as the concatenation of (i) the subpath of $P'_{\sigma^{-1}(1)}$ between $u_{\sigma^{-1}(1)}$ and $x$, and (ii) the subpath of $W_r$ between $x$ and $u'_1$. Then $P_r$ routes $u_r$ to $u'_\sigma(r)$, while $P_{\sigma^{-1}(1)}$ routes $u_{\sigma^{-1}(1)}$ to $u'_1$ correctly.
We then let $\pset=(\pset'\setminus \set{P'_{\sigma^{-1}(1)}})\cup \set{P_r,P_{\sigma^{-1}(1)}}$, and it is easy to check that the set $\pset$ of paths satisfy the desired properties.
\end{proof}

\subsection{Proof of \Cref{thm:general_graph}}
\label{apd: proof of thm:general_graph}

\begin{proof}

Choose parameter $\alpha=1/10$. Denote $\fset=\set{(Q_j,v_j)}_{j\in [n]}$.
Let $\qset=\set{Q_{j_1},\ldots,Q_{j_t}}$ be the independent set of paths that, among all independent subsets of $\set{Q_1,\ldots,Q_n}$, maximizes its total value, namely $\qset=\arg_{\tilde \qset : \tilde \qset \text{ is independent}}\max\set{\sum_{j:Q_j\in \tilde\qset}v_j}$. For every $Q\in \qset$, we denote $v(Q)$ the value of the buyer who is allocated her demand path $Q$ in the optimal allocation $\qset$. We denote $J=\set{j_1,\ldots,j_t}$, so $\optwel(G,\fset)=\sum_{j\in J}v_j$. 
We first pre-process the set $\qset$ as follows.

For each integer $1\le k\le \log m$, we denote by $\qset_{k}$ the subset of paths in $\qset$ that whose length lies in the interval $[2^{k-1}, 2^k)$. Clearly, there exists some integer $k^*$ with $v(\qset_{k^*})=\sum_{Q\in \qset_{k^*}}v(Q)\ge \Omega(\optwel/\log m)$. We denote $\qset^*=\qset_{k^*}$ and denote $L=2^{k^*-1}$, so the length of each path in $\qset^*$ lies in $[L,2L)$.

Let $v^*$ be the maximum value of a path in $\qset^*$, namely $v^*=\max\set{v_j\mid Q_j\in \qset^*}$. We let $\qset^*_0$ contains all paths in $\qset^*$ with value at most $v^*/m^2$. Then, for each integer $1\le t\le 2\log n$, we denote by $\qset^*_{t}$ the subset of paths in $\qset^*$ whose length lie in the interval $(v^*/2^{t}, v^*/2^{t-1}]$.
Clearly, the total value of the paths in $\qset^*_0$ is at most $m\cdot (v^*/m^2)= v^*/m$ (since $|\qset^*_0|\le |\qset^*|\le m$). Therefore, there exists some integer $t^*$ with $v(\qset^*_{t^*})=\sum_{Q\in \qset^*_{t^*}}v(Q)\ge \Omega(v(\qset^*)/\log m)$. We denote $\qset'=\qset^*_{t^*}$ and denote $\hat v=v^*/2^{t^*}$, so the value of each path of $\qset'$ lies in $(\hat v,2\hat v]$. 

So far we obtain a set $\qset'$ of paths, and two parameters $L,\hat v$, such that 
\begin{enumerate}
\item all paths in $\qset'$ have length in $[L, 2L)$; 
\item all paths in $\qset'$ have value in $(\hat v, 2\hat v]$;
\item the total value of all paths in $\qset'$ is $\Omega(\optwel/\log^2 m)$.
\end{enumerate}

We use the following observations.
\begin{observation}
\label{obs:few_paths}
If $|\qset'|<m^{1/2-\alpha}$, then there exist prices on edges of $G$ achieving worst-case welfare $\Omega(\optwel/(m^{1/2-\alpha}\cdot\log^2m))$.
\end{observation}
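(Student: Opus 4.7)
\textbf{Proof plan for Observation~\ref{obs:few_paths}.} The proof is a straightforward pigeonhole plus reduction to the path case. By the pre-processing, the total value of paths in $\qset'$ is $\Omega(\optwel/\log^2 m)$, so if $|\qset'|<m^{1/2-\alpha}$ then by averaging there exists a single path $Q^*\in\qset'$ with value $v(Q^*)\ge \Omega(\optwel/(m^{1/2-\alpha}\cdot \log^2 m))$.

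The plan is then to essentially throw away every edge of $G$ that does not lie on $Q^*$ and reduce to the highway (single path) case. Concretely, we set $p_i=+\infty$ for every edge $i\in E(G)\setminus E(Q^*)$. Under this price vector, the only buyers who can afford any demand path are those $j$ with $Q_j\subseteq Q^*$, so the effective instance is a tollbooth instance on the single path $Q^*$. On the edges of $Q^*$ we invoke Theorem~\ref{thm:path}: since $\ratio(\pathh)=1$, there is an item pricing on $E(Q^*)$ (computable efficiently by Lemma~\ref{lem:rounding} applied to the path sub-instance) whose worst-case welfare equals the hindsight optimum of this restricted instance.

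Finally, we lower bound that hindsight optimum. The single buyer whose demand is $Q^*$ itself (this buyer exists because $Q^*\in\qset'\subseteq \qset$, and $\qset$ was obtained from the given buyer profile) witnesses a feasible allocation of value $v(Q^*)$ on the path $Q^*$, so the restricted optimum is at least $v(Q^*)\ge \Omega(\optwel/(m^{1/2-\alpha}\cdot \log^2 m))$. Combining with the pricing guarantee above yields the desired worst-case welfare bound.

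There is no real obstacle here — the only thing to be slightly careful about is that the reduction genuinely goes through (i.e.\ that pricing edges outside $Q^*$ at $+\infty$ makes any remaining buyer's affordable demand lie entirely on $Q^*$, so Theorem~\ref{thm:path} applies verbatim), and that $v(Q^*)$ is a valid lower bound on the restricted optimum, both of which are immediate.
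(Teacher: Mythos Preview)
Your proposal is correct and essentially identical to the paper's own proof: both pick a maximum-value path $Q^*\in\qset'$ (the averaging bound coincides with taking the max), price all edges outside $Q^*$ at $+\infty$, and invoke Theorem~\ref{thm:path} on the resulting highway instance, which has optimum at least $v(Q^*)=\Omega(\optwel/(m^{1/2-\alpha}\log^2 m))$.
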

\begin{proof}
Let $Q$ be a path in $\qset'$ with largest value, so $v(Q)\ge v(\qset')/m^{1/2-\alpha}$. 
We first set the price of each edge of $E(G\setminus Q)$ to be $+\infty$. 
From Theorem~\ref{sec:pathnt}, we know that there is a set of prices on edges of $Q$, that achieves the worst-case welfare $\Omega(v(Q))$. 
Therefore, we obtain a set of prices for all edges in the graph, that achieves worst-case welfare $\Omega(v(Q))=\Omega(v(\qset')/m^{1/2-\alpha})=\Omega(\optwel/(m^{1/2-\alpha}\cdot\log^2m))$.
\end{proof}

\begin{observation}
\label{obs:small_lengths}
If $L < m^{1/2-\alpha}$, then there exist prices on edges of $G$ achieving worst-case welfare $\Omega(\optwel/(m^{1/2-\alpha}\cdot\log^2m))$.
\end{observation}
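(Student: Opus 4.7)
The plan is to price every edge uniformly at $c = \hat v/(2L)$ and then charge welfare to the number of edges that end up sold, using the edge-disjointness of $\qset'$ to cover both the case when many intended buyers in $\qset'$ succeed and the case when many of them are blocked.

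First I would verify that with this uniform price, each buyer in $\qset'$ strictly wants to buy whenever her path is available: if $Q_j\in \qset'$ has length $\ell_j\in [L,2L)$ and value $v_j\in (\hat v,2\hat v]$, then $p(Q_j)=c\ell_j < c\cdot 2L=\hat v<v_j$. So tie-breaking never arises for buyers in $\qset'$, and any such buyer whose demand path has all edges still available at the time she arrives will take it. Next I would observe the generic lower bound on welfare: if $\bset$ denotes the set of buyers who bought their demand paths in the end, then for every $j\in \bset$ we have $v_j\geq p(Q_j)=c\ell_j$, so
\[
\text{welfare} \;=\; \sum_{j\in \bset} v_j \;\geq\; c\cdot |\text{edges sold}|.
\]

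Now I would lower-bound the number of edges sold by a case analysis on how paths of $\qset'$ are served. Let $X$ be the number of paths $Q\in \qset'$ whose intended buyer ends up in $\bset$ and let $Y=|\qset'|-X$. The $X$ successful paths contribute at least $XL$ to the edges-sold count (each has length $\geq L$). For each of the remaining $Y$ paths $Q\in \qset'$, the only reason the intended buyer failed is that some edge of $Q$ was already bought by another buyer in $\bset$; since the paths in $\qset'$ are pairwise edge-disjoint, the $Y$ blocking edges (one per failed path) are all distinct. Therefore $|\text{edges sold}|\geq XL + Y$ and
\[
\text{welfare}\;\geq\; c(XL+Y)\;=\;\frac{\hat v}{2L}\bigl(XL+Y\bigr)\;=\;\frac{X\hat v}{2}+\frac{Y\hat v}{2L}\;\geq\; \frac{(X+Y)\hat v}{2L}\;=\;\frac{|\qset'|\hat v}{2L}.
\]
Since every path in $\qset'$ has value at most $2\hat v$, we have $v(\qset')\leq 2\hat v\,|\qset'|$, so the welfare is at least $v(\qset')/(4L)$. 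Combining with $v(\qset')=\Omega(\opt/\log^2 m)$ and the hypothesis $L<m^{1/2-\alpha}$ gives welfare $\Omega(\opt/(m^{1/2-\alpha}\log^2 m))$.

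There is essentially no hard step here; the only point to be a little careful about is that a ``blocking'' edge might itself lie outside $\bigcup_{Q\in \qset'}Q$ in principle, but since we only use that the intended buyer's path has some edge taken by a buyer in $\bset$ (regardless of where else that buyer's path lies), the counting argument above still works. Also, the seller's tie-breaking power is not actually needed for $\qset'$ (all the relevant inequalities are strict), so the same proof will also go through in the no-tie-breaking setting with the same bound.
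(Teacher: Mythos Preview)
Your proposal is correct and follows essentially the same approach as the paper: price edges so that every buyer in $\qset'$ can strictly afford her path, then argue that at least one edge of each $Q\in\qset'$ must end up sold, yielding welfare $\Omega(v(\qset')/L)$. The only differences are cosmetic---the paper uses the path-dependent price $v(Q)/(2L)$ on edges of $\qset'$ (and $+\infty$ elsewhere) and skips your $X$/$Y$ case split, getting $v(\qset')/(2L)$ in one line; your aside about a blocking edge possibly lying outside $\bigcup_{Q\in\qset'}Q$ is moot, since by definition the blocking edge lies on the failed path $Q\in\qset'$.
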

\begin{proof}
We define the edge prices as the following.
For each path $Q\in \qset'$ and for each edge $e\in Q$, we set its price $\tilde p(e)$ to be $v(Q)/2L$, and we set the price for all other edges to be $+\infty$.
Note that, for each $Q\in \qset'$, $\tilde p(Q)=\sum_{e\in Q}\tilde p(e) =|E(Q)|\cdot(v(Q)/2L)< v(Q)$.
Therefore, no matter what order in which the buyers come, at the end of the selling process, for each $Q\in \qset'$, at least one edge is taken at the price of $v(Q)/2L$. It follows that the welfare is at least 
$\sum_{Q\in \qset'} v(Q)/(2L)=\Omega(v(\qset')/m^{1/2-\alpha})=\Omega(\optwel/(m^{1/2-\alpha}\cdot\log^2m))$.
\end{proof}

From the above observations, we only need to consider the case where $|\qset'|\ge m^{1/2-\alpha}$ and $L\ge m^{1/2-\alpha}$. Since $|\qset'|\cdot L\le \sum_{Q\in \qset'}|E(Q)|\le m$, we get that $|\qset'|, L\le m^{1/2+\alpha}$. We can also assume without loss of generality that $\hat v=1$, namely each path of $\qset'$ has value in $(1,2]$. We now perform the following steps.

\paragraph{Step 1. Construct a random subset $\qset''$ of $\qset'$.} 
Let $\qset''$ be a subset of $\qset'$ obtained by including each path $Q\in \qset'$ independently with probability $1/2$. Then we set the price for each edge that is not contained in any path of $\qset''$ to be $+\infty$.
From Chernoff's bound, with high probability $v(\qset'')\ge v(\qset')/3$, so $v(\qset'')\ge \Omega(\optwel/\log^2 m)$. Let $\tilde\qset$ be the set of all paths in $\set{Q_1,\ldots,Q_n}$ that do not contain an edge whose price is $+\infty$, namely the set $\tilde\qset$ contains all paths that survive the current price. Here we say a path $Q$ survives a price $p$, if $\sum_{e\in E(Q)}p(e)<v(Q)$, i.e. the buyer's value is higher than the price of her demand path.
Note that the set $\tilde\qset$ may contain paths of any length.
We use the following observation.
\begin{observation}
\label{obs:touching_too_many_opt_paths}
With high probability, all paths in $\tilde\qset$ intersect at most $2\log n$ paths in $\qset''$.
\end{observation}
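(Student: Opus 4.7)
The plan is to prove the observation by a direct union-bound argument, exploiting the fact that the paths in $\qset'$ (and hence in $\qset''$) are pairwise edge-disjoint, so the sampling event for different covering paths is independent.

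First, I would fix any path $Q_j \in \{Q_1,\ldots,Q_n\}$ and consider the set $\qset'_j = \{Q \in \qset' : E(Q) \cap E(Q_j) \neq \emptyset\}$ of paths in $\qset'$ that share an edge with $Q_j$. The key structural observation is that the only edges that receive a finite price after Step~1 are those covered by some path of $\qset''$. Consequently, $Q_j$ lies in $\tilde\qset$ (i.e.\ survives the price) only if every edge of $Q_j$ is contained in some path of $\qset''$. Since $\qset' \supseteq \qset''$ is a set of edge-disjoint paths, each edge of $Q_j$ is covered by at most one path of $\qset'$, so the event ``$Q_j \in \tilde\qset$'' forces every path in $\qset'_j$ to be chosen for $\qset''$. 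In particular, whenever $Q_j \in \tilde\qset$, the number of paths in $\qset''$ that intersect $Q_j$ equals $|\qset'_j|$.

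Next, since each path of $\qset'$ is included in $\qset''$ independently with probability $1/2$, for any $Q_j$ with $|\qset'_j| > 2\log n$ we have
\[
\Pr[Q_j \in \tilde\qset] \;\leq\; \Pr[\qset'_j \subseteq \qset''] \;=\; (1/2)^{|\qset'_j|} \;<\; \frac{1}{n^2}.
\]
Taking a union bound over all $n$ buyers, the probability that there exists some $Q_j \in \tilde\qset$ intersecting more than $2\log n$ paths of $\qset''$ is at most $n \cdot n^{-2} = 1/n$. Hence with probability at least $1 - 1/n$, every path of $\tilde\qset$ intersects at most $2\log n$ paths of $\qset''$, which is the claim.

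There is essentially no obstacle here: the step that has to be handled carefully is identifying precisely the equivalence ``$Q_j \in \tilde\qset$ iff all edges of $Q_j$ are covered by $\qset''$ iff $\qset'_j \subseteq \qset''$,'' which relies on the edge-disjointness of $\qset'$ to avoid any double-counting. Once that structural point is pinned down, the probability estimate and union bound are immediate.
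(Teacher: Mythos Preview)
Your proof is correct and follows essentially the same approach as the paper's own argument: both bound, via edge-disjointness of $\qset'$, the probability that a fixed $Q_j$ survives into $\tilde\qset$ when it intersects more than $2\log n$ paths of $\qset'$, and then take a union bound over the $n$ buyers. Your write-up is in fact a bit more explicit than the paper's in spelling out why $Q_j\in\tilde\qset$ forces $\qset'_j\subseteq\qset''$ (the paper leaves this implicit), but the idea is identical.
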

\begin{proof}
For each path $Q$, if it intersects with at least $2\log n$ paths in $\qset'$, then the probability that it belongs to $\tilde\qset$ is at most $(1/2)^{2\log n}=1/n^2$. From the union bound, the probability that there exists a path of $\set{Q_1,\ldots,Q_n}$ that intersects at least $2\log n$ paths in $\qset'$ and is still contained in $\tilde\qset$ is at most $n\cdot(1/n^2)=1/n$. Observation~\ref{obs:touching_too_many_opt_paths} then follows.
\end{proof}

\paragraph{Step 2. Analyze a special set of short surviving paths.} 
We set $L'=m^{1/4-\alpha/2}$, and let $\tilde\qset'$ contains all paths in $\tilde\qset$ with length less than $L'$. Clearly, $\tilde\qset'\cap\qset''=\emptyset$.
Let $\hat\qset$ be a max-total-value independent subset of $\tilde\qset'$.
We distinguish between the following two cases on the value of $v(\hat\qset)$.

\textbf{Case 1. $v(\hat\qset)\ge v(\qset'')/100L'$.}
We will show that in this case, there exist prices on edges of $G$ that achieve worst-case welfare $\Omega(\optwel/(m^{1/2-\alpha}\cdot\log^2m))$. 
We define the edge prices as follows.
For each path $Q\in \hat\qset$ and for each edge of $Q$, we set its price to be $v(Q)/L'$, and we set the price for all other edges to be $+\infty$.
Similar to Observation~\ref{obs:small_lengths}, the worst-case welfare is at least 
$\sum_{Q\in \hat\qset} v(Q)/L'=\Omega(v(\hat\qset)/L')=
\Omega(v(\qset'')/L'^2)=
\Omega(\optwel/(m^{1/2-\alpha}\cdot\log^2m))$.

\textbf{Case 2. $v(\hat\qset)\le v(\qset'')/100L'$.}
We set the prices of the edges in two stages.
In the first stage, for each path $Q\in\hat\qset$ and for each edge in $Q$, we set its price $p(e)$ to be $v(Q)$, and for all other edges that belong to some path of $\qset''$, we set its price to be $0$.
Recall that we have already set the price for all edges that do not belong to a path of $\qset''$ to be $+\infty$.
We prove the following observations.
\begin{observation}
No path in $\tilde \qset'$ survives price $p$.
\end{observation}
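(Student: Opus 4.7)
The plan is a short case analysis on whether $Q\in\hat\qset$, leveraging a local exchange argument that exploits the maximality of $\hat\qset$ as a maximum-total-value edge-disjoint subset of $\tilde\qset'$.

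First, I would observe that since $Q\in \tilde\qset'\subseteq\tilde\qset$, every edge of $Q$ lies in some path of $\qset''$, so none of its edges has been assigned the $+\infty$ price. Hence each edge $e\in E(Q)$ carries a Stage~1 price of $v(Q')$, where $Q'$ is the unique path of $\hat\qset$ containing $e$ (unique because $\hat\qset$ is edge-disjoint), or $0$ otherwise.

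If $Q\in\hat\qset$, then every edge of $Q$ has price $v(Q)$, so trivially $p(Q)\ge v(Q)$. Otherwise, I would define $S=\{Q'\in\hat\qset: E(Q')\cap E(Q)\neq \emptyset\}$. Since $Q$ has length less than $L'$, it lies in $\tilde\qset'$, and the set $(\hat\qset\setminus S)\cup\{Q\}$ is again an edge-disjoint subfamily of $\tilde\qset'$. By maximality of $\hat\qset$,
\[
v(\hat\qset)\ \ge\ v(\hat\qset)-v(S)+v(Q),
\]
so $v(Q)\le v(S)=\sum_{Q'\in S}v(Q')$. On the other hand, each $Q'\in S$ contributes at least one edge of price $v(Q')$ to $p(Q)$, hence $p(Q)\ge \sum_{Q'\in S} v(Q')=v(S)\ge v(Q)$, and $Q$ does not survive price $p$.

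The only delicate point is the exchange step: verifying that the swapped family $(\hat\qset\setminus S)\cup\{Q\}$ is still edge-disjoint and still contained in $\tilde\qset'$, so that the max-value property of $\hat\qset$ yields the key inequality $v(Q)\le v(S)$. Beyond that, the argument is just carefully summing edge prices using the fact that the paths of $\hat\qset$ themselves are pairwise edge-disjoint (so each edge of $Q$ is charged at most once by Stage~1), together with the observation that the $+\infty$ prices never enter the sum for a path $Q\in\tilde\qset'$.
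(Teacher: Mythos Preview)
Your proposal is correct and follows essentially the same exchange argument as the paper: define the set $S$ (the paper calls it $\hat\qset'$) of paths in $\hat\qset$ intersecting $Q$, observe that $p(Q)\ge v(S)$ because each $Q'\in S$ contributes at least one edge of price $v(Q')$ and the paths of $\hat\qset$ are edge-disjoint, and then use the maximality of $\hat\qset$ on the swapped family $(\hat\qset\setminus S)\cup\{Q\}$ to conclude $v(Q)\le v(S)$. The only cosmetic differences are that the paper phrases it as a proof by contradiction (assuming $Q$ survives gives the strict inequality $v(Q)>v(S)$ directly) and does not split off the case $Q\in\hat\qset$, which is subsumed by the same computation.
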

\begin{proof}
Assume by contradiction that there is a path $Q$ with length $|E(Q)|< L'$ that survives the price $p$. Let $\hat\qset'$ be the set of paths in $\hat\qset$ that intersect $Q$. From the definition, $v(Q)>\sum_{e\in E(Q)}p(e)\ge v(\hat\qset')$. Consider the set $(\hat\qset\setminus \hat\qset')\cup\set{Q}$. From the above discussion, this set is an independent set of paths with length less than $L'$, with total value at least the total value of $\hat\qset$, while containing strictly less paths than $\hat\qset$. This leads to a contradiction to the optimality of $\hat\qset$.
\end{proof}

\begin{observation}
\label{obs: surviving value large}
The total value of all paths in $\qset''$ that survives $p$ is at least $0.99\cdot v(\qset'')$.
\end{observation}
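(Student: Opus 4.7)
\textbf{Proof Proposal for Observation~\ref{obs: surviving value large}.} The plan is to bound the total value of paths in $\qset''$ that fail to survive $p$ by a double-counting argument, exploiting that paths in $\qset''$ are edge-disjoint (being a subset of the independent set $\qset$), that paths in $\hat\qset$ are also edge-disjoint, and that each path in $\hat\qset$ has length less than $L'$. The Case~2 hypothesis $v(\hat\qset)\le v(\qset'')/(100L')$ then provides exactly the slack needed to arrive at the $1/100$ bound.

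First, I would set up notation: let $B\subseteq\qset''$ denote the set of paths in $\qset''$ that do \emph{not} survive $p$. By definition, for every $Q\in B$ we have $v(Q)\le p(Q)=\sum_{e\in E(Q)}p(e)$. Since $p$ assigns value $v(Q')$ to every edge of $Q'\in\hat\qset$ and $0$ to every other edge of $\bigcup_{Q\in\qset''}E(Q)$ (edges outside these paths are priced $+\infty$, but edges of $Q\in B\subseteq\qset''$ are never of this type), we can rewrite
\[
p(Q)=\sum_{Q'\in\hat\qset}|E(Q)\cap E(Q')|\cdot v(Q').
\]

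Next, summing the inequality $v(Q)\le p(Q)$ over $Q\in B$ and swapping the order of summation yields
\[
\sum_{Q\in B}v(Q)\le \sum_{Q'\in\hat\qset}v(Q')\cdot \sum_{Q\in B}|E(Q)\cap E(Q')|.
\]
The key step is now to bound the inner sum: since $\qset''\subseteq\qset$ is a set of edge-disjoint paths, for each fixed $Q'\in\hat\qset$ the edges of $Q'$ can be distributed among at most $|E(Q')|$ distinct paths of $B$, so $\sum_{Q\in B}|E(Q)\cap E(Q')|\le |E(Q')|<L'$ by the definition of $\tilde\qset'$. Substituting gives
\[
\sum_{Q\in B}v(Q)\le L'\cdot v(\hat\qset)\le L'\cdot\frac{v(\qset'')}{100L'}=\frac{v(\qset'')}{100},
\]
where the last inequality invokes the Case~2 hypothesis. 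Therefore the paths of $\qset''$ surviving $p$ have total value at least $v(\qset'')-v(\qset'')/100=0.99\cdot v(\qset'')$, as required.

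The only subtlety, and the one worth double-checking when writing this out carefully, is that the two edge-disjointness facts (within $\qset''$ and within $\hat\qset$) are used in different places: edge-disjointness of $\hat\qset$ ensures that $p$ is well-defined (no edge receives two conflicting prices), while edge-disjointness of $\qset''$ is what converts $\sum_{Q\in B}|E(Q)\cap E(Q')|$ into $|E(Q')|$. Both are immediate from the constructions in Step~1 and Step~2, so no new ingredient is needed.
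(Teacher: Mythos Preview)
Your argument is correct and follows the same approach as the paper: bound the total value of non-surviving paths in $\qset''$ by the total price mass, then use edge-disjointness of $\qset''$ together with $|E(Q')|<L'$ for $Q'\in\hat\qset$ and the Case~2 hypothesis to obtain the $v(\qset'')/100$ bound. The paper compresses your double-counting into the single chain $\sum_{Q\in B}v(Q)\le\sum_{e\in E(\qset'')}p(e)\le\sum_{Q'\in\hat\qset}L'\cdot v(Q')\le L'\cdot v(\hat\qset)\le v(\qset'')/100$, but the content is identical.
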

\begin{proof}
For each path $Q\in\qset''$ that does not survive the price $p$, its value is below $\sum_{e\in E(Q)}p(e)$. Since paths in $\qset''$ are edge-disjoint, the total value of paths in $\qset''$ that do not survive the price $p$ is at most $\sum_{e\in E(\qset'')}p(e)\le \sum_{Q\in \hat\qset}L'\cdot v(Q)\le L'\cdot v(\hat \qset)\le v(\qset'')/100$. Observation~\ref{obs: surviving value large} then follows.
\end{proof}

We now modify the price $p$ in the first stage as the following. For each path $Q\in \qset''$ and for each edge $e\in Q$, we increase its price by $v(Q)/4L$. Note that the prices of all other edges are already set to be $+\infty$ before the first stage. This completes the definition of the prices on edges. We now show that these prices will achieve worst-case welfare $\Omega(\optwel/(m^{1/2-\alpha}\cdot\log^2m\cdot\log n))$, thus completing the proof of \Cref{thm:general_graph}.

First, since the total price on all edges of $\qset''$ is at most $v(\qset'')/100+\sum_{Q\in \qset''}2L\cdot(v(Q)/4L)\le 0.51\cdot v(\qset'')$, the set of paths in $\qset''$ that survives the ultimate price has total value at least $0.49\cdot v(\qset'')$.
We denote this set by $\hat{\qset''}$.
Since all path in $\qset''$ has length $[L,2L)$ and value $(\hat v,2\hat v]$, $|\hat\qset''|\ge 0.49|\qset''|/4\ge 0.1|\qset''|$.
Denote by $\bar\qset$ the resulting set of paths that is being taken in the selling process. It is clear that 
\begin{itemize}
\item each path $Q'\in \bar\qset$ has length at least $L'$;
\item for each path $Q\in\hat\qset''$, there is a path $Q'\in \bar\qset$ that intersects $Q$; and
\item each path $Q'\in \bar\qset$ intersects at most $2\log n$ paths in $\hat\qset''$.
\end{itemize}
Altogether, we get that $|\bar{\qset}|\ge |\hat\qset''|/(2\log n)$, the length of each path in $\bar{\qset}$ is at least $L'$, and each edge of $E(\bar{\qset})$ has price at least $\min_{e\in E(\hat\qset'')}\set{p(e)}$. Therefore, the total value of paths in $\bar{\qset}$ is at least 
\[\frac{|\hat\qset''|}{2\log n}\cdot L'\cdot \min_{e\in E(\hat\qset'')}\set{p(e)}\ge \frac{|\hat\qset''|}{2\log n}\cdot L'\cdot \Omega\left(\frac{v(\qset'')}{|\qset''|\cdot L}\right)
=\Omega\left(\frac{v(\qset'')\cdot L'}{L\cdot \log n}\right)
\ge\Omega\left(\frac{\optwel}{m^{1/2-\alpha}\cdot\log^2m\log n}
\right),\]
where the last inequality is obtained by combining $L\le m^{1/2+\alpha}$, $L'=m^{1/4-\alpha/2}$ and $\alpha=1/10$.

\end{proof}

\section{Missing Details in Section~\ref{sec:congestion}}
\label{apd: missing sec:congestion}

\subsection{Proof of Theorem~\ref{thm: congestion_c}}

\begin{proof}
Let $\qset=\set{Q_{j_1},\ldots,Q_{j_t}}$ be an independent set with maximum total value.
Denote $J=\set{j_1,\ldots,j_t}$, so $\optwel(U,\fset)=\sum_{j\in J}v_j$. Denote $v=\optwel(U,\fset)$.

$U'=\bigcup_{Q\in \qset}Q$, so $U'\subseteq U$ and $|U'|\le m$.
Note that $v=\sum_{j:Q_j\in \qset}v_j$.
We now define the prices on items of $U$ as follows.
For each element $i\notin U'$, we define $p^1(i)=\ldots=p^c(i)=+\infty$.
For each element $i\in U'$ and for each $1\le k\le c$, we define $p^k(i)=\frac{1}{2m}\cdot m^{(k-1)/c}$.

We now show that these prices achieve worst-case welfare $\Omega(v/m^{1/c})$.
Consider the remaining items of $U$ at the end of the selling process.
Clearly, all copies of items in $U\setminus U'$ must be unsold. Fix any buyers' order. We distinguish between the following cases.

\textbf{Case 1. Some element of $U'$ is sold out.} From the definition of prices in $\set{p^{k}(i)\mid i\in U, 1\le k\le c}$, for each $i\in U'$, $p^c(i)=\frac{v}{2}\cdot m^{-1/c}$. Therefore, if all $c$ copies of the item $i$ are sold, then the $c$-th copy is sold at price $p^c(i)=\frac{v}{2}\cdot m^{-1/c}$. It follows that the welfare is $\frac{v}{2}\cdot m^{-1/c}=\Omega(v\cdot m^{-1/c})$.

\textbf{Case 2. No elements of $U'$ are sold out.}
For each $i\in U'$, we define $\tilde p(i)=p^{k+1}(i)$ iff $k$ copies of $i$ are sold. Since no elements of $U'$ are sold out, $\tilde p(i)$ is well-defined for all $i\in U'$. If $k\geq 1$ copies of item $i$ are sold, the total prices of the sold copies of $i$ is at least $p^{k}(i)=\tilde p(i)\cdot m^{-1/c}>(\tilde p(i)-\frac{v}{2m})m^{-1/c}$; If $k=0$ copies of item $i$ are sold,
it also holds that $(\tilde p(i)-\frac{v}{2m})m^{-1/c}=0$. Since each buyer has a non-negative utility, the total welfare is at least $\sum_{i\in U'}(\tilde p(i)-\frac{v}{2m})m^{-1/c}\geq m^{-1/c}\sum_{i\in U'}\tilde p(i)-\frac{v}{2}m^{-1/c}$.

Let $\tilde\qset\subseteq\qset$ be the demand sets of $\qset$ that are sold. If $\sum_{j:Q_j\in \tilde\qset}v_j\ge \frac{v}{4}$, then the welfare of the pricing is already $\Omega(v\cdot m^{-1/c})$. Otherwise, $\sum_{j:Q_j\in \tilde\qset}v_j< \frac{v}{4}$, then $\sum_{j:Q_j\in \qset\setminus\tilde\qset}v_j>\frac{3v}{4}$. Since none of the buyers in $\qset\setminus\tilde\qset$ are sold, it means that the total prices of all items at the end of the selling process is at least $\frac{3v}{4}$, since $\sum_{i\in U'}\tilde p(i)
\ge
\sum_{Q\in \qset\setminus \tilde\qset}\tilde p(Q)
\ge
\sum_{j:Q_j\in \qset\setminus \tilde\qset}\tilde v_j
>\frac{3v}{4}$. The theorem follows since the total welfare is at least $m^{-1/c}\sum_{i\in U'}\tilde p(i)-\frac{v}{2}\cdot m^{-1/c}=\Omega(v\cdot m^{-1/c})$.
\end{proof}

\subsection{Proof of Theorem~\ref{thm: congestion_c lower bound}}

\begin{proof}
Let $r$ be the number such that $m=\binom{r}{c+1}$, so $r=O(c\cdot m^{1/(c+1)})$.
Denote $I=\set{1,\ldots,r}$ and let $U=\set{u_J\mid J\subseteq I, |J|=c+1}$, namely $U$ contains $\binom{r}{c+1}=m$ elements, where each element is indexed by a size-$(c+1)$ subset $J$ of $I$.
We define the buyer profile as follows.
There are $(c+1)r+1$ buyers: buyer $B_0$, and, for each $1\le k\le c+1, 1\le j\le r$, a buyer named $B^k_j$.
The demand set for buyer $B_0$ is $S_0=U$, and her value is $|U|=\binom{r}{c+1}$.
For each $1\le j\le r$ and for each $1\le k\le c+1$, the demand set of buyer $B^k_j$ is $Q^k_j=\set{u_J\mid j\in J}$, and her value is $v(S^k_j)=|S^k_j|=\binom{r-1}{c}$.
Clearly optimal allocation when each item has supply 1 is to assign all elements of $U$ to $B_0$, and the optimal welfare is $\opt(U,\fset)=|U|=\binom{r}{c+1}$.
Moreover, it is easy to verify that any $c+1$ demand sets of $\set{Q^k_j}_{1\le k\le c+1,1\le j\le r}$ shares at least one element of $U$.

We claim that any set of prices can achieve worst-case welfare at most $c\cdot\binom{r-1}{c}$.
Note that this finishes the proof of Theorem~\ref{thm: congestion_c lower bound}, since $$\frac{c\cdot\binom{r-1}{c}}{\opt(U,\fset)}=\frac{c\cdot\binom{r-1}{c}}{\binom{r}{c+1}}=\frac{c\cdot\binom{r-1}{c}}{\frac{r}{c+1}\cdot\binom{r-1}{c}}=\frac{c(c+1)}{r}=O\left(\frac{c+1}{m^{1/(c+1)}}\right).$$
It remains to prove the claim.

Let $\set{p^k(u)\mid u\in U, 1\le k\le c}$ be any set of prices.
We will iteratively construct an order $\sigma$ on buyers, such that if the buyers come to the auction according to this order, the achieved welfare is at most $c\cdot\binom{r-1}{c}$.
Initially. $\sigma$ is an empty sequence.
Throughout, we maintain a set $\set{\tilde p(u)}_{u\in U}$ of prices, such that at any time, $\tilde p(u)$ is the price of the cheapest available copy of item $u$. Initially, $\tilde p(u)=p^1(u)$ for all $u\in U$.

We perform a total of $c$ iterations, and now we fix some $1\le k\le c$ and describe the $k$-th iteration. We first check whether or not there is a set $Q_j^k$ with $\tilde p(Q_j^k)\le v(Q_j^k)$. If so, assume $\tilde p(Q_{j_k}^k)\le v(Q^k_{j_k})$, then we add the buyer $B^k_{j_k}$ to the end of the current sequence $\sigma$, update the price $\tilde p(u)$ for all items $u\in Q^k_{j_k}$ to their next price in $\set{p^k(u)\mid 1\le k\le c}$. And then continue to the next iteration. Otherwise, since every element of $U$ appears in exactly $c+1$ sets of $Q^k_1,\ldots,Q^k_{r}$, and $\sum_{1\le j\le r}v(Q^k_j)=r\binom{r-1}{c}=(c+1)\binom{r}{c+1}=(c+1)\cdot |U|$, we get that $\tilde p(U)> |U|$.
Therefore, no buyer can afford her demand set, and the welfare will be $0$. In this case, we add all buyers to the end of $\sigma$ and terminate the algorithm.

We now analyze the algorithm. If the algorithm is terminated before it completes $c$ iterations, then from the construction above, the buyer $B_0$ will not get her demand set. Also, since any $c+1$ other demand sets share an element of $U$, at most $c$ other buyers may get their demand sets.  Therefore, the welfare is at most $c\cdot\binom{r-1}{c}$.
Assume that the algorithm successfully completes $c$ iterations. From the description of the algorithm, in each iteration, some buyer from $\set{B^k_j\mid 1\le k\le c+1,1\le j\le r}$ will be added to the sequence, and moreover, this buyer will get her demand set under order $\sigma$.
Therefore, after $c$ iterations, we added $c$ distinct buyers to the sequence that will get their demand sets.
Since any $c+1$ demand sets shares an element of $U$, we know that no other buyer may get her demand set anymore, so the welfare is at most $c\cdot\binom{r-1}{c}$.
\end{proof}

\subsection{Proof of Theorem~\ref{thm: lower_bound_congestion}}

\begin{proof}
We will use the graph $H_{a,b}$ constructed in \Cref{thm:lb-mahua}. For convenience, we will work with the multi-graph $L_{a,b}$.
The parameters $a,b$ are set such that $b=a+2(c+1)^{c+2}a^{c+2}$ and $m=ab$, so $a=\Theta(m^{1/(c+3)})$.

We now define the buyer profile. Recall that in \Cref{thm:lb-mahua} the buyer profile $\fset=\fset^*\cup(\bigcup_{S\subseteq [a], |S|\ge \sqrt{a}}\fset_S)$, where $\fset^*$ contains, for each $r\in [a]$, a buyer $B^*_r$ demanding the path $Q^{(r)}=(r,r,\ldots,r)$ with value $1$.
The buyer profile $\hat\fset$ that we will use in this subsection is similar to $\fset$. 
Specifically, we will keep the buyers in $\fset^*$, but will also additionally construct, for each set $S\subseteq [a]$ such that $|S|\ge \sqrt{ca}$ (instead of $|S|\ge \sqrt{a}$), a set $\hat\fset_S$ of buyers, whose demand paths and values satisfy the following properties.

\begin{enumerate}
	\item \label{prop'1}For each $S$, set $\hat\fset_S$ contains $(c+1)|S|$ buyers, and every pair $Q,Q'$ of demand paths in $\hat\fset_S$ share some edge, and the value for each demand path is $1+\epsilon$.
	\item \label{prop'3}For each demand path $Q$ in $\hat\fset_S$, the index sequence $(j^Q_1,\ldots,j^Q_b)$ that  $Q$ corresponds to satisfies that (i) $j^Q_i\in S$ for each $i\in [b]$; and (ii) the set $\set{j^Q_1,\ldots,j^Q_b}$ contains all element of $S$.
	\item \label{prop'4}The union of all demand paths in $\hat\fset_S$ covers the graph $\bigcup_{r\in S}Q^{(r)}$ exactly $c+1$ times. In other words, for each $i\in [b]$, the multi-set $\set{j^Q_i\mid Q\in \hat\fset_S}$ contains each element of $S$ exactly twice.
	\item \label{prop'5}For any $c+1$ subsets $S_1,\ldots,S_{c+1}$ of $[a]$, such that $|S_t|\ge \sqrt{ca}$ for each $t\in [c\!+\!1]$ and $\bigcap_{t}S_t\ne \emptyset$, for any $c+1$ demand paths $Q_1,\ldots,Q_{c+1}$ such that $Q_t\in \hat\fset_{S_t}$ for each $t\in [c\!+\!1]$, $\bigcap_{t}E(Q_t)\ne \emptyset$.
\end{enumerate}

Suppose that we have successfully constructed the sets $\set{\hat\fset_S}_{S\subseteq [a], |S|\ge \sqrt{ca}}$ that satisfy the above properties.
We then let $\fset$ be the union of $\fset^*$ and, for each set $S\subseteq [a], |S|\ge \sqrt{ca}$, $c$ distinct copies of set $\hat\fset_S$. In other words, for each buyer in $\set{\hat\fset_S}_{S\subseteq [a], |S|\ge \sqrt{ca}}$, we duplicate $c$ buyers and add all of them into our buyer profile.
This completes the description of $\hat\fset$.
From the above properties, it is easy to see that $\opt(L_{a,b},\hat\fset)=a$, which is achieved by giving each buyer in $\fset^*$ her demand path.
We will prove that any prices on edges of $L_{a,b}$ may achieve worst-case welfare $O(\sqrt{ca})$. Since $a=\Theta(m^{1/(c+3)})$, $\sqrt{a}=\Theta(m^{1/(2c+6)})$, which completes the proof of \Cref{thm: lower_bound_congestion}.

Consider now any set $\set{p^k(e)\mid e\in E(L_{a,b}), k\in [c]}$ of prices on edges of $L_{a,b}$.
From Properties \ref{prop'3} and \ref{prop'5}, it is easy to see that at most $\sqrt{ca}$ buyers from $\hat{\fset}\setminus \fset^*$ can get their demand paths simultaneously. We consider the following arrival order of the buyers in $\hat\fset$. The buyers in $\hat\fset$ are divided into $(c+1)$ groups, where each of the first $c$ groups contains a copy of each buyer from $(\bigcup_{S\subseteq [a], |S|\ge \sqrt{a}}\fset_S)$, and the last group contains all buyers of $\fset^*$. The buyers come to the auction according to their group index: All buyers from the first group come first (buyers within the same group come at an arbitrary order), and then all buyers from the second group come, etc.

Assume that we pause the selling process right after all buyers from the first $ca$ groups have come.
Currently for each edge $e\in E(L_{a,b})$, some copies of it were taken and there is a price $p^*(e)$ on its next copy.
We distinguish between the following two cases.

\paragraph{Case 1. At least $\sqrt{ca}$ buyers in $\fset^*$ can afford their demand paths at prices $\set{p^*(e)}_{e\in E(L_{a,b})}$.}
We let $S$ be the set that contains all indices $r\in [a]$ such that the buyer $B^*_r$ can afford her demand path $Q^{(r)}$ at prices $\set{p^*(e)}_{e\in E(L_{a,b})}$, so $|S|\ge \sqrt{ca}$. Similar to \Cref{lem:ratio-Lab}, it is easy to show that at least one buyer in $\hat\fset_S$ can afford her demand path, and therefore all $c$ copies of this buyer in $\hat\fset$ will get their demand paths. This implies that $p^*(Q^{(r)})=+\infty$ for all $r\in S$, contradicting with the assumption in this case.

\paragraph{Case 2. At most $\sqrt{ca}$ buyers in $\fset^*$ can afford their demand paths at prices $\set{p^*(e)}_{e\in E(L_{a,b})}$.}

Similar to Lemma~\ref{lem:ratio-Lab}, the optimal welfare is at most $(2+\epsilon)\sqrt{ca}$.

$\ $

It remains to construct the sets $\set{\hat\fset_S}_{S\subseteq [a], |S|\ge \sqrt{ca}}$ that satisfy the required properties. The construction is almost identical to that of \Cref{thm:lb-mahua}. The only difference is that we need to construct a $(c+1)s\times b$ matrix $M'_S$, instead of a $2s\times b$ matrix. Accordingly, we first place $(c+1)$ copies of matrix $N_S$ vertically as the first $s$ columns of $M'_S$, and then for the next $b-s$ columns, we let each column to be each column be an independent random permutation on elements of the multiset that contains, for each element of $S$, $(c+1)$ copies of it. It is easy to verify that, with $b=a+2(c+1)^{c+2}a^{c+2}$, all the desired properties are satisfied with high probability.

\end{proof}

\end{document}